\tikzset{node distance=2cm, auto}
\theoremstyle{plain} 
\newtheorem{theorem}{Theorem}
\newtheorem{proposition}[theorem]{Proposition}
\newtheorem{observation}[theorem]{Observation}
\newtheorem{lemma}[theorem]{Lemma}
\newtheorem{corollary}[theorem]{Corollary}
\newtheorem{question}[theorem]{Question}
\theoremstyle{definition}
\newtheorem{definition}[theorem]{Definition}
\newtheorem{example}[theorem]{Example}
\newtheorem{remark}[theorem]{Remark}
\def\phi{\varphi}
\newcommand{\calA}{\mathcal{A}}
\newcommand{\calB}{\mathcal{B}}
\newcommand{\calC}{\mathcal{C}}
\newcommand{\calD}{\mathcal{D}}
\newcommand{\calI}{\mathcal{I}}
\newcommand{\calM}{\mathcal{M}}
\newcommand{\calS}{\mathcal{S}}
\newcommand{\myvec}[1]{\overline{#1}}
\newcommand{\Land}{\bigwedge}
\newcommand{\impl}{\rightarrow}
\newcommand{\liff}{\leftrightarrow}
\newcommand{\CNF}{\mathrm{CNF}}
\newcommand{\mCNF}{\mathrm{mCNF}}
\newcommand{\seq}{\Rightarrow}
\newcommand{\Mseq}[5][{}]{{#2} ; {#3} \overset{#1}{\seq} {#4} ; {#5}} 
\newcommand{\unsubst}[2]{[{#1}\backslash{#2}]} 
\newcommand{\LK}{\ensuremath{\mathbf{LK}}}
\newcommand{\LKminus}{\ensuremath{\LK^-}}
\newcommand{\LKat}{\ensuremath{\LK^\mathrm{at}}}
\newcommand{\LKlit}{\ensuremath{\LK^\mathrm{lit}}}
\newcommand{\LKm}{\ensuremath{\LK^\mathrm{m}}} 
\newcommand{\subs}{\leq_\mathrm{ss}} 
\newcommand{\weight}{\mathrm{w}}
\newcommand{\degree}{\mathrm{d}}
\definecolor{fxnote}{rgb}{0.8000,0.0000,0.0000}
\title{On the Completeness of Interpolation Algorithms}
\author{Stefan Hetzl \and Raheleh Jalali\footnote{Supported by Czech Science Foundation Grant No. 22-06414L.}}
\begin{document}

\maketitle
\begin{abstract}
Craig interpolation is a fundamental property of classical and non-classic logics
with a plethora of applications from philosophical logic to computer-aided verification. The question of which interpolants can be obtained from an interpolation algorithm is of profound importance. Motivated by this question, we initiate the study of completeness properties of interpolation algorithms. An interpolation algorithm $\calI$ is \emph{complete} if, for every semantically possible interpolant $C$ of an implication $A \impl B$, there is a proof $P$ of $A \impl B$ such that $C$ is logically equivalent to $\calI(P)$. We establish incompleteness and different kinds of completeness results for several standard algorithms for resolution and the sequent calculus for propositional, modal, and first-order logic.
\end{abstract}




\textbf{keywords:} (Craig) interpolation, proof theory, sequent calculus, resolution

\maketitle

\section{Introduction}


Interpolation is one of the most fundamental properties of classical and
 many non-classical logics.
Proved first by Craig in the 1950ies~\cite{Craig57Three}, the interpolation
 theorem has remained central to logic ever since and continues to find new 
 applications, see, e.g., \cite{Mancosu08Introduction} to  catch a glimpse
 of the remarkable breadth of this seminal result.
It has widespread applications throughout all areas of logic in philosophy, 
 mathematics, and computer science.
For example, the interpolation theorem has strong connections to the
 foundations of mathematics through its use for proving Beth's definability
 theorem~\cite{Beth53Padoa}.
It underlies the technique of feasible interpolation for proving lower bounds
 in proof complexity~\cite{Krajicek94Lower}.
It has also become a versatile and indispensable tool in
 theory reasoning~\cite{nelson,tinelli}, model checking~\cite{McMillan18Interpolation}, and knowledge representation
 \cite{amirpartition}.

A question of profound importance, from a theoretical as well as a practical
 point of view, is that of the expressive power of interpolation algorithms:
Given a particular interpolation algorithm, which interpolants can it compute?
This question is relevant for all applications of interpolation and it is
 gaining importance with the increasing number of applications.
In particular, it has been in the focus of attention in the CAV community
 where a lot of work, see, e.g.~\cite{Jhala05Interpolant,Jhala07Interpolant,
 Jhala06Practical,DSilva10Interpolant,Weissenbacher12Interpolant,
 Kovacs09Interpolation,Hoder12Playing}, has been devoted to 
 developing algorithms that compute ``good" interpolants, i.e., such interpolants that have properties, e.g., in terms of logical strength, that make them
 well suited for the application at hand.
However, from a theoretical point of view, this question has not been
 investigated to a satisfactory degree yet.

In this paper, we set out to gauge the expressive power of different interpolation
 algorithms by carrying out a thorough theoretical analysis.
To this aim we initiate the study of completeness properties of interpolation
 algorithms.
An interpolation algorithm outputs an interpolant of an implication $A\impl B$
 when given a proof of $A\impl B$ (in a suitable proof system).
We say that an interpolation algorithm $\calI$ is \emph{complete} if, for every 
 semantically possible interpolant $C$ of an implication $A \impl B$, there is a 
 proof $P$ of $A \impl B$ such that $\calI(P)$ is logically equivalent to $C$.
%
%
The practical relevance of a completeness result is that it provides
 a guarantee that, at least in principle, the algorithm allows to find
 the ``good" interpolants, whatever that may mean in the concrete application under
 consideration.


In this paper we study the completeness of the standard interpolations algorithms for the
 sequent calculus and resolution, two of the most important proof
 systems in mathematical logic and computer science.
We establish the following results: after introducing the
 necessary notions and definitions in  Section~\ref{sec.prelim} we will show
 in Section~\ref{sec.simple_incomp} that the standard interpolation algorithms for resolution and cut-free sequent calculus for classical propositional logic are incomplete.
On the other hand, if we weaken the completeness property, it is possible to
 obtain a positive result for the cut-free sequent calculus:
in Section~\ref{sec.comp_prun_subs} we will show that the standard interpolation
 algorithm for cut-free sequent calculus is complete up to subsumption for
 pruned interpolants.
This result requires a subtle argument that traces the
 development of interpolants during cut-elimination.
Furthermore, in Section~\ref{sec.LKat} we show that in the sequent calculus 
 with cut, already with atomic cut, the standard interpolation algorithm is 
 complete.
We then leave the realm of propositional logic to extend our results
 to normal modal logics in Section~\ref{sec.modal} and to classical
 first-order logic in Section~\ref{sec.fol}.
While the results for propositional logic carry over to first-order logic,
 we also find a new source of incompleteness on the first-order level which
 implies that standard algorithms for first-order interpolation are
 incomplete, even in the sequent calculus with atomic cuts.
Last, but not least, in Section~\ref{sec.Beth} we show that completness properties
 of interpolation algorithms translate directly to completeness properties
 of Beth's definability theorem.

\section{Preliminaries}\label{sec.prelim}

\subsection{Formulas}

For the preliminaries, the reader may consult \cite{Buss, negri, Troelstra}.
Let the propositional language $\mathcal{L}_p= \{\bot, \wedge, \vee,  \neg\}$. \emph{Propositional variables} or \emph{atoms} are denoted by $p,q, \dots$, possibly with subscripts. \emph{Formulas} are denoted by $A, B, \dots$ and defined as usual. Define $A \to B := \neg A \vee B$ and $\top:= \bot \to \bot$ for any formulas $A$ and $B$ in the language $\mathcal{L}_p$. A \emph{literal} $\ell$ is either an atom or a negation of an atom. We consider $\bot$ as an atom and $\top$ as a literal. If $\ell=p$, by $\bar{\ell}$ we mean $\neg p$ and if $\ell=\neg p$, by $\bar{\ell}$ we mean $p$. 
A \emph{clause} $C$ is a finite disjunction of literals $C=\ell_1 \vee \dots \vee \ell_n$, sometimes simply written as $C=\{\ell_1, \dots , \ell_n\}$. The clause $C \cup \{\ell\}$ is often abbreviated by $C, \ell$. We assume two clauses are equal if they have the same set of literals.
A \emph{positive} literal is an unnegated atom and a \emph{negative} literal is a negated atom. 
By convention, $\wedge \emptyset := \top$ and $\vee \emptyset := \bot$.
For a formula $A$, the set of its \emph{variables}, denoted by $V(A)$, is defined recursively; $V(\bot)=\emptyset$; $V(p)=p$ for any atom $p$; $V(\neg A)=V(A)$, and $V(A \circ B) = V(A) \cup V(B)$ for $\circ \in \{\wedge, \vee\}$. For a multiset $\Gamma$ define $V(\Gamma) = \bigcup_{\gamma \in \Gamma} V(\gamma)$.

\begin{definition}\label{Dfn: CIP logic}
A logic $L$ has the \emph{Craig Interpolation Property} (\emph{CIP}) if for any formulas $A$ and $B$ if $A \to B \in L$ then there exists a formula $C$ such that $V(C) \subseteq V(A) \cap V(B)$ and $A \to C \in L$ and $C \to B \in L$. 
\end{definition}

\subsection{Resolution}

Propositional \emph{resolution}, $R$, is one of the weakest proof systems. However, it is very important in artificial intelligence as it is useful in automated theorem proving and SAT solving. Resolution operates on clauses. 

By a \emph{clause set} we mean a set $\mathcal{C} =\{C_1, \dots , C_n\}$ of clauses $C_i =\{\ell_{i1}, \dots, \ell_{ik_i}\}$ and the \emph{formula interpretation} of $\calC$ is $\bigwedge_{i=1}^n \bigvee_{j=1}^{k_i} \ell_{ij}$.
We say a clause set is \emph{logically equivalent} to a formula $\phi$ when its formula interpretation is logically equivalent to $\phi$.
A formula is in \emph{conjunctive normal form} if it is a conjunction of disjunction of literals, i.e., has the form $\bigwedge_{i=1}^{n} \bigvee_{j=1}^{k_i} \ell_{ij}$ for some $1 \leq n$ and $1 \leq k_i$ for each $1 \leq i \leq n$, where $\ell_{ij}$'s are literals. Equivalently, we may use the clause set $\calC=\{C_i \mid 1\leq i \leq n \}$, where each $C_i= \ell_{i1} \vee \dots \vee \ell_{ik_i}$. We use these two formats interchangeably for the conjunctive normal form of a formula.

A \emph{resolution proof}, also called a \emph{resolution refutation}, shows the unsatisfiability of a set of initial clauses by starting
with these clauses and deriving new clauses by the \emph{resolution rule}
\begin{prooftree}
 \AxiomC{$C \cup \{p\}$} 
  \AxiomC{$D \cup \{\neg p\}$} 
  \BinaryInfC{$ C \cup D$}
\end{prooftree}
until the empty clause is derived, where $C$ and $D$ are clauses. This rule is obviously sound, i.e., if a truth assignment satisfies both the premises, then the conclusion is also satisfied by the same assignment. Given a set of clauses, it is possible to use the resolution rule and derive new clauses. Specifically, we may derive the empty clause, denoted by $\bot$. Thus, we can interpret resolution as a refutation system; instead of proving a formula $A$ is true we prove that $\neg A$ is unsatisfiable. We write $\neg A$ as a conjunction of disjunction of literals and take each conjunct as a clause. Then, from these clauses, we derive the empty clause using the resolution rule only. It is also possible to add the \emph{weakening rule} to the resolution system:
\begin{prooftree}
\AxiomC{$C$}
\UnaryInfC{$C \cup D$}
\end{prooftree}
for arbitrary clauses $C$ and $D$. The new system is called \emph{resolution with weakening}.

\noindent \textbf{Interpolation algorithm for resolution \cite{krajicek,pudlak}:} Suppose a resolution proof $P$ of the empty clause from the clauses $A_i(\bar{p}, \bar{q})$ and $B_j(\bar{p}, \bar{r})$ is given, where $i \in I$, $j \in J$, and $\bar{p}, \bar{q}, \bar{r}$ are disjoint sets of atoms. The only atoms common between the two sets of clauses are $\bar{p}$. Define a ternary connective $sel$ as $sel(\bot,x,y)=x$ and $sel(\top, x, y)=y$. For instance, by definition, we have $sel(A, \bot, \top)=A$, $sel(A, \top, \bot)= \neg A$, and $sel (A, x, y)= (\neg A \to x) \wedge (A \to y)= (A \vee x) \wedge (\neg A \vee y)$, for formulas $A$, $x$, and $y$.
The interpolation algorithm operates as follows. Assign the constant $\bot$ to clauses $A_i$ for each $i \in I$ and assign $\top$ to clauses $B_j$ for $j \in J$. Then:
 
 (1) Suppose the resolution rule is of the form
\begin{prooftree}
    \AxiomC{$\Gamma, p_k$}
    \AxiomC{$\Delta, \neg p_k$}
    \BinaryInfC{$\Gamma, \Delta$}
\end{prooftree}
where $p_k \in \bar{p}$. If we have assigned $x$ to the premise $\Gamma, p_k$ and $y$ to $\Delta, \neg p_k$ then we assign $z= sel(p_k,x,y)$ to the conclusion $\Gamma, \Delta$.

 (2) Suppose the resolution rule is of the form
\begin{prooftree}
    \AxiomC{$\Gamma, q_k$}
    \AxiomC{$\Delta, \neg q_k$}
    \BinaryInfC{$\Gamma, \Delta$}
\end{prooftree}
where $q_k \in \bar{q}$. If we have assigned $x$ to the premise $\Gamma, q_k$ and $y$ to $\Delta, \neg q_k$, then we will assign $x \vee y$ to the conclusion $\Gamma, \Delta$.

 (3)  Suppose the resolution rule is of the form
\begin{prooftree}
    \AxiomC{$\Gamma, r_k$}
    \AxiomC{$\Delta, \neg r_k$}
    \BinaryInfC{$\Gamma, \Delta$}
\end{prooftree}
where $r_k \in \bar{r}$. If we have assigned $x$ to the premise $\Gamma, r_k$ and $y$ to $\Delta, \neg r_k$, then we will assign $x \wedge y$ to the conclusion $\Gamma, \Delta$.

This algorithm outputs an interpolant of the valid formula $A \to B$, given a resolution refutation of the unsatisfiable formula $A \wedge \neg B$. 

The interpolation algorithm for resolution with weakening is defined as above except that when the weakening rule is used
\begin{prooftree}
\AxiomC{$\Gamma$}
\UnaryInfC{$\Gamma, \Delta$}
\end{prooftree}
and we have assigned $x$ to the premise $\Gamma$, we will also assign $x$ to the conclusion $\Gamma, \Delta$.

\begin{theorem}\cite{krajicek,pudlak}
Let $\pi$ be a resolution refutation of the set of clauses $\{A_i(\bar{p}, \bar{q}) \mid i \in I\}$ and $\{B_j(\bar{p}, \bar{q}) \mid j \in J\}$. Then, the interpolation algorithm outputs an interpolant for the valid formula $\bigwedge_{i \in I} A_i(\bar{p}, \bar{q}) \to \bigvee_{j \in J} \neg B_j(\bar{p}, \bar{q})$.
\end{theorem}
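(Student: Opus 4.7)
The plan is to proceed by induction on the structure of the resolution refutation $\pi$, maintaining suitable invariants for every derived clause. For a clause $C$ occurring in $\pi$, let $I_C(\bar{p})$ be the formula assigned to $C$ by the algorithm; by construction $V(I_C) \subseteq \bar{p}$. I will split the literals of $C$ into two (overlapping) parts: let $C_L$ be the disjunction of those literals of $C$ whose variable lies in $\bar{p} \cup \bar{q}$, and $C_R$ the disjunction of those whose variable lies in $\bar{p} \cup \bar{r}$, so that literals over $\bar{p}$ contribute to both. The two invariants to establish for every $C$ in $\pi$ are
\begin{equation*}
\bigwedge_{i \in I} A_i \wedge \neg I_C \models C_L \qquad \text{and} \qquad \bigwedge_{j \in J} B_j \wedge I_C \models C_R.
\end{equation*}
Applied to the empty clause, where $C_L = C_R = \bot$, these immediately yield $\bigwedge_i A_i \models I_\emptyset$ and $I_\emptyset \models \bigvee_j \neg B_j$, so $I_\emptyset$ is an interpolant of the required form.

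The base cases are immediate: for an initial $A_i$ we have $I_{A_i} = \bot$, and since $V(A_i) \subseteq \bar{p} \cup \bar{q}$ the left invariant reduces to the tautology $\bigwedge_i A_i \models A_i$, while the right invariant holds vacuously because $\bigwedge_j B_j \wedge \bot$ is unsatisfiable; the case of an initial $B_j$ is dual. For resolution on a variable $q_k \in \bar{q}$ producing interpolant $x \vee y$, note that $q_k$-literals belong only to $C_L$; the induction hypothesis supplies $C_L \vee q_k$ under $\neg x$ and $D_L \vee \neg q_k$ under $\neg y$, which combine to give $(C \cup D)_L$ under $\neg(x \vee y)$. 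The corresponding right invariant is easier because $q_k$ does not occur in $C_R$ or $D_R$. Resolution on $r_k \in \bar{r}$ with interpolant $x \wedge y$ is symmetric, with the roles of $A$ and $B$ interchanged.

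The interesting case is resolution on a common variable $p_k \in \bar{p}$ with interpolant $\mathrm{sel}(p_k, x, y) = (p_k \vee x) \wedge (\neg p_k \vee y)$, because here a $p_k$-literal contributes to both $C_L$ and $C_R$ in the premises but vanishes in the conclusion. I would verify the invariants by case analysis on $p_k$: on an assignment falsifying $I_{C \cup D}$ with $p_k = 0$, the formula $\mathrm{sel}$ reduces to $x$, so the left induction hypothesis for the premise $(C, p_k)$ yields $C_L \vee p_k = C_L$; on one with $p_k = 1$, $\mathrm{sel}$ reduces to $y$ and the hypothesis for $(D, \neg p_k)$ yields $D_L$. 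A dual case analysis on assignments satisfying $I_{C \cup D}$ establishes the right invariant. This is the main technical point: the definition of $\mathrm{sel}$ is designed precisely so that both halves of the case split work simultaneously, encoding a choice between the two premises that depends on the common atom, which the purely $\vee$- or $\wedge$-based rules for $\bar{q}$ and $\bar{r}$ do not need.

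The main obstacle is therefore keeping track of the overlapping split $C = C_L \cup C_R$ and matching it against the $\mathrm{sel}$-construction in the common-variable case. The extension to resolution with weakening is then routine, since weakening only enlarges $C$, hence $C_L$ and $C_R$ grow weakly, while the unchanged $I_C$ still satisfies both invariants.
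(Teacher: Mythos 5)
Your proof is correct and is exactly the standard Krajíček--Pudlák argument that the paper cites without reproving: the two invariants $\bigwedge_{i} A_i \wedge \neg I_C \models C_L$ and $\bigwedge_{j} B_j \wedge I_C \models C_R$ with the overlapping split over the shared atoms $\bar{p}$, discharged in the $\mathrm{sel}$ case by splitting on the truth value of $p_k$. (One nit: the theorem as stated writes $B_j(\bar{p},\bar{q})$, a typo for $B_j(\bar{p},\bar{r})$, and you correctly work with $\bar{r}$.)
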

In refutational theorem provers, interpolation is also often formulated in terms
 of reverse interpolants.
A {\em reverse interpolant} of an unsatisfiable formula $A\land B$ is a formula
 $C$ with $V(C) \subseteq V(A)\cap V(B)$ such that $A \models C$ and $B\models \neg C$.
Note that $C$ is a reverse interpolant of $A\land B$ iff it is an interpolant
 of $A\impl \neg B$.

\begin{table}[t]
 \centering
\begin{tabular}{c c}
\vspace{5pt} 

$A \Rightarrow A$ \; \small{(Ax)} & $\bot \Rightarrow $\; \small{($\bot$)}\\

\vspace{5pt}

\AxiomC{$\Gamma \Rightarrow \Delta$}
 \RightLabel{\small{$(L w)$} }
\UnaryInfC{$A, \Gamma \Rightarrow \Delta$}
\DisplayProof
&  
\AxiomC{$\Gamma \Rightarrow \Delta$}
 \RightLabel{\small{$(R w)$} }
\UnaryInfC{$\Gamma \Rightarrow \Delta, A$}
 \DisplayProof\\

\vspace{5pt}

\AxiomC{$A, A, \Gamma \Rightarrow \Delta$}
 \RightLabel{\small{$(L c)$} }
\UnaryInfC{$A, \Gamma \Rightarrow \Delta$}
\DisplayProof
&  
\AxiomC{$\Gamma \Rightarrow \Delta, A, A$}
 \RightLabel{\small{$(R c)$} }
\UnaryInfC{$\Gamma \Rightarrow \Delta, A$}
 \DisplayProof\\
 
\vspace{5pt}

\AxiomC{$A,\Gamma \Rightarrow \Delta$}
 \RightLabel{\small{$(L \wedge_1)$} }
\UnaryInfC{$A \wedge B, \Gamma \Rightarrow \Delta$}
\DisplayProof
&
\AxiomC{$B,\Gamma \Rightarrow \Delta$}
 \RightLabel{\small{$(L \wedge_2)$} }
\UnaryInfC{$A \wedge B, \Gamma \Rightarrow \Delta$}
\DisplayProof\\
 
\vspace{5pt}
\AxiomC{$\Gamma \Rightarrow \Delta, A$}
 \AxiomC{$\Gamma \Rightarrow \Delta, B$}
 \RightLabel{\small{$(R \wedge)$} }
\BinaryInfC{$\Gamma \Rightarrow \Delta, A \wedge B$}
 \DisplayProof
 &
\AxiomC{$\Gamma \Rightarrow \Delta, A$}
 \RightLabel{\small{$(R \vee_1)$} }
 \UnaryInfC{$\Gamma \Rightarrow \Delta, A \vee B$}
 \DisplayProof\\

\vspace{5pt}

\AxiomC{$\Gamma \Rightarrow \Delta,  B$}
 \RightLabel{\small{$(R \vee_2)$} }
 \UnaryInfC{$\Gamma \Rightarrow \Delta, A \vee B$}
 \DisplayProof
 &
 \AxiomC{$A, \Gamma \Rightarrow \Delta$}
 \AxiomC{$B, \Gamma \Rightarrow \Delta$}
 \RightLabel{\small{$(L \vee)$} }
\BinaryInfC{$A \vee B, \Gamma \Rightarrow \Delta$}
 \DisplayProof \\


 \vspace{5pt}

\AxiomC{$\Gamma \Rightarrow \Delta, A$}
 \RightLabel{\small{$(L \neg)$} }
\UnaryInfC{$\neg A , \Gamma \Rightarrow \Delta$}
\DisplayProof &
\AxiomC{$\Gamma, A \Rightarrow \Delta$}
 \RightLabel{\small{$(R \neg)$} }
\UnaryInfC{$ \Gamma \Rightarrow \Delta, \neg A$}
\DisplayProof \\
\AxiomC{$\Gamma \Rightarrow \Delta, A$}
 \AxiomC{$A, \Gamma \Rightarrow \Delta$}
 \RightLabel{\small{$(cut)$} }
\BinaryInfC{$\Gamma \Rightarrow \Delta $} 
\DisplayProof 

    \end{tabular}
    \caption{Propositional $\LK$. 
    The formula $A$ in the cut rule is called the \emph{cut formula}.} 
    \label{table: LK}
\end{table}

\subsection{Sequent calculus}\label{ssec.seq_calc}

A \emph{sequent} is an expression of the form $\Gamma \Rightarrow \Delta$ where $\Gamma$, called the \emph{antecedent}, and $\Delta$, called the \emph{succedent}, are multisets of formulas and the \emph{formula interpretation} of the sequent is $\bigwedge \Gamma \to \bigvee \Delta$. 
In the propositional sequent calculus $\LK$ each proof is represented as a tree. 
The nodes correspond to sequents and the root of the proof tree, at the bottom, is the \emph{end-sequent} and it is the sequent proved by the proof. The topmost nodes of the tree, the leaves, are called the \emph{initial sequents} or \emph{axioms}. Apart from the axioms, each sequent in an $\LK$ proof must be inferred from one of the inference rules provided in Table \ref{table: LK}. An \emph{inference rule} is an expression of the form $\frac{S_1}{S}$ or $\frac{S_1 \; S_2}{S}$ indicating that the conclusion $S$ is inferred from the premise $S_1$ or the premises $S_1$ and $S_2$. In Table \ref{table: LK}, rules are schematic, $A$ and $B$ denote arbitrary formulas, and $\Gamma$ and $\Delta$ denote arbitrary multisets of formulas. The first row in Table \ref{table: LK} are axioms. The weakening and contraction rules and the cut rule are \emph{structural}   inference rules and the rest \emph{logical}. 
By \emph{length} of a formula $A$, denoted by $|A|$, we mean the number of symbols in it. Similarly, we define the length of a multiset, sequent, and a proof. The \emph{size} of a tree is the number of nodes in it. The \emph{depth} or \emph{height} of a proof tree is the maximum length of the branches in the tree, where the \emph{length} of a branch is the number of nodes in the branch minus one.

We will consider the $G1$ systems for our logics, as presented
 in~\cite[Section 3.1]{Troelstra}. 
However, the results of this paper do not depend significantly on the concrete
 version of the sequent calculus.
The system presented in Table \ref{table: LK} is the propositional $\LK$. In each rule, the upper sequent(s) is (are) \emph{premise(s)}, and the lower sequent is the \emph{conclusion}.
$\Gamma$ and $\Delta$ are called the \emph{context}. The formula in the conclusion of the rule not belonging to the contexts is called the \emph{main} formula. The formulas in the premises not in the contexts are called the \emph{active} or \emph{auxiliary} formulas. 
The cut rule is called \emph{atomic} when the cut formula is an atom or $\bot$ or $\top$. If we take $\Gamma=\Gamma_1 \cup \Gamma_2$ and $\Delta_1 \cup \Delta_2$, then the cut rule is called \emph{monochromatic} if either $V(A) \subseteq V(\Gamma_1) \cup V(\Delta_1)$ or $V(A) \subseteq V(\Gamma_2) \cup V(\Delta_2)$. A cut rule is \emph{analytic} if the cut formula $A$ is a subformula of a formula occurring in the conclusion. Clearly, any analytic cut is also monochromatic.
Denote $\LK$ with only atomic (resp. monochromatic) cuts by $\LKat$ (resp. $\LKm$), denote $\LK$ with cuts only on literals by $\LKlit$, and denote cut-free $\LK$ by $\LKminus$. 



\begin{remark}\label{Rem: monochromatic atomic cut}
Without loss of generality, we can consider each atomic cut in $\LKat$ or each cut on a literal in $\LKlit$ as monochromatic. Let us explain the case for $\LKat$, the case for $\LKlit$ is similar. If $\pi$ is a proof of $\Gamma \Rightarrow \Delta$ in $\LKat$, then there exists a proof of $\Gamma \Rightarrow \Delta$ in $\LKat$ where each cut in the proof is atomic and monochromatic. The sketch of the proof is as follows: We use induction on the number of non-monochromatic cuts in the proof $\pi$. Take the topmost non-monochromatic instance of the cut rule used in the proof. Since it is non-monochromatic, it means that the atomic cut formula does not appear in the conclusion of the cut rule. Choose an atom $q$ in the conclusion. Everywhere in this subproof replace the atomic cut formula by $q$. Continue until all the cuts become monochromatic. By this observation, from now on we assume any atom appearing in a proof $\pi$ of $\Gamma \Rightarrow \Delta$ in $\LKat$ is in $V(\Gamma) \cup V(\Delta)$.
\end{remark}
\noindent \textbf{Maehara interpolation algorithm for  $\LKat$, $\LKlit$ and $\LKm$:} Denote the Maehara interpolation algorithm by $\calM$. A \emph{split sequent} is an expression of the from $\Gamma_1;\Gamma_2 \Rightarrow \Delta_1;\Delta_2$ such that $\Gamma_1,\Gamma_2 \Rightarrow \Delta_1,\Delta_2$ is a sequent. The partition $\Gamma_1, \Gamma_2=\Gamma$ and $\Delta_1, \Delta_2=\Delta$ is called the \emph{Maehara partition} of the sequent $\Gamma \Rightarrow \Delta$. A formula $A$ is on the \emph{left-hand side} (resp. \emph{right-hand side}) of the Maehara partition if $A \in \Gamma_1 \cup \Delta_1$ (resp. $A \in \Gamma_2 \cup \Delta_2$). Let $G \in \{\LKat, \LKlit, \LKm\}$. We define the Maehara algorithm for $G$.
Suppose a valid sequent $\Gamma \Rightarrow \Delta$ is given. Let $\pi$ be a proof of the split sequent $\Gamma_1;\Gamma_2 \Rightarrow \Delta_1;\Delta_2$ in  $G$. We define the formula $\calM(\pi)=C$, called the \emph{interpolant}, recursively. Let us denote the interpolant $C$ for $\Gamma_1; \Gamma_2 \Rightarrow \Delta_1; \Delta_2$ by $\Gamma_1; \Gamma_2 \overset{C}{\Longrightarrow} \Delta_1; \Delta_2$.

$\bullet$ If $\pi$ is an axiom, it is either (Ax) or ($\bot$). Then, $\calM(\pi)$ is defined as follows based on the occurrence of the atom $p$ or $\bot$ in the partitions:
\begin{center}
\begin{tabular}{c c c}
 $p;  \overset{\bot}{\Longrightarrow}  p; $ & $; p \overset{\top}{\Longrightarrow} ; p$     &  $ \bot; \overset{\bot}{\Longrightarrow} ; $\\
 $ p;  \overset{p}{\Longrightarrow} ;  p$    & $;  p \overset{\neg p}{\Longrightarrow}  p; $ & $;  \bot \overset{\top}{\Longrightarrow} ; $
\end{tabular}
\end{center}

$\bullet$ If the last rule $\mathcal{R}$ used in $\pi$ is one of the one-premise rules, i.e., $(R \vee_1)$, $(R \vee_2)$, $(L \wedge_1)$, $(L \wedge_2)$, $(Lw)$, $(Rw)$, $(L c)$, $(R c)$, $(L \neg)$, or $(R \neg)$, then the interpolant of the premise also works as the interpolant for the conclusion, i.e., define $\calM(\pi)=\calM(\pi')$ where $\pi'$ is the proof of the premise of $\mathcal{R}$. 

$\bullet$ If the last rule in $\pi$ is $(R \wedge)$, there are two cases based on the occurrence of the main formula in the conclusion:
\begin{prooftree}
\AxiomC{$\Gamma_1; \Gamma_2 \overset{C}{\Longrightarrow} \Delta_1, A ; \Delta_2$}
 \AxiomC{$\Gamma_1; \Gamma_2 \overset{D}{\Longrightarrow} \Delta_1, B ; \Delta_2$}
 \BinaryInfC{$\Gamma_1; \Gamma_2 \overset{C \vee D}{\Longrightarrow} \Delta_1, A \wedge B ; \Delta_2$}    
\end{prooftree}
or 
\begin{prooftree}
 \AxiomC{$\Gamma_1; \Gamma_2 \overset{C}{\Longrightarrow} \Delta_1 ; A, \Delta_2$}
 \AxiomC{$\Gamma_1 ; \Gamma_2 \overset{D}{\Longrightarrow} \Delta_1 ; B, \Delta_2$}
 \BinaryInfC{$\Gamma_1; \Gamma_2 \overset{C \wedge D}{\Longrightarrow} \Delta_1 ; A \wedge B, \Delta_2$}    
\end{prooftree}

$\bullet$ If the last rule in $\pi$ is $(L \vee)$, then:
\begin{prooftree}
\AxiomC{$\Gamma_1, A;  \Gamma_2 \overset{C}{\Longrightarrow} \Delta_1 ; \Delta_2$}
 \AxiomC{$\Gamma_1, B; \Gamma_2 \overset{D}{\Longrightarrow} \Delta_1; \Delta_2$}
 \BinaryInfC{$\Gamma_1, A \vee B; \Gamma_2 \overset{C \vee D}{\Longrightarrow} \Delta_1; \Delta_2$}    
\end{prooftree}
or
\begin{prooftree}
  \AxiomC{$\Gamma_1; A, \Gamma_2 \overset{C}{\Longrightarrow} \Delta_1 ; \Delta_2$}
 \AxiomC{$\Gamma_1; B, \Gamma_2 \overset{D}{\Longrightarrow} \Delta_1; \Delta_2$}
 \BinaryInfC{$\Gamma_1; A \vee B,  \Gamma_2 \overset{C \wedge D}{\Longrightarrow} \Delta_1; \Delta_2$}   
\end{prooftree}
 
$\bullet$ Suppose the last rule in $\pi$ is an instance of a cut rule in $G$. Let $A$ be the cut formula. Denote $V_1= V(\Gamma_1 \cup \Delta_1)$ and $V_2= V(\Gamma_2 \cup \Delta_2)$. Then, either $A \in V_1$ or $A \in V_2$ or $A \in V_1 \cap V_2$.
If $A \in V_1$, define
\begin{prooftree}
 \AxiomC{$\Gamma_1; \Gamma_2 \overset{C}{\Longrightarrow} \Delta_1, A ; \Delta_2$}
 \AxiomC{$\Gamma_1, A; \Gamma_2 \overset{D}{\Longrightarrow} \Delta_1 ; \Delta_2$}
 \BinaryInfC{$\Gamma_1; \Gamma_2 \overset{C \vee D}{\Longrightarrow} \Delta_1 ; \Delta_2$}    
\end{prooftree}
If $A \in V_2$, define
\begin{prooftree}
 \AxiomC{$\Gamma_1; \Gamma_2 \overset{E}{\Longrightarrow} \Delta_1 ; A, \Delta_2$}
 \AxiomC{$\Gamma_1 ; \Gamma_2, A \overset{F}{\Longrightarrow} \Delta_1 ; \Delta_2$}
 \BinaryInfC{$\Gamma_1; \Gamma_2 \overset{E \wedge F}{\Longrightarrow} \Delta_1 ; \Delta_2$}    
\end{prooftree} 
And if $A$ is in both, choose either case.
 

\begin{restatable}{theorem}{ThmSoundnessMaehara}
\label{thm: soundness of Maehara}
Let $\pi$ be a proof of $A \to B$ in $G \in \{\LKat, \LKlit, \LKm\}$. Then, $\calM(\pi)$ outputs an interpolant of the formula $A \to B$ with $|\calM(\pi)| \leq |\pi|$.
\end{restatable}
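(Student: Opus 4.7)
The proof proceeds by induction on the height of $\pi$ after first strengthening the statement to split sequents. Specifically, I would establish that for every proof $\pi$ of a split sequent $\Gamma_1;\Gamma_2 \seq \Delta_1;\Delta_2$ in $G$, writing $V_1 = V(\Gamma_1 \cup \Delta_1)$ and $V_2 = V(\Gamma_2 \cup \Delta_2)$, the formula $C = \calM(\pi)$ satisfies: (i) $V(C) \subseteq V_1 \cap V_2$; (ii) both sequents $\Gamma_1 \seq \Delta_1, C$ and $C, \Gamma_2 \seq \Delta_2$ are valid; and (iii) $|C| \leq |\pi|$. The theorem as stated for $A \impl B$ is then recovered by applying the strengthened claim to the split sequent $A;\seq ;B$ (obtained from a proof of $\seq A \impl B$ by inverting the outermost $R\vee$ and $R\neg$).

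The base case is a direct inspection of the six axiom schemas listed with the algorithm; in each, (i)--(iii) are immediate. For the induction step, the unary structural and logical rules ($Lw$, $Rw$, $Lc$, $Rc$, $L\wedge_i$, $R\vee_i$, $L\neg$, $R\neg$) copy the interpolant from the premise; properties (i)--(iii) transfer by direct use of the induction hypothesis, noting that any change to the sequent either leaves $V_i$ unchanged (negation, contraction, weakening) or enlarges it ($L\wedge_i$, $R\vee_i$), both of which preserve (i). For the binary propositional rules $R\wedge$ and $L\vee$, the algorithm forms a conjunction or a disjunction of the two premise interpolants depending on which side of the Maehara partition carries the main formula; property (ii) is then verified by applying the same rule on the ``target'' side and the matching rule ($R\wedge$/$R\vee$ or $L\wedge$/$L\vee$) on the ``interpolant'' side of the sequents provided by the induction hypothesis.

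The critical case, and the one which justifies the restriction $G \in \{\LKat,\LKlit,\LKm\}$, is the cut rule. By Remark~\ref{Rem: monochromatic atomic cut} every cut in the proof can be assumed monochromatic, so the cut formula $A$ satisfies either $V(A) \subseteq V_1$ or $V(A) \subseteq V_2$. In the first case the algorithm outputs $C \vee D$, where $C,D$ are the premise interpolants; items (i) and (iii) follow at once from the induction hypothesis because attaching $A$ to the left component does not increase $V_1$. For (ii), the sequent $\Gamma_1 \seq \Delta_1, C \vee D$ is obtained by cutting on $A$ between the premise interpolation sequents $\Gamma_1 \seq \Delta_1, A, C$ and $\Gamma_1, A \seq \Delta_1, D$ followed by $R\vee$, while $C \vee D, \Gamma_2 \seq \Delta_2$ is obtained from $C,\Gamma_2 \seq \Delta_2$ and $D,\Gamma_2 \seq \Delta_2$ by $L\vee$. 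The case $V(A) \subseteq V_2$ is entirely symmetric with $\wedge$ in place of $\vee$.

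The main obstacle is precisely this cut case, or rather recognising that the monochromatic hypothesis is exactly what is needed: in a non-monochromatic cut the variables of $A$ straddle the partition, and neither $C \vee D$ nor $C \wedge D$ would respect the variable condition (i). The size bound (iii) is a routine bookkeeping argument, since every rule either copies the interpolant or adds at most one binary connective, while the proof gains at least one new sequent at each step.
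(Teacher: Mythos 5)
Your proposal is correct and follows essentially the same route as the paper's proof: strengthen the statement to split sequents with the variable condition, the two provable sequents, and the size bound, then induct on the proof, invoking Remark~\ref{Rem: monochromatic atomic cut} so that in the cut case the cut formula's variables lie entirely in $V_1$ or $V_2$, yielding $C \vee D$ or $C \wedge D$ as the interpolant exactly as the paper does. The only differences are presentational (the paper defers the axiom and non-cut cases to Troelstra, and your weakening case slightly misstates that $V_i$ is unchanged when it may in fact grow, which is harmless since growth also preserves the variable condition).
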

\begin{proof}[Proof in the appendix.]
\end{proof}

A formula is in \emph{negation
normal form (NNF)} when the negation is only allowed on atoms and the other connectives in the formula are disjunctions and conjunctions. By an easy investigation of the form of the interpolants constructed in the Maehara algorithm, we see that in binary rules, based on the position of the active formulas in the premises, the interpolant of the conclusion will be either the conjunction or disjunction of the interpolants of the premises.
\begin{corollary}\label{Cor: NNF}
The interpolants constructed via the Maehara algorithm are in NNF. 
\end{corollary}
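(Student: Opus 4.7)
The plan is to prove the corollary by straightforward structural induction on the proof $\pi$, inspecting each clause of the definition of $\calM$ and verifying that the constructed interpolant is in NNF. Since NNF is preserved under both conjunction and disjunction (if $C$ and $D$ are NNF, so are $C \wedge D$ and $C \vee D$), the induction step for any rule whose interpolant is a conjunction or disjunction of the premises' interpolants is immediate from the induction hypothesis.

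First, I would handle the base cases. Inspecting the table of axiom interpolants in the definition of $\calM$, each possible interpolant is one of $\bot$, $\top$, $p$, or $\neg p$, all of which are literals and therefore trivially in NNF. Next, for the one-premise rules $(R\vee_i)$, $(L\wedge_i)$, $(Lw)$, $(Rw)$, $(Lc)$, $(Rc)$, $(L\neg)$, $(R\neg)$, the algorithm sets $\calM(\pi) = \calM(\pi')$, and the claim follows at once from the induction hypothesis. Note in particular that the two negation rules, despite their name, do not introduce any negation into the interpolant.

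For the binary cases $(R\wedge)$, $(L\vee)$, and $(\mathrm{cut})$, an inspection of the six subcases in the definition of $\calM$ shows that the interpolant of the conclusion is always either $C \vee D$ or $C \wedge D$, where $C$ and $D$ are the interpolants of the two premises. By the induction hypothesis $C$ and $D$ are in NNF, and since NNF is closed under $\wedge$ and $\vee$, so is the resulting interpolant.

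The main thing to be careful about is simply to enumerate all the cases of $\calM$ and confirm that no step introduces a negation in front of a non-atomic formula. This is clear from the construction: the only negations appearing anywhere in the inductively built interpolants are those already present in the axiom cases $\neg p$, so the resulting formula always lies in NNF. No serious obstacle arises.
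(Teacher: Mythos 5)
Your proof is correct and follows essentially the same route as the paper, which simply observes that axiom interpolants are literals, unary rules copy the interpolant, and binary rules (including cut) only combine premise interpolants with $\wedge$ or $\vee$; your structural induction just spells this out more explicitly. The one point worth noting is that $\top$ counts as a literal here by the paper's convention, so the axiom cases are indeed all in NNF.
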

\begin{remark}
By carefully investigating the definition of the Maehara algorithm for the sequent calculus $G \in \{\LKat, \LKlit, \LKm\}$, we see that the following happens. If the cut formula is on the left-hand side (resp. right-hand side) of the Maehara partition, then the interpolant of the conclusion of the cut rule is the disjunction (resp. conjunction) of the interpolants of the premises of the cut rule. We will use this observation in the following results.
\end{remark}

\section{Simple incompleteness results}\label{sec.simple_incomp}

We start with a simple observation.
Some interpolation algorithms return only formulas of a particular shape.
For example, as observed in Corollary~\ref{Cor: NNF},
 Maehara algorithm $\calM$ only returns formulas in NNF.
This immediately yields a first incompleteness result as follows.
\begin{definition}
An interpolation algorithm $\calI$ is called {\em syntactically complete}
 if for every valid implication $A\impl B$ and every interpolant $C$
 of $A\impl B$ there is a proof $\pi$ such that $C = \calI(\pi)$.
\end{definition}
\begin{observation}
$\calM$ is syntactically incomplete.
\end{observation}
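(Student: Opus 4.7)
The plan is to exploit Corollary~\ref{Cor: NNF} directly: since $\calM$ always returns formulas in negation normal form, any semantically valid interpolant that fails to be in NNF provides a witness of syntactic incompleteness. So the whole task reduces to exhibiting a valid implication together with a non-NNF interpolant.

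I would pick the simplest possible example. Take $A = p$ and $B = p$, so that $A \impl B$ is valid and $V(A)\cap V(B) = \{p\}$. The formula $C := \neg\neg p$ satisfies $V(C) \subseteq \{p\}$, is implied by $A$, and implies $B$, hence is a semantically legitimate interpolant of $A \impl B$. But $\neg\neg p$ is not in NNF, since a negation appears in front of a non-atomic subformula. By Corollary~\ref{Cor: NNF}, for every proof $\pi$ of $p \Rightarrow p$ the output $\calM(\pi)$ is in NNF, so $\calM(\pi) \neq \neg\neg p$ syntactically, regardless of how $\pi$ is chosen. This contradicts the definition of syntactic completeness.

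Essentially no obstacle arises here: the proof is a one-line consequence of the earlier shape observation, and the whole point of formulating the observation is to make exactly this kind of cheap incompleteness visible. The only thing I would double-check is that the candidate $C$ really satisfies the variable-inclusion condition and both halves of the interpolation condition in the stated example, which is immediate since $\neg\neg p \liff p$. No case analysis on $\pi$ is needed, because the NNF restriction rules out \emph{all} possible proofs at once.
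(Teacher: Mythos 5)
Your proposal is correct and follows exactly the paper's own argument: the same witness $\neg\neg p$ as an interpolant of $p \impl p$, ruled out for all proofs $\pi$ by the NNF shape restriction of Corollary~\ref{Cor: NNF}. Nothing further to add.
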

\begin{proof}
$\neg \neg p$ is an interpolant of $p \impl p$ but, since it is not in
 NNF, there is no $\pi$ such that $\calM(\pi) = \neg \neg p$.
\end{proof}
This simple observation makes clear that in many cases we will only
 obtain an interesting question if we ask for completeness up to
 some equivalence relation coarser than syntactic equality.
The first candidate that comes to mind, and the most important case
 we will deal with in this paper, is logical equivalence.
\begin{definition}\label{def.ipol_algo_complete}
An interpolation algorithm $\calI$ is called {\em (semantically) complete}
 if for every valid implication $A\impl B$ and every interpolant $C$
 of $A\impl B$ there is a proof $\pi$ s.t.\ $C$ is logically equivalent
 to $\calI(\pi)$.
\end{definition}
Since, in this paper, we will mostly deal with semantic completeness
 we will usually simply say {\em completeness} instead.

A useful basis for negative results is the implication
 $p\land q \impl p\lor q$.
It has the four interpolants $p\land q, p, q, p\lor q$ but, in
 proof systems that do not allow overly redundant proofs there are
 essentially only two different proofs: one that proceeds via $p$
 (and has $p$ as interpolant) and one that proceeds via $q$ (and has
 $q$ as interpolant).
This can be used to obtain incompleteness results for interpolation
 in the cut-free sequent calculus and resolution in a strong sense.
\begin{definition}\label{def.positive_negative_occurrence}
The notions of \emph{positive} and \emph{negative} subformulas of a formula $A$ are defined as follows. $A$ is a positive subformula of itself. If $B \circ C$ is a positive (resp. negative) subformula of $A$, so are $B$ and $C$, where $\circ \in \{\wedge, \vee\}$. If $\neg B$ is a positive (resp. negative) subformula of $A$, then $B$ is a negative (resp. positive) subformula of $A$. A formula $B$ is a \emph{subformula} of the formula $A$ if it is a positive or negative subformula of $A$.
Positive and negative occurrences of formulas in $\Gamma \Rightarrow \Delta$ are defined as positive and negative occurrences of formulas in $\neg \bigwedge \Gamma \vee \bigvee \Delta$.
\end{definition}
\begin{proposition}\label{prop.Maehara_cutfree_incomplete}
Maehara interpolation in $\LKminus$ is not complete. 
\end{proposition}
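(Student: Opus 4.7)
My plan is to use the implication $p \land q \impl p \lor q$ highlighted in the preceding discussion, and to show that while it has four pairwise non-equivalent interpolants $p \land q$, $p$, $q$, and $p \lor q$, Maehara can output only $p$ or $q$ from any proof in $\LKminus$. Thus $p \land q$ (or equivalently $p \lor q$) is a semantically valid interpolant that witnesses incompleteness.

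Let $\pi$ be an arbitrary proof of the split sequent $p \land q;\Rightarrow ; p \lor q$ in $\LKminus$. The first step is a subformula-property argument: in a cut-free proof every formula occurring anywhere in $\pi$ must descend to the end-sequent (weakening only adds formulas, and any added formula must propagate downward, possibly as a subformula of a later main formula), hence belongs to $\{p,q,p\land q,p\lor q\}$. This immediately rules out every two-premise logical rule of $\pi$: an $R\wedge$ application would require a conjunction as main formula on the succedent side, hence equal to $p\land q$, but $p\land q$ appears only on the antecedent side of the end-sequent and no rule of cut-free $\LK$ can move a succedent formula to the antecedent except a negation rule, which would introduce a $\neg$-formula not lying in $\{p,q,p\land q,p\lor q\}$. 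By the symmetric argument $L\vee$ is excluded, and $L\neg, R\neg$ cannot occur either since no negation subformula exists.

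Consequently $\pi$ uses only one-premise logical rules ($L\wedge_i, R\vee_i$) together with the structural rules $Lw,Rw,Lc,Rc$, so $\pi$ is a single-branch proof terminating at a single atomic axiom $r\Rightarrow r$ with $r\in\{p,q\}$. Since Maehara propagates the interpolant unchanged through every one-premise rule, $\calM(\pi)$ coincides with the interpolant attached to that axiom. Tracking the Maehara partition upward from the end-sequent, the antecedent occurrence of $r$ must descend to $p\land q\in\Gamma_1$ and the succedent occurrence of $r$ must descend to $p\lor q\in\Delta_2$, because $L\wedge_i$, $R\vee_i$, contraction, and weakening each preserve the partition side of a formula. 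The only consistent choice is therefore the split $r;\Rightarrow;r$, whose axiom interpolant is $r$. Hence $\calM(\pi)\in\{p,q\}$ for every proof $\pi$, and $p\land q$ (as well as $p\lor q$) is never produced.

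The main step I expect to require care is the subformula-based exclusion of $R\wedge$ and $L\vee$: one has to verify that even a main formula introduced just above a weakening step must still descend all the way to the end-sequent, so that the subformula-closure argument applies uniformly and no spurious two-premise rule can be hidden inside a transient subproof.
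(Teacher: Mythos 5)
Your proposal is correct and follows essentially the same route as the paper's proof: the same witness $p\land q \impl p\lor q$, the same subformula-property argument (refined by positive/negative occurrences) to exclude the binary rules $(R\wedge)$, $(L\vee)$ and the negation rules, and the same conclusion that a cut-free proof is a single branch over an axiom $p\seq p$ or $q\seq q$, so $\calM(\pi)\in\{p,q\}$. The only cosmetic difference is that you additionally track the Maehara partition down to the axiom, whereas the paper stops once it knows all rules are unary.
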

\begin{proof}

We use the subformula property of cut-free proofs (see Proposition 4.2.1 and its Corollary in \cite{Troelstra}). Any cut-free proof of $\pi: p \wedge q \Rightarrow p \vee q$ has the following properties:

\begin{itemize}
\item 
For any sequent $\Gamma \Rightarrow \Delta$ in $\pi$, the formulas of $\Gamma$ occur positively in $p \wedge q$ and the formulas of $\Delta$ occur positively in $p \vee q$.
 
\item 
The rules $(L \neg)$ and $(R \neg)$ are not used in $\pi$.

\item 
As the formula $p \wedge q$ occurs negatively in $p \wedge q \Rightarrow p \vee q$ then $p \wedge q$ is only obtained by using left rules, i.e., either $(L \wedge)$, $(L c)$, or $(L w)$. And the rule $(L \vee)$ cannot be used.
 
\item 
Similarly, for $p \vee q$, it can be only obtained using the right rules $(R \vee)$, $(R c)$, or $(R w)$. And the rule $(R \wedge)$ cannot be used.

\item 
Axioms in $\pi$ have one of the forms: $p \Rightarrow p$ or $q \Rightarrow q$. The axioms $\bot \Rightarrow$, $p \wedge q \Rightarrow p \wedge q$ or $p \vee q \Rightarrow p \vee q$ cannot appear in $\pi$, because of the subformula property and the positive or negative occurrences.
\end{itemize}
Thus, the axioms of $\pi$ are either $p\seq p$ or $q\seq q$.
And, as all the rules in $\pi$ are one-premise rules, the interpolants $p \wedge q$ or $p \vee q$ can never be obtained.
\end{proof}

\begin{proposition}
Standard interpolation 
in propositional resolution is not complete.
\end{proposition}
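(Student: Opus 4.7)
The plan is to reuse the implication $p \wedge q \to p \vee q$ from Proposition~\ref{prop.Maehara_cutfree_incomplete}, recast as a refutation problem. Writing $A = p \wedge q$ and $B = p \vee q$, the corresponding refutation is of the clause set $\calA = \{\{p\}, \{q\}\}$ (from $A$) together with $\calB = \{\{\neg p\}, \{\neg q\}\}$ (from $\neg B$), with $\{p,q\}$ as the shared variables. A semantic analysis shows that, up to logical equivalence, the interpolants of $A \to B$ are exactly $p$, $q$, $p \wedge q$, and $p \vee q$ (a formula $C$ over $\{p,q\}$ is an interpolant iff $C$ is true at the assignment $(p,q)=(T,T)$ and false at $(F,F)$). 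It therefore suffices to exhibit two of these, e.g.\ $p \wedge q$ and $p \vee q$, and show that no resolution refutation yields an interpolant logically equivalent to either.

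The key structural observation is that all four initial clauses are units, and the only pairs with complementary literals are $(\{p\}, \{\neg p\})$ and $(\{q\}, \{\neg q\})$, each of which resolves directly to the empty clause. Since $\emptyset$ cannot itself serve as a premise of a resolution step, every refutation in (weakening-free) resolution consists of exactly one resolution step, either on the shared variable $p$ or on the shared variable $q$.

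Applying the Pudl\'ak/Kraj\'i\v{c}ek algorithm, the two $\calA$-clauses are assigned $\bot$ and the two $\calB$-clauses are assigned $\top$. A single resolution on $p$ then yields $sel(p,\bot,\top) = (p \vee \bot) \wedge (\neg p \vee \top)$, which simplifies to $p$; symmetrically, resolving on $q$ yields a formula equivalent to $q$. Since neither $p$ nor $q$ is logically equivalent to $p \wedge q$ or to $p \vee q$, the algorithm cannot produce these interpolants, establishing incompleteness.

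The only step requiring care is the structural claim pinning down the shape of every possible refutation, but this is immediate given the triviality of the clause set. I would also flag that this argument is tailored to weakening-free resolution: with weakening, intermediate clauses such as $\{p,q\}$ become derivable and one can in fact obtain both $p \wedge q$ and $p \vee q$ as interpolants on this example, so a different witness (or a separate argument) would be needed for the resolution-with-weakening case.
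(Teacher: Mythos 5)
Your proposal is correct and follows essentially the same route as the paper: the same witness $p\wedge q \impl p\lor q$, the same clause set $\{\{p\},\{q\},\{\neg p\},\{\neg q\}\}$, the same observation that every refutation is a single resolution step on $p$ or on $q$, and the same computation showing the algorithm yields only (formulas equivalent to) $p$ or $q$. Your closing caveat about resolution with weakening matches a remark the paper makes immediately after its proof.
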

\begin{proof}
The formula $p\land q \impl p\lor q$ has the interpolants $p$, $q$, $p\land q$, or $p\lor q$. However, neither $p\land q$ nor $p\lor q$ can be read off from a resolution proof. To form a resolution refutation, as $p \wedge q \to p \vee q$ is classically valid, the set of clauses $A_1=\{p\}$, $A_2=\{q\}$, $B_1=\{\neg p\}$, and $B_2=\{\neg q\}$ is unsatisfiable. The set of atoms common in both $A_1, A_2$ and $B_1, B_2$ is $\{p, q\}$. A resolution refutation of the mentioned set of clauses has either of the following forms:
\begin{center}
    \begin{tabular}{c c}
\AxiomC{$p$}
\AxiomC{$\neg p$}
\BinaryInfC{$\emptyset$}
\DisplayProof
&
\AxiomC{$q$}
\AxiomC{$\neg q$}
\BinaryInfC{$\emptyset$}
\DisplayProof
    \end{tabular}
\end{center}
Then, based on the algorithm we assign the constant $\bot$ on clauses $A_1$ and $A_2$, and assign $\top$ on clauses $B_1$ and $B_2$.
Then for the left refutation we obtain $sel(p, \bot, \top)$ which is logically
 equivalent to $p$ and for the right refutation we obtain $sel(q, \bot, \top)$
 which is logically equivalent to $q$ on the conclusion as the interpolant in
 each case.  
\end{proof}
Note that the above proofs are quite general.
In the case of resolution, it shows the incompleteness of any
 interpolation algorithm which, when given a resolution refutation in some
 set of propositional variables $V$ will output a formula in $V$.
This is arguably the case for any reasonable interpolation algorithm.
In the case of sequent calculus, it shows the incompleteness of any 
 interpolation algorithm that produces only such interpolants that contain
 only atoms that occur in axioms of the input proof.

\begin{remark}
It is worth mentioning that the above example does not prove the incompleteness of the system resolution with weakening. Because for the same set of clauses $A_1=\{p\}$, $A_2=\{q\}$, $B_1=\{\neg p\}$, and $B_2=\{\neg q\}$, to form the interpolant $p \vee q$ we can take the following resolution refutation:
\begin{center}
\begin{tabular}{c c c}
\AxiomC{$p$}
\RightLabel{\small$(w)$}
\UnaryInfC{$p,q$}
\AxiomC{$\neg p$}
\BinaryInfC{$q$}
\AxiomC{$\neg q$}
\BinaryInfC{$\emptyset$}
\DisplayProof
&
$\overset{\text{interpolant}}{\longrightarrow}$
&
\AxiomC{$\bot$}
\UnaryInfC{$\bot$}
\AxiomC{$\top$}
\BinaryInfC{$sel(p, \bot, \top)=p$}
\AxiomC{$\top$}
\BinaryInfC{$sel(q,p,\top)= p \vee q$}
\DisplayProof
\end{tabular}
\end{center}
\end{remark}

\begin{question}
Is standard interpolation in resolution with weakening complete?    
\end{question}

\begin{question}
Are the standard interpolation algorithms in algebraic proof systems, such as the cutting planes system~\cite{pudlak}, complete?
\end{question}

\section{Interpolants in the sequent calculus}

In this section we prove some general results about interpolants in the
 sequent calculus in preparation for our main results in Sections~\ref{sec.comp_prun_subs} and~\ref{sec.LKat}.

For some aspects of the following proofs, it will be useful
 to distinguish between different occurrences of a formula
 in an {\LK} proof.
We use lowercase Greek letters like $\mu, \nu, \ldots$ to
 denote {\em formula occurrences}.
There is a natural ancestor relation on the set of formula 
 occurrences in a proof: if a formula occurrence $\mu$ is the main
 occurrence of a logical or structural inference rule and $\nu$ is an auxiliary occurrence then $\nu$ is a {\em direct ancestor} of $\mu$.
Moreover, if $\mu$ is a formula occurrence in the context of the
 conclusion sequent of an inference rule and $\nu$ is a corresponding formula occurrence in the context of a premise
 sequent, then $\nu$ is a {\em direct ancestor} of $\mu$.
The {\em ancestor} relation is then the reflexive and transitive closure
 of the direct ancestor relation.
We write $A_\mu$ in a proof to denote an occurrence $\mu$ of a formula $A$.
We also write $L_i$ for a label $L$ of an inference rule to give this inference
 the name $i$.
 \begin{example}
 In the proof
 \[
 \infer[(R\neg)]{A\lor B \seq (\neg (\neg A \land \neg B))_\nu}{
   \infer[(L\lor)_i]{\neg A \land \neg B, (A\lor B)_\mu \seq}{
     \infer[(L\land_1)]{A_{\mu_1}, \neg A \land \neg B \seq}{
       \infer[(L\neg)]{A, \neg A \seq}{
         A\seq A_{\nu_1}
       }
     }
     &
       \infer[(L\land_2)]{B_{\mu_2}, \neg A \land \neg B \seq}{
         \infer[(L\neg)]{B, \neg B \seq}{
           B\seq B_{\nu_2}
         }
       }     
   }
 }
 \]
 the formula ocurrence $\mu$ has two direct ancestors: $\mu_1$ and $\mu_2$.
 The occurrences $\mu$, $\mu_1$, and $\mu_2$ are the active formulas of
  the inference $i$.
 The formula occurrece $\nu$ has eight ancestors, including itself and the
  formula occurrences $\nu_1$ and $\nu_2$.
\end{example}

We will often work with formulas in conjunctive normal form as a convenient
 representative for a class of formulas up to logical equivalence.
Many of the following results do not depend strongly on the shape and could be adapted to other normal forms. Let $L$ be a set of atoms. By $\calA$ is a clause set in the language $L$ we mean every literal in $\calA$ is either an atom or negation of an atom in $L$.
If $\mathcal{C}$ and $\mathcal{D}$ are clause sets, define $\mathcal{C} \times \mathcal{D} := \{C \cup D \mid C \in \mathcal{C} \ \text{and} \ D \in \mathcal{D}\}$.
\begin{definition} \label{def: CNF}
We define the function $\CNF$ which maps formulas to clause sets
 recursively: $\mathrm{CNF} (\top)= \emptyset$, $\CNF(\bot) = \{\emptyset\}$, $\CNF(\ell) =\{\ell\}$,
$\CNF(A \wedge B) = \CNF(A) \cup \CNF(B)$, and $\CNF(A \vee B) = \CNF(A) \times \CNF(B)$, where $\ell$ is a literal and $A$ and $B$ are formulas. 
\end{definition}

\begin{restatable}{observation}{ObsCNF}
\label{obs.CNF}
Let $A, B, C$ be formulas, let $\ell$ be a literal, and let
 $\circ \in \{ \land, \lor \}$. Then:
\begin{enumerate}
\item\label{obs.CNF.logeq} $\CNF(A)$ is logically equivalent to $A$
\item\label{obs.CNF.comm} $\CNF(A\circ B) = \CNF(B \circ A)$
\item\label{obs.CNF.assoc} $\CNF((A\circ B) \circ C) = \CNF(A \circ (B \circ C))$
\item\label{obs.CNF.andidem} $\CNF(A \land A) = \CNF(A)$
\item\label{obs.CNF.oridem} $\CNF(\ell \lor \ell) = \CNF(\ell)$
\item\label{obs.CNF.andunit} $\CNF(A \land \top) = \CNF(A)$
\item\label{obs.CNF.orunit} $\CNF(A \lor \bot) = \CNF(A)$
\end{enumerate}
\end{restatable}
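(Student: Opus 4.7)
The plan is to verify each of the seven items by unfolding the recursive definition of $\CNF$ together with standard set-theoretic facts and the convention that clauses are sets of literals. Items 2--7 are essentially one-line calculations, so the only genuine work is in item 1.

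For item 1 I would proceed by induction on the structure of $A$. The atomic cases $A\in\{\top,\bot\}$ and $A=\ell$ follow directly from the definition together with the conventions $\bigwedge\emptyset=\top$ and $\bigvee\emptyset=\bot$: the formula interpretation of $\CNF(\top)=\emptyset$ is $\top$, of $\CNF(\bot)=\{\emptyset\}$ is $\bot$, and of $\CNF(\ell)=\{\ell\}$ is $\ell$. For $A=B\land C$, the formula interpretation of $\CNF(B)\cup\CNF(C)$ is the conjunction of the interpretations of $\CNF(B)$ and $\CNF(C)$, which by induction is logically equivalent to $B\land C$. For $A=B\lor C$, the key observation is that $\calC\times\calD$ realizes the distributivity of $\lor$ over $\land$: its formula interpretation $\bigwedge_{C\in\calC,\,D\in\calD}(\bigvee C\lor\bigvee D)$ is logically equivalent to $\bigwedge_{C\in\calC}\bigvee C\;\lor\;\bigwedge_{D\in\calD}\bigvee D$, and combining this with the induction hypothesis gives the claim. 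Since $\CNF$ is only defined on formulas built from literals, $\top$, $\bot$, $\land$, $\lor$ (i.e.\ NNF formulas), no negation case arises.

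For items 2 and 3 I would observe that $\cup$ on clause sets is commutative and associative because it is set union, and that $\times$ is commutative and associative because clauses are themselves sets of literals, so $C\cup D=D\cup C$ and $(C\cup D)\cup E=C\cup(D\cup E)$. Splitting on $\circ\in\{\land,\lor\}$ then reduces each statement to the corresponding property of $\cup$ or $\times$. Item 4 is immediate from idempotency of set union: $\CNF(A\land A)=\CNF(A)\cup\CNF(A)=\CNF(A)$. Item 5 unfolds as $\CNF(\ell\lor\ell)=\{\ell\}\times\{\ell\}=\{\{\ell\}\cup\{\ell\}\}=\{\{\ell\}\}=\CNF(\ell)$, where it is essential that $\{\ell\}\cup\{\ell\}=\{\ell\}$ as a set of literals; this is precisely why the item is restricted to literals rather than to arbitrary formulas. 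For items 6 and 7 one uses the unit laws $\calC\cup\emptyset=\calC$ and $\calC\times\{\emptyset\}=\{C\cup\emptyset\mid C\in\calC\}=\calC$.

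I do not expect any real obstacle. The only point needing care is a clean bookkeeping of two distinct levels of set structure—clause sets on the outside and clauses (sets of literals) on the inside—so that associativity, commutativity, and idempotency can be invoked at the correct level. Once this is in place, item 1 is the only item requiring induction, and the distributivity step for the disjunction case is the single nontrivial calculation.
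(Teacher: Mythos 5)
Your proposal is correct and follows essentially the same route as the paper: item 1 by structural induction with the distributivity of $\lor$ over $\land$ realized by $\times$ in the disjunction case, and items 6--7 by the unit laws $\calC\cup\emptyset=\calC$ and $\calC\times\{\emptyset\}=\calC$. The paper leaves items 2--5 to the reader, and your set-theoretic justifications (commutativity, associativity, and idempotency of $\cup$, and the corresponding facts for $\times$ at the level of clauses) are exactly the intended ones.
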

\begin{proof}[Proof in the appendix.]
\end{proof}

In the upcoming Obervation~\ref{observation} and 
 Lemma~\ref{lem.interpolant_for_clause} we prove two results about partitions
 of the end-sequent which determine the interpolant independently of the proof.
\begin{observation}\label{observation}
Let $\Gamma$ and $\Delta$ be multisets of formulas.
\begin{enumerate}
\item\label{observation.left}
If $\pi$ is an {\LKminus} proof of  $\Gamma ; \Rightarrow \Delta ;$ then
 $\CNF(\calM(\pi)) = \{ \emptyset \}$.
\item\label{observation.right}
If $\sigma$ is an {\LKminus} proof of $ ; \Gamma \Rightarrow ; \Delta $ then 
 $\CNF(\calM(\pi)) = \emptyset$.
\end{enumerate}
\end{observation}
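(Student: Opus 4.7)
The plan is a straightforward induction on the structure of the cut-free proof $\pi$, treating the two parts symmetrically. Note first that the shapes of the split end-sequents forbid any variable in the interpolant: in case~\ref{observation.left} we need $V(\calM(\pi)) \subseteq V(\Gamma_2 \cup \Delta_2) = \emptyset$, and the interpolant conditions force $\calM(\pi)$ to be unsatisfiable; in case~\ref{observation.right} they force it to be valid. Thus the claims $\CNF(\calM(\pi)) = \{\emptyset\}$ and $\CNF(\calM(\pi)) = \emptyset$ are exactly the syntactic content of ``equivalent to $\bot$'' and ``equivalent to $\top$'' that one expects. However, this is only a sanity check; what the observation actually asks is that the Maehara algorithm literally produces these CNFs.

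For part~\ref{observation.left}, I would first handle the axiom cases: only the split axioms with nothing on the right of a semicolon are applicable, namely $p;\seq p;$ and $\bot;\seq;$, and both are assigned the interpolant $\bot$ with $\CNF(\bot)=\{\emptyset\}$. For the inductive step I would observe that whenever the split end-sequent has empty right-hand block, every rule's main formula must lie in the left block, hence every auxiliary formula appears in the left block of each premise as well. For one-premise rules this means the premise still has the shape $\Gamma'; \seq \Delta';$ and $\calM(\pi)=\calM(\pi')$, so the IH applies directly. For the binary rules $(R\wedge)$ and $(L\vee)$ with main formula on the left, both premises again have the required shape, and the Maehara algorithm assigns $C\vee D$ to the conclusion, where by IH $\CNF(C)=\CNF(D)=\{\emptyset\}$. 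Then by Observation~\ref{obs.CNF} I compute
\[
\CNF(C\vee D) \;=\; \CNF(C)\times\CNF(D) \;=\; \{\emptyset\}\times\{\emptyset\} \;=\; \{\emptyset\},
\]
closing the induction.

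Part~\ref{observation.right} is dual. The relevant axioms are $;p \seq ;p$ and $;\bot \seq;$, both assigned the interpolant $\top$ with $\CNF(\top)=\emptyset$. One-premise rules are handled by the IH as above. For $(R\wedge)$ and $(L\vee)$ with main formula now on the right, the algorithm assigns $C\wedge D$ to the conclusion; by IH $\CNF(C)=\CNF(D)=\emptyset$, and by Observation~\ref{obs.CNF},
\[
\CNF(C\wedge D) \;=\; \CNF(C)\cup\CNF(D) \;=\; \emptyset.
\]

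There is no real obstacle: the only point requiring any care is verifying that the one-sidedness of the Maehara partition propagates upward through each rule, so that the inductive hypothesis is available in the right form at every premise. Because the system is cut-free and the split partition is preserved by every logical and structural rule (active formulas are forced to inherit the side of their main formula), this propagation is immediate and no further case analysis is needed.
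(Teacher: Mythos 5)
Your proof is correct and follows essentially the same route as the paper: the paper's argument is just a compressed version of your induction, observing that all axioms in case~(1) have interpolant $\bot$, that binary inferences induce disjunctions and unary ones leave the interpolant unchanged, so $\CNF(\calM(\pi)) = \CNF(\bot\lor\cdots\lor\bot) = \{\emptyset\}$, with case~(2) symmetric. Your explicit verification that $\{\emptyset\}\times\{\emptyset\}=\{\emptyset\}$ and $\emptyset\cup\emptyset=\emptyset$ via Observation~\ref{obs.CNF} is exactly the computation the paper leaves implicit.
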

\begin{proof}
In case~(\ref{observation.left}) all inferences work on the left-hand side
 of the Maehara partition.
Therefore all axioms have interpolant $\bot$ and binary inferences induce
 disjunctions.
Unary inferences do not modify the interpolant.
Therefore $\CNF(\calM(\pi)) = \CNF(\bot \lor \cdots \bot) = \{ \emptyset \}$.
Case~(\ref{observation.right}) is symmetric.
\end{proof}

\begin{lemma}\label{lem.interpolant_for_clause}
Let $A$ be a formula, let $\{\ell_1, \ldots, \ell_n \}$ be a non-tautological
 clause, and let $\pi$ be an {\LKminus} proof of
 $A ; \Rightarrow ; \ell_1, \ldots,\ell_n$.
Then we have $\CNF(\calM(\pi)) = \{ M \}$ for some clause $M$ with $M \subseteq \{ \ell_1, \ldots, \ell_n \}$.
\end{lemma}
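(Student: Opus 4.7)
The plan is to prove a slightly strengthened statement by induction on the structure of $\pi$, because the natural inductive hypothesis needs to allow for a nonempty $\Gamma_2$ on the right of the Maehara partition. Concretely I would show: if $\sigma$ is a cut-free {\LK} proof of $\Gamma_1; \Gamma_2 \seq \Delta_1; \Delta_2$ in which every formula of $\Gamma_2 \cup \Delta_2$ is a literal and the sequent $\Gamma_2 \seq \Delta_2$ is not a propositional tautology (equivalently, the clause $\Delta_2 \cup \{\bar\ell : \ell \in \Gamma_2\}$ is non-tautological), then $\CNF(\calM(\sigma)) = \{M\}$ for some clause $M \subseteq \Delta_2 \cup \{\bar\ell : \ell \in \Gamma_2\}$. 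The lemma is the special case $\Gamma_2 = \emptyset$, $\Delta_2 = \{\ell_1, \ldots, \ell_n\}$; one may moreover assume WLOG that no $\ell_i$ equals $\bot$ (a redundant disjunct) or $\top$ (would make the clause tautological).

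First I would verify that the non-tautology side-condition propagates upward through every inference. For structural rules and for logical rules whose main formula sits on the left of the partition, the right side $\Gamma_2, \Delta_2$ is essentially preserved. Since each right-side formula is a literal, the only logical rules whose main formula can lie on the right of the partition are $(L\neg)$ and $(R\neg)$ with main formula of the form $\neg p$; for instance, $(R\neg)$ with $\neg p \in \Delta_2$ yields a premise with $\Gamma_2' = \Gamma_2 \cup \{p\}$ and $\Delta_2' = \Delta_2 \setminus \{\neg p\}$, and the hypothesis that neither side contains both $p$ and $\neg p$ and that $\Gamma_2 \cap \Delta_2 = \emptyset$ forces $p \notin \Delta_2$ and $\neg p \notin \Gamma_2$, so the premise is again non-tautological. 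A direct consequence is that the only axioms that can occur in $\sigma$ are $p; \overset{p}{\Longrightarrow} ; p$, $;p \overset{\neg p}{\Longrightarrow} p;$, $p; \overset{\bot}{\Longrightarrow} p;$, and $\bot; \overset{\bot}{\Longrightarrow} ;$; the axioms $;p \overset{\top}{\Longrightarrow} ;p$ and $;\bot \overset{\top}{\Longrightarrow} ;$ (which would yield $\CNF(\top) = \emptyset$) are ruled out because their right-side of the partition is tautological.

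With these preliminaries the induction itself is routine. Each surviving axiom has $\CNF(\calM) \in \{\{\{p\}\}, \{\{\neg p\}\}, \{\emptyset\}\}$, so the conclusion holds trivially with $M$ a subset (in fact, equal to an element) of $\Delta_2 \cup \{\bar\ell : \ell \in \Gamma_2\}$. For unary rules $\calM(\sigma) = \calM(\sigma')$, and one checks in each case that the right-side clause of the premise is contained in that of the conclusion (equality for the $(L\neg)$ and $(R\neg)$ cases, since a literal $\neg p$ on one side of the partition is exchanged for $p$ on the opposite side of the turnstile, leaving $\Delta_2 \cup \{\bar\ell : \ell \in \Gamma_2\}$ unchanged). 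For the binary rules $(R\wedge)$ and $(L\vee)$, the main formula is compound and therefore lies on the left of the partition, so the Maehara algorithm gives $\calM(\sigma) = \calM(\sigma_1) \vee \calM(\sigma_2)$; by the inductive hypothesis $\CNF(\calM(\sigma_i)) = \{M_i\}$ with $M_i \subseteq \Delta_2 \cup \{\bar\ell : \ell \in \Gamma_2\}$, and Observation~\ref{obs.CNF} together with the definition of $\times$ yields $\CNF(\calM(\sigma)) = \{M_1\} \times \{M_2\} = \{M_1 \cup M_2\}$, which is again a singleton clause contained in $\Delta_2 \cup \{\bar\ell : \ell \in \Gamma_2\}$.

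The main obstacle I anticipate is precisely the bookkeeping around the non-tautology condition: it is what prevents the tautological axioms from appearing, which is in turn what rules out $\calM(\sigma) = \top$ and the attendant $\CNF(\top) = \emptyset$ that would destroy the singleton shape when multiplied by $\times$. This hinges on using simultaneously both the internal consistency of $\Gamma_2$ and $\Delta_2$ and the disjointness $\Gamma_2 \cap \Delta_2 = \emptyset$ in the $(L\neg)$ and $(R\neg)$ cases. Once that case analysis is in place the rest of the proof is a mechanical walk through the definition of $\calM$, guided by the identities in Observation~\ref{obs.CNF}.
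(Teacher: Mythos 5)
Your proposal is correct and follows essentially the same approach as the paper: the paper likewise observes that, since all right-partition formulas are literals, every binary inference acts on ancestors of $A$ and hence contributes a disjunction, so $\calM(\pi)$ is a disjunction of axiom interpolants, and then uses non-tautologousness to exclude the axioms with interpolant $\top$ exactly as you do. Your explicit induction with the strengthened hypothesis on $\Gamma_2$ is just the unrolled form of the paper's ancestor-based argument (the only nitpick being that axioms may carry a negated atom rather than an atom on the right of the partition, which changes nothing).
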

\begin{proof}
First note that a formula occurrence in $\pi$ is an ancestor
 of $A$ (resp. of the $\ell_i$'s) iff it is on the left-hand side (resp.
 right-hand side) of the Maehara partition.
All binary inferences in $\pi$ operate on ancestors of $A$ and thus induce
 disjunctions in the computation of $\calM(\pi)$.
Therefore, $\calM(\pi) = \bigvee_{S\in\calS} \calM(S)$ 
 where $\calS$ is the set of initial split sequents in $\pi$.
We make a case distinction on the form of an $S\in\calS$.
\begin{enumerate}
\item If $S$ is $B ; \overset{\bot}{\Longrightarrow} B ;$
then the interpolant is $\bot$.
\item The case of $S$ being $; B \overset{\top}{\Longrightarrow} ; B$
is impossible because it would
 entail that both occurrences of $B$ are ancestors of the $\ell_i$'s and hence
 $\{ \ell_1, \ldots, \ell_n \}$ would be tautological.
\item If $S$ is $B ; \overset{B}{\Longrightarrow} ; B$
then the occurrence of $B$ on the succedent
of the sequent is ancestor of $\ell_j$ for some $j\in\{1,\ldots,n\}$ and
 thus the interpolant is $B=\ell_j$.
\item If $S$ is $; B \overset{\neg B}{\Longrightarrow} B ;$ 
then the occurrence of $B$ on the antecedent
 of the sequent is ancestor of $\ell_j$ for some $j\in\{1,\ldots,n\}$
  and thus the interpolant is $\neg B = \ell_j$.
\item If $S$ is $\bot ; \overset{\bot}{\Longrightarrow} $
then the interpolant is $\bot$.
\item The case of $S$ being $; \bot \overset{\top}{\Longrightarrow} $
is impossible because it would entail
 that $\ell_j = \neg \bot$ for some $j\in\{1,\ldots,n\}$ and hence
  $\{ \ell_1, \ldots, \ell_n \}$ would be tautological.
\end{enumerate}
We have thus shown that $\calM(S) \in \{ \bot, \ell_1, \ldots, \ell_n \}$ for all 
 $S\in\calS$.
So, by Observation~\ref{obs.CNF}/(\ref{obs.CNF.comm}),(\ref{obs.CNF.assoc}),(\ref{obs.CNF.oridem}),(\ref{obs.CNF.orunit}), we have $\CNF(\calM(\pi))
 = \{ M \}$ for some $M\subseteq \{ \ell_1,\ldots,\ell_n \}.$
\end{proof}

The next useful result is that $\calM$ is just as complete in {\LKat} as it
 is in {\LKlit}.
To show this, we first need a version of the inversion lemma for negation 
 that preserves the interpolant.
\begin{restatable}{lemma}{LemNegInversion}
\label{lem.neginversion}
If $\pi$ is an {\LKm} proof with monochrome cuts of
\begin{enumerate}
\item $\Mseq{\Gamma_1}{\Gamma_2}{\Delta_1}{\Delta_2, \neg A}$
\item $\Mseq{\Gamma_1}{\Gamma_2}{\Delta_1, \neg A}{\Delta_2}$
\item $\Mseq{\Gamma_1}{\Gamma_2, \neg A}{\Delta_1}{\Delta_2}$
\item $\Mseq{\Gamma_1, \neg A}{\Gamma_2}{\Delta_1}{\Delta_2}$
\end{enumerate}
then there is an {\LKm} proof with monochrome cuts $\pi'$ of
\begin{enumerate}
\item $\Mseq{\Gamma_1}{\Gamma_2, A}{\Delta_1}{\Delta_2}$
\item $\Mseq{\Gamma_1, A}{\Gamma_2}{\Delta_1}{\Delta_2}$
\item $\Mseq{\Gamma_1}{\Gamma_2}{\Delta_1}{\Delta_2, A}$
\item $\Mseq{\Gamma_1}{\Gamma_2}{\Delta_1, A}{\Delta_2}$
\end{enumerate}
with $\calM(\pi') = \calM(\pi)$ and $|\pi'|  \leq 2|\pi|$.
\end{restatable}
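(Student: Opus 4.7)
The plan is to prove this by induction on $|\pi|$, noting that the four cases are symmetric so I would focus on case (1): given $\pi$ proving $\Mseq{\Gamma_1}{\Gamma_2}{\Delta_1}{\Delta_2, \neg A}$, construct $\pi'$ proving $\Mseq{\Gamma_1}{\Gamma_2, A}{\Delta_1}{\Delta_2}$ with the same interpolant. The guiding idea is to track a distinguished occurrence $\nu$ of $\neg A$ sitting on the right-of-partition side of the succedent, and to push its inversion upward through $\pi$, replacing $\neg A$ in the succedent by $A$ in the antecedent on the same side of the Maehara partition. Because $\nu$ stays on the right of the partition, neither $V(\Gamma_1 \cup \Delta_1)$ nor $V(\Gamma_2 \cup \Delta_2)$ changes under the transformation, so every monochrome cut in $\pi$ remains monochrome in $\pi'$.

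The case analysis is on the last inference $\calR$ of $\pi$. If $\calR$ is an atomic axiom, then $\nu$ cannot occur in it (since $\neg A$ is not an atom), so this case is vacuous. If $\calR$ is the $(R\neg)$ that introduces $\nu$, then its premise is already $\Mseq{\Gamma_1}{\Gamma_2, A}{\Delta_1}{\Delta_2}$, and since $(R\neg)$ is unary the Maehara algorithm copies the premise's interpolant to the conclusion, so taking $\pi'$ to be that premise gives $\calM(\pi') = \calM(\pi)$. If $\calR = (Rw)$ introduces $\nu$, I discard the weakening and instead weaken in $A$ on the right of the partition via $(Lw)$; unary rules preserve the interpolant, so again $\calM(\pi') = \calM(\pi)$. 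In all remaining cases where $\calR$ does not act on (any ancestor of) $\nu$, I invoke the IH on each premise to invert the ancestor of $\nu$ and reapply $\calR$; crucially, the Maehara algorithm's recursion depends only on the rule type and the side of the partition on which active formulas sit, and both are preserved, so the interpolant is built identically.

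The two main obstacles are the contraction case and the cut case. For $(Rc)$ contracting $\nu$ with another occurrence of $\neg A$ in $\Delta_2$, the premise has two copies of $\neg A$, each a direct ancestor of $\nu$; I apply the IH twice in succession to invert both copies, then use $(Lc)$ on the right of the partition to contract the resulting two copies of $A$ in $\Gamma_2$. This is the step that forces the factor of $2$ in the size bound, with the bookkeeping done by induction. For the cut rule, the subtle point is that the algorithm branches on whether $V(A_c) \subseteq V_1$ or $V(A_c) \subseteq V_2$ for the cut formula $A_c$; since $V_1$ and $V_2$ are unaffected by the inversion and $A_c$ itself is unchanged, the same branch of the algorithm is taken in $\pi'$, and the interpolant is formed by the same connective ($\lor$ or $\wedge$) from the interpolants of the IH'd premises.

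The remaining logical rules ($(R\wedge)$, $(L\vee)$, $(L\wedge_i)$, $(R\vee_i)$, $(L\neg)$, $(R\neg)$ when not acting on $\nu$) are handled uniformly: apply the IH to each premise and rebuild the rule. The hard part, beyond the contraction double-application, will be keeping the ancestor tracking rigorous so that at every IH call it is unambiguous which $\neg A$-occurrence is being inverted, and verifying in each case that the recursive definition of $\calM$ produces syntactically identical outputs for $\pi$ and $\pi'$. Cases (2)--(4) of the lemma are entirely symmetric, obtained by swapping antecedent/succedent and/or left/right of the partition, together with using $(L\neg)$ in place of $(R\neg)$ as the introducing rule for $\neg A$.
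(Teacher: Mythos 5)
Your overall strategy---induction on the proof, tracking the distinguished occurrence of $\neg A$, taking the premise of the $(R\neg)$ inference that introduces it, and commuting the inversion with every other rule while observing that the Maehara recursion and the monochromaticity of cuts are unaffected because $V(\Gamma_1\cup\Delta_1)$ and $V(\Gamma_2\cup\Delta_2)$ do not change---is exactly the paper's argument. There is, however, one concrete error: you dismiss the axiom case as vacuous on the grounds that $\neg A$ is not an atom, but the calculus of Table~\ref{table: LK} has axioms $A \seq A$ for \emph{arbitrary} formulas $A$, so $\neg A \seq \neg A$ can occur as a leaf, and it is precisely the base case of the induction. The paper handles it by replacing, for instance, the axiom $\neg A ; \overset{\neg A}{\Longrightarrow} ; \neg A$ with the axiom $; A \overset{\neg A}{\Longrightarrow} A ;$ followed by an $(L\neg)$ inference, yielding $\Mseq[\neg A]{\neg A}{A}{}{}$ with the same interpolant (and similarly for $; \neg A \overset{\top}{\Longrightarrow} ; \neg A$). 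Without this case your induction has no base whenever an ancestor of the inverted occurrence reaches an axiom, which is the generic situation.

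A second, quantitative issue: handling $(Rc)$ by two successive applications of the induction hypothesis does not yield $|\pi'| \leq 2|\pi|$. Inverting the premise $\sigma$ twice only gives a bound of $4|\sigma|$, and nested contractions on ancestors of the distinguished occurrence compound this exponentially. To preserve the linear bound you should either perform the inversion as a single global pass over all ancestors of $\nu$ (each ancestor is replaced by the shorter formula $A$, and each axiom or weakening leaf contributes at most one extra inference), or strengthen the induction hypothesis so that a whole multiset of occurrences is inverted simultaneously. The qualitative content of the lemma, namely the existence of $\pi'$ with $\calM(\pi')=\calM(\pi)$, survives your treatment; only the stated size bound does not.
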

\begin{proof}[Proof in the appendix]
\end{proof}

\begin{lemma}\label{lem.LKatLKlitEquivInt}
If $\pi$ is an {\LKlit} proof of
 $\Mseq{\Gamma_1}{\Gamma_2}{\Delta_1}{\Delta_2}$
 then there is an {\LKat} proof $\pi'$ of
 $\Mseq{\Gamma_1}{\Gamma_2}{\Delta_1}{\Delta_2}$ with
 $\CNF(\calM(\pi')) = \CNF(\calM(\pi))$ and $|\pi'| \leq 2|\pi|$. 
\end{lemma}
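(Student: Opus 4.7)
The plan is induction on the size of $\pi$. Axioms are already in \LKat{}, and the cases of propositional inferences and cuts already in \LKat{} (cuts on atoms, $\bot$, or $\top$) are routine: apply the IH to each premise and reapply the rule. This preserves the Maehara interpolant up to CNF-equivalence, since $\calM$ only combines premise interpolants via $\vee$ and $\wedge$, and $\CNF(A\vee B) = \CNF(A)\times\CNF(B)$ as well as $\CNF(A\wedge B) = \CNF(A)\cup\CNF(B)$ depend only on $\CNF(A)$ and $\CNF(B)$. The only genuinely new case is when $\pi$ ends with a cut on a negated literal $\neg p$.

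By Remark~\ref{Rem: monochromatic atomic cut} we may assume this cut is monochromatic; WLOG $p\in V_1$, so $\pi$ ends with
\begin{prooftree}
\AxiomC{$\pi_1 : \Mseq[C]{\Gamma_1}{\Gamma_2}{\Delta_1, \neg p}{\Delta_2}$}
\AxiomC{$\pi_2 : \Mseq[D]{\Gamma_1, \neg p}{\Gamma_2}{\Delta_1}{\Delta_2}$}
\BinaryInfC{$\Mseq[C \vee D]{\Gamma_1}{\Gamma_2}{\Delta_1}{\Delta_2}$}
\end{prooftree}
The IH yields \LKat{} proofs $\pi_1', \pi_2'$ of the premises with $\CNF(\calM(\pi_i')) = \CNF(\calM(\pi_i))$. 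Cases (2) and (4) of Lemma~\ref{lem.neginversion} then convert these into \LKat{} proofs $\tilde\pi_1$ of $\Mseq{\Gamma_1, p}{\Gamma_2}{\Delta_1}{\Delta_2}$ and $\tilde\pi_2$ of $\Mseq{\Gamma_1}{\Gamma_2}{\Delta_1, p}{\Delta_2}$ with $\calM(\tilde\pi_i) = \calM(\pi_i')$. An atomic cut on $p$ then produces the desired $\pi'$ with Maehara interpolant $\calM(\tilde\pi_2) \vee \calM(\tilde\pi_1)$. The symmetric case $p\in V_2$ is analogous, with conjunction replacing disjunction throughout.

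CNF equivalence then follows from two observations. First, since $\CNF(A\vee B) = \CNF(A)\times\CNF(B)$, the IH-equality of CNFs on subproofs lifts through $\vee$. Second, $\CNF$ commutes disjunctions by Observation~\ref{obs.CNF}(\ref{obs.CNF.comm}), which absorbs the order swap between the original $C\vee D$ and the new interpolant $\calM(\tilde\pi_2)\vee\calM(\tilde\pi_1)$. The conjunction case uses the analogous property of $\wedge$.

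The main obstacle I anticipate is the size bound $|\pi'| \leq 2|\pi|$. Naively composing the IH's factor of $2$ with the factor of $2$ from Lemma~\ref{lem.neginversion} yields a factor of $4$ at each negated-literal cut, which compounds under nesting. To recover the stated bound I would avoid recursing through the IH on the inverted subproofs, instead performing a single bottom-up pass: process cuts on $\neg p$ from the topmost occurrence downward, applying Lemma~\ref{lem.neginversion} once per such cut and immediately replacing it by a cut on $p$. Since each symbol of $\pi$ is then touched by at most one inversion, the total size is at most $2|\pi|$; verifying that this pass preserves $\CNF(\calM(\pi))$ reduces once again to the commutativity argument above.
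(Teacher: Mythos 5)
Your proposal is correct and follows essentially the same route as the paper: convert each cut on a negated literal $\neg p$ into an atomic cut on $p$ by applying the appropriate cases of Lemma~\ref{lem.neginversion} to the two premise subproofs, observe that the resulting interpolant differs from the original only by a commutation of $\vee$ (resp.\ $\wedge$), which $\CNF$ absorbs via Observation~\ref{obs.CNF}, and handle all such cuts in a single topmost-down pass rather than by recursive doubling. The paper organizes this as exactly such a pass (without dwelling on the size accounting), so your final formulation coincides with its argument.
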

\begin{proof}
As all the cuts in $\pi$ are on literals, we apply Lemma~\ref{lem.neginversion} to each instance of a cut rule on negative literals in the proof to obtain a proof with only atomic cuts. Take a topmost instance of a cut rule where the cut formula is a negative literal. Call this subproof $\sigma$. Based on the position of the cut formula in the Maehara partition, $\sigma$ either looks like
\begin{prooftree}
\AxiomC{$\sigma_1$}
\noLine
 \UnaryInfC{$\Gamma_1; \Gamma_2 \overset{I}{\Longrightarrow} \Delta_1, \neg p ; \Delta_2$}
\AxiomC{$\sigma_2$}
\noLine
 \UnaryInfC{$\Gamma_1, \neg p; \Gamma_2 \overset{J}{\Longrightarrow} \Delta_1 ; \Delta_2$}
 \BinaryInfC{$\Gamma_1; \Gamma_2 \overset{I \vee J}{\Longrightarrow} \Delta_1 ; \Delta_2$}    
\end{prooftree}
or
\begin{prooftree}
\AxiomC{$\sigma_1$}
\noLine
 \UnaryInfC{$\Gamma_1; \Gamma_2 \overset{I}{\Longrightarrow} \Delta_1 ; \neg p, \Delta_2$}
\AxiomC{$\sigma_2$}
\noLine
 \UnaryInfC{$\Gamma_1 ; \Gamma_2, \neg p \overset{J}{\Longrightarrow} \Delta_1 ; \Delta_2$}
 \BinaryInfC{$\Gamma_1; \Gamma_2 \overset{I \wedge J}{\Longrightarrow} \Delta_1 ; \Delta_2$}    
\end{prooftree} 
where $p$ is an atom and the cut formula is $\neg p$.
In the first case, we apply Lemma \ref{lem.neginversion} to $\sigma_1$ and $\sigma_2$ to get $\sigma_1'$ and $\sigma_2'$ and use the cut to get the end sequent of the proof $\sigma$:
\begin{prooftree}
\AxiomC{$\sigma_1'$}
\noLine
 \UnaryInfC{$\Gamma_1, p; \Gamma_2 \overset{I}{\Longrightarrow} \Delta_1 ; \Delta_2$}
\AxiomC{$\sigma_2'$}
\noLine
 \UnaryInfC{$\Gamma_1; \Gamma_2 \overset{J}{\Longrightarrow} \Delta_1 , p; \Delta_2$}
 \BinaryInfC{$\Gamma_1; \Gamma_2 \overset{I \vee J}{\Longrightarrow} \Delta_1 ; \Delta_2$}    
\end{prooftree}
Now, this cut is atomic, the cut formula is $p$, and the interpolant remains
 the same up to  commutativity of conjunction. The other case is similar.
We apply the same process to every cut on a negated literal in $\pi$, resulting in a proof with only atomic cuts.
\end{proof}

The main technical lemma of Section~\ref{sec.comp_prun_subs},  
 Lemma~\ref{lem.ce_ipol}, will be shown by carrying out a cut-elimination
 argument on a carefully chosen class of proofs. This class on
 the one hand is large enough to permit an embedding of all pruned
 interpolants, but on the other hand small enough to exhibit a very nice
 behaviour during cut-elimination: the interpolant of the reduced proof is
 subsumed by the interpolant of the original proof.
We now proceed to introduce this class of proofs, called ``tame" proofs,
 which is a new invariant for cut-elimination.

\begin{definition}
We say that a cut is of type R if it is of the form
\[
\infer[\mathrm{cut}]{\Mseq{\Gamma_1}{\Gamma_2}{\Delta_1}{\Delta_2}}{
  \Mseq{\Gamma_1}{\Gamma_2}{\Delta_1}{\Delta_2, C}
  &
  \Mseq{\Gamma_1}{\Gamma_2, C}{\Delta_1}{\Delta_2}
}
\]
and of type L if it is of the form
\[
\infer[\mathrm{cut}]{\Mseq{\Gamma_1}{\Gamma_2}{\Delta_1}{\Delta_2}}{
  \Mseq{\Gamma_1}{\Gamma_2}{\Delta_1, C}{\Delta_2}
  &
  \Mseq{\Gamma_1, C}{\Gamma_2}{\Delta_1}{\Delta_2}
}
\]
\end{definition}

Our cut-elimination argument will work with proofs all of whose cuts
 are of type $R$.
\begin{definition}\label{def: LR}
We say that an axiom is of type L/L if it is of the form
$\Mseq{A}{}{A}{}$, of type L/R if it is of the form $\Mseq{A}{}{}{A}$,
of type R/L if it is of the form $\Mseq{}{A}{A}{}$, and of type R/R
if it is of the form $\Mseq{}{A}{}{A}$.
\end{definition}

\begin{definition}
We say that an axiom occurrence $A\seq A$ in a proof $\pi$ is of type 
 $\Omega$ if both occurrences of $A$ are ancestors of cut formulas in
 $\pi$.
\end{definition}

Every cut $c$ in a proof has a {\em left subproof} and a {\em right subproof}:
 the subproof whose end-sequent is the left, or respectively: right, premise
 sequent of $c$.
\begin{definition}\label{def.tame}
An {\LKm} proof $\pi$ is called {\em tame} if
\begin{enumerate}
\item\label{def.tame.Omega} $\pi$ does not contain axioms of type $\Omega$ and
\item\label{def.tame.RR} every cut in $\pi$ has a subproof
 in which all axioms in which an ancestor of the cut formula is active
 are of type R/R.
\end{enumerate}
\end{definition}

\begin{definition}
A clause set $\calA$ is called {\em pruned} if no atom occurs both positively
 and negatively in $\calA$ and $\calA$ does not contain the literal $\top$.
\end{definition}
For instance, none of the following clause sets are pruned: 
\[\{\{p\}, \{r, \neg p\}\} \qquad \{\{\top, p\}\} \qquad \{\{p, \neg p\}, \{r\}\}
\]
%

\begin{restatable}{lemma}{LemPruneClauseSet}
\label{lem.prune_clause_set}
Let $\calA$ be a clause set in some language $L$.
Let 
\[
L_\mathrm{D} = \{ p \in L \mid \text{$p$ occurs both positively and negatively
 in $\calA$} \}.
 \]
Then, there is a pruned clause set $\calA^*$ in the language $L\setminus L_\mathrm{D}$
 such that
\begin{enumerate}
\item \label{prune1} $I\models \calA$ implies $I \models \calA^*$.
\item \label{prune2} 
$I' \models \calA^*$ implies that there is an extension of $I'$ to an  $L$ interpretation $I$ such that $I \models \calA$.
\end{enumerate}
\end{restatable}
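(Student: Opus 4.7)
The plan is to construct $\calA^*$ by iteratively eliminating the atoms of $L_\mathrm{D}$ via resolution-based variable elimination, after first discarding every clause containing the literal $\top$. Let $\calA_0$ be $\calA$ with every clause containing $\top$ deleted; this preserves the set of models, since such clauses are trivially true. Enumerating $L_\mathrm{D} = \{p_1, \ldots, p_n\}$, at stage $k+1$ form $\calA_{k+1}$ from $\calA_k$ by (i) dropping every clause that mentions $p_{k+1}$ or $\neg p_{k+1}$ and (ii) adding every non-tautological resolvent $C \cup D$ with $C \cup \{p_{k+1}\} \in \calA_k$ and $D \cup \{\neg p_{k+1}\} \in \calA_k$. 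Set $\calA^* := \calA_n$. By construction, neither $\top$ nor any atom of $L_\mathrm{D}$ occurs in $\calA^*$; and since each $q \in L \setminus L_\mathrm{D}$ appears in $\calA$ with only one polarity and variable elimination only forms unions of (sub)clauses, $q$ still appears with only one polarity in $\calA^*$. Hence $\calA^*$ is pruned and lies in the language $L \setminus L_\mathrm{D}$.

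To establish (\ref{prune1}) and (\ref{prune2}) I would prove, by induction on the stages, the following single-step claim: for any clause set $\calB$ and atom $p$, writing $\calB'$ for the result of one variable-elimination step on $p$, (a) the restriction to the remaining atoms of any model of $\calB$ is a model of $\calB'$; and (b) any model $I'$ of $\calB'$ extends by some value of $p$ to a model of $\calB$. Part (a) is routine: each resolvent $C \cup D$ added to $\calB'$ comes from $C \cup \{p\}, D \cup \{\neg p\} \in \calB$, and any $I \models \calB$ makes exactly one of $p, \neg p$ true, thereby satisfying one of $C$ or $D$; dropped tautological resolvents are satisfied vacuously, and the non-resolvent clauses of $\calB'$ already lie in $\calB$.

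Part (b) is the crux. Given $I' \models \calB'$, I claim that at least one of the two extensions $I_+ := I'[p \mapsto \text{true}]$ and $I_- := I'[p \mapsto \text{false}]$ satisfies $\calB$. If neither did, there would be clauses $E_+, E_- \in \calB$ falsified by $I_+$ and $I_-$ respectively; a quick case analysis shows that $E_+ = C \cup \{\neg p\}$ with $I' \not\models C$ and $E_- = D \cup \{p\}$ with $I' \not\models D$, since any clause of $\calB$ not mentioning $p$ lies already in $\calB'$ and hence is satisfied by $I'$. But then the resolvent $C \cup D$ either lies in $\calB'$, contradicting $I' \not\models C$ and $I' \not\models D$, or is tautological, in which case $I'$ still satisfies it and so satisfies $C$ or $D$---again a contradiction. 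Iterating (a) through the stages gives (\ref{prune1}), and iterating (b), together with the remark that the clauses removed in the initial $\top$-cleanup are trivially satisfied by any extension, gives (\ref{prune2}).
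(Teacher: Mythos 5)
Your construction is essentially the paper's: Davis--Putnam variable elimination of the atoms in $L_\mathrm{D}$, one at a time, after discarding the $\top$-clauses, with a single-step claim giving model restriction in one direction and model extension in the other. Your case analysis for the extension step (the crux of part~(2)) is, if anything, tighter than the paper's informal ``repeat this process'' description. The one point you must add---and which the paper handles explicitly---is the treatment of clauses containing both $p_{k+1}$ and $\neg p_{k+1}$. As written, such a clause is dropped in step~(i) but can still feed step~(ii): decomposing $\{p,\neg p,q\}$ as $C\cup\{p\}$ forces $\neg p\in C$, so its resolvent with, say, $\{\neg p, r\}$ is $\{\neg p,q,r\}$, which is non-tautological (so your filter does not remove it) yet still mentions $p$. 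It would then survive into $\calA^*$, violating both the language condition $L\setminus L_\mathrm{D}$ and prunedness. The fix is one sentence: before resolving on $p_{k+1}$, delete every clause of $\calA_k$ containing both $p_{k+1}$ and $\neg p_{k+1}$; this is sound because such clauses are tautologies, so neither direction of your single-step claim is affected. With that insertion your proof coincides with the paper's.
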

\begin{proof}[Proof in the appendix]
\end{proof}
The above lemma is shown essentially by computing the closure $\calA'$
 of $\calA$ under resolution and then obtaining $\calA^*$ from $\calA'$
 by deleting all clauses that contain an atom in $L_\mathrm{D}$, including
 all tautologies.
Thus, pruned clause sets allow for a simple standardised representation of
 a formula or a clause set.

The following useful Lemma essentially combines existing interpolants with
 a conjunction.
\begin{lemma}\label{lem.interpolant_conjunction_construction}
Let $A \impl B$ be a valid formula, let the clause set
 $\calC=\{C_i \mid 1 \leq i \leq n\}$ be an interpolant of $A\impl B$.
For $i=1,\ldots,n$ let $\pi_i: \Mseq{A}{}{}{C_i}$ be an {\LKat} proof.
Then, there is an {\LKat} proof $\psi$ of $\Mseq{A}{}{}{B}$ 
 all of whose cuts are of type R with $\CNF(\calM(\psi))=\CNF(\bigwedge_{i=1}^n \calM(\pi_i))$.
Moreover, if $\calC$ is a pruned clause set,  then $\psi$ is tame.
\end{lemma}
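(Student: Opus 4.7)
The plan is to build $\psi$ by composing the given proofs $\pi_1, \ldots, \pi_n$ at the bottom through a cascade of type R cuts on $C_1, \ldots, C_n$. Since $\calC$ is an interpolant of $A\impl B$, the implication $\bigwedge_{i=1}^n C_i \impl B$ is valid, so by completeness of cut-free $\LK$ there is an $\LKminus$ proof $\tau$ of $\Mseq{}{C_1,\ldots,C_n}{}{B}$ in which the Maehara partition places every formula on the right; by Observation~\ref{observation}.(\ref{observation.right}) we then have $\calM(\tau) \equiv \top$. Write $\tau^w$ for $\tau$ weakened on the left side with $A$, and $\pi_k^w$ for $\pi_k$ weakened on the right side with $C_{k+1}, \ldots, C_n$ in the antecedent and $B$ in the succedent. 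Set $\psi_0 := \tau^w$ and, for $k \geq 1$, let $\psi_k$ be the proof
\[
\infer[\mathrm{cut}]{\Mseq{A}{C_{k+1},\ldots,C_n}{}{B}}{
   \Mseq{A}{C_{k+1},\ldots,C_n}{}{B,C_k}
   &
   \Mseq{A}{C_k,C_{k+1},\ldots,C_n}{}{B}
}
\]
whose left premise is proved by $\pi_k^w$ and right premise by $\psi_{k-1}$. Then $\psi := \psi_n$ proves $\Mseq{A}{}{}{B}$, and the added cuts $c_1, \ldots, c_n$ are of type R by construction since each cut formula $C_k$ sits on the right side of the Maehara partition.

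The interpolant is then easy to check: weakenings preserve interpolants, so $\calM(\tau^w)\equiv\top$ and $\calM(\pi_k^w)=\calM(\pi_k)$; each type R cut takes the conjunction of its premise interpolants, so induction yields $\calM(\psi) \equiv \bigwedge_{i=1}^n \calM(\pi_i) \wedge \top$, and Observation~\ref{obs.CNF} then gives $\CNF(\calM(\psi)) = \CNF(\bigwedge_{i=1}^n \calM(\pi_i))$.

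For tameness under the pruning hypothesis, I verify the two clauses of Definition~\ref{def.tame} via ancestor analysis. For clause~(\ref{def.tame.Omega}): each axiom inherited from a $\pi_k^w$ has its antecedent tracing down to $A$ in the end-sequent of $\psi$, while each axiom inherited from $\tau$ has its succedent tracing down to $B$; in both cases one of the two occurrences is not an ancestor of any cut formula, so no axiom is of type~$\Omega$. For clause~(\ref{def.tame.RR}), I choose for each outer cut $c_k$ the right subproof $\psi_{k-1}$ as witness: a $C_k$-ancestor that enters a nested $\pi_j^w$ with $j<k$ is killed at the weakening step that introduced $C_k$ into $\pi_j^w$, and the surviving $C_k$-ancestors sit in the antecedent context all the way down the cascade until they reach $\tau^w$ and then $\tau$, whose axioms are all of type R/R since every formula of $\tau$ lies on the right side of its partition. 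The main obstacle, and where the pruning hypothesis is used essentially, is extending the analysis to the internal atomic cuts of each $\pi_k$: pruning guarantees that each atom of $\calC$ has a single polarity, which together with Remark~\ref{Rem: monochromatic atomic cut} lets us align the Maehara partitions of the $\pi_k$'s with that of $\psi$, keeping the internal cuts of type R and their cut-formula ancestors clear of mixed-type axioms.
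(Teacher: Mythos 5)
There is a genuine gap: your cascade cuts on the whole clauses $C_k = \ell_{k1}\vee\cdots\vee\ell_{kk_k}$, and a cut whose cut formula is a disjunction of literals is neither atomic nor a cut on a literal as soon as $k_k\geq 2$, so the proof $\psi_n$ you construct is not an {\LKat} proof. The atomicity requirement is not cosmetic: this lemma is used to seed the cut-elimination argument of Lemma~\ref{lem.ce_ipol} (via Lemma~\ref{lem.printerpolant_LKatproof}) and the completeness proof of Theorem~\ref{Thm: LK with atomic cuts}, both of which need the result inside {\LKat}. This is exactly the difficulty the paper's construction is designed to circumvent: rather than cutting on the $C_i$ directly, it forms $\calC^+=\{C_i \mid 1\leq i\leq n\}$ together with $\calC^-$, the clauses obtained by choosing one dual literal $\overline{\ell_{ij_i}}$ from each $C_i$ (i.e., the CNF of $\neg\calC$), observes that $\calC^+\cup\calC^-$ is unsatisfiable, takes a resolution refutation of it, and turns each resolution step into a cut on a single literal between split sequents $\Mseq{A}{}{}{\ell_{i1},\ldots,\ell_{ik_i}}$ and $\Mseq{}{\ell_{1j_1},\ldots,\ell_{nj_n}}{}{B}$; the former leaves are replaced by the $\pi_i$, the latter by cut-free proofs, and Lemma~\ref{lem.LKatLKlitEquivInt} then converts the literal cuts to atomic ones. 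Your interpolant computation would survive this change (all these cuts still have their cut formulas on the right of the Maehara partition, so they still contribute conjunctions of the $\calM(\pi_i)$ with $\top$'s from the right-hand leaves), but as written your construction lives in full $\LK$, not {\LKat}, and no step of the proposal reduces the compound cuts on the $C_k$.

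A secondary weakness is the tameness argument. Your treatment of the outer cuts is in the right spirit, but the closing sentence about ``aligning the Maehara partitions \ldots{} keeping the internal cuts of type R'' is where the real work would have to happen and it is not carried out; prunedness does not by itself control the internal cuts of the given $\pi_k$. In the paper, prunedness is used for two concrete facts: no $C_i$ is tautological (so the leaves above the $\pi_i$ contribute no axioms of type $\Omega$) and no clause of $\calC^-$ contains two dual literals (so the cut-free proofs of $\Mseq{}{\ell_{1j_1},\ldots,\ell_{nj_n}}{}{B}$ contribute no axioms of type $\Omega$); condition~(\ref{def.tame.RR}) then holds because every formula occurrence in those right-hand subproofs lies on the right of the partition, forcing the relevant axioms to be of type R/R.
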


\begin{proof}
As $\calC=\bigwedge_{i=1}^n \bigvee_{j=1}^{k_i} \ell_{ij}$ is an interpolant of
 $A \to B$, both $A \Rightarrow \calC$ and $\calC \Rightarrow B$ are valid sequents.
Take $\neg \calC=\neg \bigwedge_{i=1}^n \bigvee_{j=1}^{k_i} \ell_{ij}$ which is 
 logically equivalent to $\bigvee_{i=1}^{n} \bigwedge_{j=1}^{k_i} 
 \overline{\ell_{ij}}$ and to the clause set $\CNF(\bigvee_{i=1}^{n} \bigwedge_{j=1}^{k_i}\overline{\ell_{ij}})$.
Define

\begin{center}
$ \begin{cases}
\calC^+= \{\{\ell_{i1}, \dots, \ell_{ik_i}\} \mid 1 \leq i \leq n\} \\ 
\calC^-= \{\{\overline{\ell_{1j_1}}, \dots, \overline{\ell_{nj_n}}\} \mid 1 \leq j_i \leq k_i\} 
\end{cases} $
\end{center}

The set of clauses $\calC^+ \cup \calC^-$ is unsatisfiable. 
Let $F$ be a resolution refutation with these clauses as the initial clauses and 
 $\emptyset$ as the conclusion of the refutation.
Transform $F$ to a split proof in {\LKlit} as follows.
Take the initial clauses; if a clause is in $\calC^+$ then it is of the form
 $C_i=\{\ell_{i1}, \dots, \ell_{ik_i}\}$ for some $1 \leq i \leq n$.
And if an initial clause is in $\calC^-$, then it is of the form
 $D_{j_i}= \{ \overline{\ell_{1j_1}}, \dots, \overline{\ell_{nj_n}}\}$ for
  some $1 \leq j_i \leq k_i$.
Replace these clauses with the following split sequents:
\[
C_i \;\; \text{with} \;\; A; \Rightarrow ; \ell_{i1}, \dots, \ell_{ik_i} \quad D_{j_i} \;\; \text{with} \;\; ; \ell_{1j_1}, \dots, \ell_{nj_n} \Rightarrow ; B
\]
Now, suppose a resolution rule is applied on a literal $\ell$ in $F$
\begin{prooftree}
 \AxiomC{$M, \ell$} 
  \AxiomC{$\overline{\ell}, N$} 
  \BinaryInfC{$M , N$}
\end{prooftree}
We replace this rule with a cut on the literal $\ell$ where the premises of the
 cut rule are the corresponding split sequents of $M, \ell$ and $\ell, N$.
After making all these replacements, we obtain a derivation $\psi_1$ of
 $A ; \Rightarrow ;B$, where the
 split sequents $A; \Rightarrow ; \ell_{i1}, \dots, \ell_{ik_i}$ and $ ; \ell_{1j_1}, \dots, \ell_{nj_n} \Rightarrow ; B$, for all $1 \leq i \leq n$ and $1 \leq j_i \leq k_i$ appear as leaves of $\psi_1$. 
Note that in $\psi_1$ inferences of the following form appear:
\begin{center}
\AxiomC{$\pi_1$}
\noLine
\UnaryInfC{$\Gamma_1; \Gamma_2 \Rightarrow \Delta_1; \ell, \Delta_2$}
\AxiomC{$\pi_2$}
\noLine
\UnaryInfC{$\Sigma_1; \ell, \Sigma_2 \Rightarrow \Lambda_1; \Lambda_2$}
\BinaryInfC{$\Gamma_1, \Sigma_1; \Gamma_2, \Sigma_2 \Rightarrow \Delta_1, \Lambda_1; \Delta_2, \Lambda_2$} 
\DisplayProof
\end{center}
for some literal $\ell$. We use these inferences as abbreviations for 
\small\begin{center}
\AxiomC{$\pi_1$}
\noLine
\UnaryInfC{$\Gamma_1; \Gamma_2 \Rightarrow \Delta_1; \ell, \Delta_2$}
\doubleLine
\UnaryInfC{$\Gamma_1, \Sigma_1; \Gamma_2, \Sigma_2 \Rightarrow \Delta_1, \Lambda_1; \ell, \Delta_2, \Lambda_2$}
\AxiomC{$\pi_2$}
\noLine
\UnaryInfC{$\Sigma_1; \ell, \Sigma_2 \Rightarrow \Lambda_1; \Lambda_2$}
\doubleLine
\UnaryInfC{$\Gamma_1, \Sigma_1; \ell, \Gamma_2, \Sigma_2 \Rightarrow \Delta_1, \Lambda_1; \Delta_2, \Lambda_2$}
\BinaryInfC{$\Gamma_1, \Sigma_1; \Gamma_2, \Sigma_2 \Rightarrow \Delta_1, \Lambda_1; \Delta_2, \Lambda_2$} 
\DisplayProof
\end{center}

\normalsize \noindent where the double lines mean applying the left and right weakening rules as often as needed. For the curious reader, this means that we can replace each \emph{context-splitting} cut (a cut rule of the first form) with a combination of weakening rules and \emph{context-sharing} cut rules, to get a derivation in $\LKlit$.
Now, replace each leaf of $\psi_1$ of the form  $A ; \Rightarrow ; \ell_{i1}, \dots, \ell_{ik_i}$
 by the proof $\pi_i$. 
Moreover, replace each leaf of $\psi_1$ of the form  $ ; \ell_{1j_1}, \dots, \ell_{nj_n} \Rightarrow ; B$ by a cut-free proof $\sigma_{j_1,\ldots,j_n}$ of it which exists
 since it is a valid sequent. 
The result of making all these replacements yields an {\LKlit} proof $\psi_2: A; \Rightarrow ; B$. 
Now, we investigate the Maehara interpolant of $\psi_2$. Note that in each cut rule, the cut formula is on the right-hand side of the Maehara partition. The initial sequents of $\psi_2$ are either the initial sequents of $\pi_i$'s or initial sequents of the cut-free proofs of the sequents 
$; \ell_{1j_1}, \dots, \ell_{nj_n} \Rightarrow ; B$.
For the latter, by Observation \ref{observation}, the interpolant is $\top$.
On the other hand, the proof $\pi_i$ has the interpolant $\calM(\pi_i)$.
Hence, $\calM(\psi_2)$ is the conjunction of $\bigwedge_{i=1}^n \calM(\pi_i)$ with several $\top$'s.
The final step is  applying Lemma \ref{lem.LKatLKlitEquivInt} to obtain a proof $\psi: A; \Rightarrow ; B$ in $\LKat$. We have
\[
\calM(\psi_1)=\calM(\psi_2)=\calM(\psi)
\]
and we have $\CNF(\calM(\psi))= \CNF(\bigwedge_{i=1}^n \calM(\pi_i))$.
All cuts in the proofs $\psi_1$, $\psi_2$, and $\psi$ are of type R.
Since $\calC$ is a pruned clause set, no $C_i$ is a tautology, so no $\pi_i$
 contains an axiom of type $\Omega$.
Similarly, since $\calC$ is a pruned clause set, no $\{ \overline{\ell_{1,j_1}}, \ldots, \overline{\ell_{n,j_n}} \}$ contains two dual literals, so no
 $\sigma_{j_1,\ldots,j_n}$ contains an axiom of type $\Omega$.
Therefore tameness condition~(\ref{def.tame.Omega}) is satisfied.
Since all formula occurrences in $\sigma_{j_1,\ldots,j_n}$ are on the right-hand
 side of the split sequents, all axioms on the right-hand side of a cut on
  $\ell_{i,j_i}$ in which an ancestor of the cut formula is active are of type R/R.
Therefore tameness condition~(\ref{def.tame.RR}) is satisfied.

\end{proof}

\begin{remark}
Up to associativity, commutativity, idempotence, and unit elimination of $\wedge$ we have $\calM(\psi)=\bigwedge_{i=1}^n \calM(\pi_i)$ in Lemma \ref{lem.interpolant_conjunction_construction}. 
\end{remark}

\section{Completeness up to pruning and subsumption}\label{sec.comp_prun_subs}

Even though, as we have seen in 
 Proposition~\ref{prop.Maehara_cutfree_incomplete}, Maehara interpolation
 in {\LKminus} is incomplete it is possible to obtain also positive
 results for {\LKminus}.
In this section we will prove such a positive result: we will show that, if we
 restrict our attention to {\em pruned interpolants}, then
 Maehara interpolation is complete up to subsumption.
The proof strategy consists in carrying out a cut-elimination procedure
 on tame proofs with monochromatic cuts and showing that, in this setting,
 the interpolants have a very nice behaviour: the interpolant of the reduced
 proof is subsumed by the interpolant of the original proof.
Applying this cut-elimination on a suitable chosen proof with atomic cuts
 will yield the desired result.

\begin{definition}
A clause set $\calA$ {\em subsumes} a clause set $\calB$, in symbols
 $\calA \subs \calB$, if for all $B \in \calB$ there is an $A \in \calA$
 s.t.\ $A\subseteq B$.
\end{definition}
For instance, $\{\{p\}\}$ subsumes $\{\{p,q\}, \{p\}\}$.
Subsumption is one of the most useful and one of the most thoroughly studied
 mechanisms for the detection and
 elimination of redundancy in automated deduction.
Note that, if $\calA \subs \calB$ then $\calA \models \calB$.
In this sense, subsumption is a restricted form of implication.

\begin{restatable}{observation}{ObsSubs}
\label{obs.subs}
Let $\calA, \calB, \calC$ be clause sets.
\begin{enumerate}
\item\label{obs.subs.subset} If $\calA \supseteq \calB$ then $\calA \subs \calB$.
\item\label{obs.subs.trans} If $\calA \subs \calB$ and $\calB \subs \calC$
  then $\calA \subs \calC$.
\item\label{obs.subs.compat_union} If $\calA \subs \calB$ then $\calA \cup \calC \subs \calB \cup \calC$.
\item\label{obs.subs.compat_times} If $\calA \subs \calB$ then $\calA \times \calC \subs \calB \times \calC$.
\item\label{obs.subs.distand} $(\calA \times \calB) \cup \calC \subs (\calA \cup \calC)\times (\calB \cup \calC)$
\end{enumerate}
\end{restatable}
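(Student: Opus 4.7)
The plan is to verify each of the five claims by direct unpacking of the definition of $\subs$: to establish $\calA \subs \calB$ I must produce, for each $B \in \calB$, a clause $A \in \calA$ with $A \subseteq B$. All items then reduce to short set-theoretic arguments; the one genuinely requiring a case analysis is (\ref{obs.subs.distand}).

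For (\ref{obs.subs.subset}), given $B \in \calB$, the inclusion $\calB \subseteq \calA$ lets me take $A := B$, so $A \subseteq B$ holds trivially. For (\ref{obs.subs.trans}), given $C \in \calC$, applying $\calB \subs \calC$ yields $B \in \calB$ with $B \subseteq C$, and applying $\calA \subs \calB$ yields $A \in \calA$ with $A \subseteq B$; transitivity of set inclusion closes the argument. For (\ref{obs.subs.compat_union}), given $X \in \calB \cup \calC$, I split on whether $X \in \calB$ (use the hypothesis to find a subsumer in $\calA \subseteq \calA \cup \calC$) or $X \in \calC$ (use $X$ itself, which now lies in $\calA \cup \calC$). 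For (\ref{obs.subs.compat_times}), a clause in $\calB \times \calC$ has the form $B \cup C$ with $B \in \calB$, $C \in \calC$; the hypothesis yields $A \in \calA$ with $A \subseteq B$, and then $A \cup C \in \calA \times \calC$ satisfies $A \cup C \subseteq B \cup C$.

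The only item requiring bookkeeping is (\ref{obs.subs.distand}). An element of $(\calA \cup \calC) \times (\calB \cup \calC)$ has the form $P \cup Q$ with $P \in \calA \cup \calC$ and $Q \in \calB \cup \calC$, and I case-split on the origins of $P$ and $Q$. If $P \in \calA$ and $Q \in \calB$, then $P \cup Q \in \calA \times \calB \subseteq (\calA \times \calB) \cup \calC$ subsumes itself. In each of the other three cases at least one of $P, Q$ lies in $\calC$; that clause is then in $(\calA \times \calB) \cup \calC$ and is a subset of $P \cup Q$, so it is a valid subsumer.

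I do not anticipate a substantive obstacle: every item is a one-line consequence of the definition, and the statement really functions as a toolbox of algebraic properties of $\subs$ to be invoked later. The only point demanding mild care is in (\ref{obs.subs.distand}), namely observing that when exactly one coordinate of the product comes from $\calC$, that coordinate already subsumes the whole clause, which is precisely the asymmetry that makes the right-hand side weaker than the left.
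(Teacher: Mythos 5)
Your proof is correct and follows essentially the same route as the paper's: each item is verified by directly unpacking the definition of $\subs$, with item (\ref{obs.subs.distand}) handled by the same case distinction on whether both coordinates of the product come from $\calA$ and $\calB$ (in which case the clause subsumes itself) or at least one comes from $\calC$ (in which case that coordinate is the subsumer). No gaps.
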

\begin{proof}[Proof in the appendix.]
\end{proof}

We proceed to set up the initial {\LKat} proof for our
 cut-elimination argument.
\begin{definition}
A pruned clause set $\calC$ is called {\em pruned interpolant} of a formula
 $A\impl B$ if there are no $C' \subset C \in \calC$ with $A\models C'$.
\end{definition}
So a pruned interpolant, in addition to being a pruned clause set,
 must not contain redundant literals in the sense of the above definition.
 
\begin{lemma}\label{lem.printerpolant_LKatproof}
Let $\calC$ be a pruned interpolant of an implication $A\impl B$.
Then there is a tame {\LKat} proof $\pi$ of $\Mseq{A}{}{}{B}$
 all of whose cuts are of type R with $\CNF(\calM(\pi)) = \calC$.
\end{lemma}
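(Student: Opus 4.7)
The plan is to reduce the lemma to Lemma~\ref{lem.interpolant_conjunction_construction}, which already combines clause-wise proofs into a tame {\LKat} proof of $\Mseq{A}{}{}{B}$ with all cuts of type R, provided one supplies, for each clause $C_i$ of the pruned interpolant $\calC = \{C_1, \ldots, C_n\}$, an {\LKat} proof $\pi_i$ of $\Mseq{A}{}{}{C_i}$ with $\CNF(\calM(\pi_i)) = \{C_i\}$. Given such $\pi_i$, the recursive clause $\CNF(A \wedge B) = \CNF(A) \cup \CNF(B)$ in Definition~\ref{def: CNF} yields $\CNF\bigl(\bigwedge_{i=1}^n \calM(\pi_i)\bigr) = \bigcup_i \CNF(\calM(\pi_i)) = \{C_1, \ldots, C_n\} = \calC$, and Lemma~\ref{lem.interpolant_conjunction_construction} then delivers the desired $\pi$ with $\CNF(\calM(\pi)) = \calC$. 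Since $\calC$ is pruned, the same lemma additionally guarantees tameness and the type-R condition for all cuts, finishing the argument.

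To construct each $\pi_i$, I would invoke that $\calC$ is an interpolant of $A \impl B$, so $A \models C_i$ and the sequent $A \seq \ell_{i1},\ldots,\ell_{ik_i}$ is classically valid; by completeness of {\LKminus} it admits a cut-free proof $\pi_i$, which is a fortiori an {\LKat} proof. I then view $\pi_i$ as a proof of the split sequent $\Mseq{A}{}{}{\ell_{i1},\ldots,\ell_{ik_i}}$. Since $\calC$ is pruned, no atom occurs both positively and negatively in $\calC$ and $\top$ does not appear in $\calC$, so the clause $C_i$ is non-tautological. Lemma~\ref{lem.interpolant_for_clause} therefore applies and yields $\CNF(\calM(\pi_i)) = \{M_i\}$ for some clause $M_i \subseteq C_i$.

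The only non-routine step, and the main obstacle, will be to upgrade the inclusion $M_i \subseteq C_i$ to equality. Soundness of Maehara interpolation (Theorem~\ref{thm: soundness of Maehara}) gives $A \models \calM(\pi_i)$, and by Observation~\ref{obs.CNF}/(\ref{obs.CNF.logeq}) the formula $\calM(\pi_i)$ is logically equivalent to the clause $M_i$, so $A \models M_i$. A strict inclusion $M_i \subsetneq C_i$ would then contradict directly the defining property of a pruned interpolant that no $C' \subsetneq C \in \calC$ satisfies $A \models C'$. Hence $M_i = C_i$, and combining all the $\pi_i$ via Lemma~\ref{lem.interpolant_conjunction_construction} completes the construction.
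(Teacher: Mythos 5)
Your proposal is correct and follows essentially the same route as the paper's own proof: obtain a cut-free proof $\pi_i$ of $\Mseq{A}{}{}{C_i}$ for each clause, apply Lemma~\ref{lem.interpolant_for_clause} to get $\CNF(\calM(\pi_i)) = \{M_i\}$ with $M_i \subseteq C_i$, use soundness of $\calM$ together with the defining non-redundancy condition of a pruned interpolant to force $M_i = C_i$, and then combine the $\pi_i$ via Lemma~\ref{lem.interpolant_conjunction_construction}. Your explicit appeal to completeness of {\LKminus} to obtain the $\pi_i$ is if anything slightly more careful than the paper's phrasing, but the argument is the same.
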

\begin{proof}
Let $\calC = \{ C_i \mid 1 \leq i \leq n \}$.
Since $\calC$ is an interpolant, $\models A\impl C_i$ for all $i=1,\ldots,n$.
Since $\calC$ is pruned, the $C_i$ are non-tautological, so
 Lemma~\ref{lem.interpolant_for_clause} yields an {\LKminus}
 proof $\pi_i$ of $\Mseq{A}{}{}{C_i}$ with $\CNF(\calM(\pi_i)) = \{ M_i \}$
 for some $M_i \subseteq C_i$.
Since $\calM(\pi_i)$ is an interpolant of $A\impl C$, we have
 $\models A \impl \calM(\pi_i)$.
By Observation~\ref{obs.CNF}/(\ref{obs.CNF.logeq}), we have
 $\models A \impl \CNF(\calM(\pi_i))$, i.e.,
 $\models A \impl M_i$.
Then $M_i = C_i$ because $M_i \subset C_i$ would contradict prunedness
 of $\calC$.
Then, by applying Lemma~\ref{lem.interpolant_conjunction_construction},
 we obtain a tame {\LKat} proof $\pi$ all of whose cuts are of type R with
 $\CNF(\calM(\pi)) = \CNF(\Land_{i=1}^n \calM(\pi_i)) = \calC$.
\end{proof}

\begin{definition}
We call a proof $\pi$ {\em w-reduced} if every weakening inference
 in $\pi$ occurs immediately below an axiom or another weakening
 inference.
\end{definition}
The point of the notion of w-reduced proofs is to facilitate the
 technical matters of the cut-elimination argument in our
 variant of {\LK}.
We now proceed to set up the termination measure for the
 cut-elimination procedure.
\begin{definition}
A formula occurrence in a proof $\pi$ is called {\em weak} if all its
 ancestors are introduced by weakening inferences.

Let $\mu$ be an occurrence of a formula $A$ in a proof $\pi$.
An occurrence $\mu'$ of $A$ is called {\em relevant} for $\mu$
 if $\mu'$ is an ancestor of $\mu$, $\mu'$ is not weak, and
 $\mu'$ is not in the conclusion sequent of a weakening inference.

The \emph{weight of a formula occurrence} $\mu$ in a proof $\pi$, written
 as $\weight(\mu)$, is the number of formula occurrences which are 
 relevant for $\mu$.

The \emph{weight of a cut $c$} in a proof $\pi$ is defined as
 $\weight(c) = \weight(\mu_\mathrm{l}) + \weight(\mu_\mathrm{r})$
 where $\mu_\mathrm{l}$ (resp. $\mu_\mathrm{r}$) is the occurrence of
 the cut formula in the left (resp. right) premise of $c$.
\end{definition}

\begin{definition}
The {\em degree of a cut $c$}, written as $\deg(c)$, is the logical
 complexity, i.e., the number of propositional connectives, of
 the cut formula of $c$.
\end{definition}

\begin{restatable}{lemma}{LemWReduce}
\label{lem.wreduce}
Let $G \in \{ \LKm, \LKminus \}$.
For every $G$ proof $\pi$ of a sequent $\Mseq{\Gamma_1}{\Gamma_2}{\Delta_1}{\Delta_2}$ there is a w-reduced $G$ proof $\pi'$ of
 $\Mseq{\Gamma_1}{\Gamma_2}{\Delta_1}{\Delta_2}$ such that 
\begin{enumerate}
\item\label{lem.wreduce.ipol} $\calM(\pi') = \calM(\pi)$,
\item\label{lem.wreduce.tame} if $\pi$ is tame then so is $\pi'$
\item\label{lem.wreduce.Rcut} if all cuts in $\pi$ are of type R then all cuts in $\pi'$ are of type R
\item\label{lem.wreduce.focc} for any formula occurrence $\mu$ in the end-sequent of $\pi$ and
 its corresponding formula occurrence $\mu'$ in the end-sequent
 of $\pi'$:
\begin{enumerate}
\item\label{lem.wreduce.focc.weight} $\weight(\mu') = \weight(\mu)$ and
\item\label{lem.wreduce.focc.ax} every axiom of $\pi'$ in which an ancestor of $\mu'$ is active is,
 up to weak formulas, also an axiom of $\pi$ in which an ancestor of $\mu$
 is active. 
\end{enumerate}
\end{enumerate}
\end{restatable}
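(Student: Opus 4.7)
The plan is to construct $\pi'$ by iteratively permuting weakenings upward. While there exists a weakening $w$ whose premise is the conclusion of a non-weakening, non-axiom rule $\mathcal{R}$, exchange them: if $\mathcal{R}$ is unary with premise $S_0$, replace the local configuration by first weakening $S_0$ and then applying $\mathcal{R}$; if $\mathcal{R}$ is binary (note that all binary rules of $G$ are context-sharing in the calculus of Table~\ref{table: LK}) with premises $S_1, S_2$, duplicate $w$ above each premise and then apply $\mathcal{R}$ to the two weakened sequents. Termination is ensured by the measure $\mu(\pi) = \sum_w s(w)$, summed over those weakenings $w$ that are not yet w-reduced, where $s(w)$ is the size of the subproof strictly above $w$. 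A unary permutation replaces one summand $s(w)$ by $s(w)-1$, and a binary permutation replaces $s(w)$ by $s(w_1)+s(w_2) = s(w)-1$, since the rule $\mathcal{R}$ is absorbed below the two new weakenings. Hence $\mu$ strictly decreases and the procedure halts in a w-reduced proof $\pi'$.

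Each of the invariants~(\ref{lem.wreduce.ipol})--(\ref{lem.wreduce.focc}) is preserved by a single permutation step. Since the Maehara algorithm treats weakening as the identity on interpolants, (\ref{lem.wreduce.ipol}) is immediate; cut inferences and the partitions of their immediate premises are untouched, yielding~(\ref{lem.wreduce.Rcut}). The formulas introduced by the permuted weakenings are weak by construction and therefore cannot be ancestors of cut formulas nor of non-weak occurrences in the end-sequent: this takes care of~(\ref{lem.wreduce.tame}) -- no new $\Omega$ axiom appears because no ancestor of a cut formula is created, and the designated R/R subproof of each cut is unaffected -- and of~(\ref{lem.wreduce.focc.weight}), since the set of relevant ancestors of an end-sequent occurrence is defined purely from the non-weak skeleton. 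For~(\ref{lem.wreduce.focc.ax}), the axioms of $\pi'$ in which ancestors of end-sequent occurrences $\mu'$ are active are literally the same initial sequents as in $\pi$; the only change is that some of them may have been supplied with additional weakening inferences immediately beneath them, which is precisely what ``up to weak formulas'' allows.

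The main obstacle is the bookkeeping in the binary case, where one weakening is replaced by two. One must verify that the correspondence between axioms of $\pi$ and axioms of $\pi'$ is preserved rather than being conflated or reshuffled, that no ancestor relations are inadvertently created between pre-existing occurrences, and that the designated R/R subproof witnessing condition~(\ref{def.tame.RR}) for each cut continues to satisfy it. This amounts to a mechanical case analysis over the rules of $G$, but it presents no conceptual difficulty because every formula produced by a permuted weakening is weak and hence invisible to the ancestor- and relevance-based conditions that govern tameness, cut types, and the relevant-ancestor weight.
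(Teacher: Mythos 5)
Your construction is the same as the paper's: permute each weakening upward past the non-weakening, non-axiom inference immediately above which it must go, duplicating it over the two premises of a binary rule, and observe that all four invariants are local to a single permutation step. Your verification of the invariants matches the paper's in substance: the inserted formulas are weak, occurrences in the conclusion sequent of a weakening are never relevant (so the weight bookkeeping balances), cuts and the sides of their cut formulas are untouched, and the axioms together with their active ancestors are preserved, which gives tameness and condition (4b).

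The one step that fails as written is the termination argument. The measure $\mu(\pi)=\sum_w s(w)$, summed over the weakenings not yet in a permitted position with $s(w)$ the size of the subproof strictly above $w$, need not decrease. In the binary case the local configuration grows by one sequent, so $s(v)$ increases by one for every non-exempt weakening $v$ strictly below the permuted inference, which exactly cancels the claimed decrease $s(w)\mapsto s(w)-1$; with two such $v$ the measure strictly increases. Worse, a weakening $v$ sitting immediately below $w$ is exempt from the sum before the step (its premise is the conclusion of the weakening $w$) but enters the sum afterwards (its premise becomes the conclusion of $\mathcal{R}$), contributing a fresh summand of size about $s(w)+2$. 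The procedure does terminate, but one should use a different measure, for instance the sum over \emph{all} weakenings $w$ of the number of non-weakening, non-axiom inferences strictly above $w$: this is insensitive to the insertion of new weakening nodes, drops by exactly one in both the unary and the binary case, and is zero exactly when the proof is w-reduced. Since the paper gives no termination argument at all, this is a repairable local defect rather than a flaw in the approach.
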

\begin{proof}[Proof Sketch.]
Shifting up weakenings until they are in a permitted
 position satisfies the mentioned properties.
For a more detailed proof, please see the appendix.
\end{proof}

\begin{restatable}{lemma}{LemCEIpol}
\label{lem.ce_ipol}
For every tame {\LKm} proof $\pi$ all of whose cuts are of type R,
 there is an {\LKminus} proof $\pi'$ with $\CNF(\calM(\pi)) \subs \CNF(\calM(\pi'))$.
\end{restatable}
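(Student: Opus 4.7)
The plan is to run cut-elimination on $\pi$ while tracking the Maehara interpolant, showing that each individual reduction step produces a proof $\pi^*$ in {\LKm} that is again tame, has only type-R cuts, and satisfies $\CNF(\calM(\pi))\subs\CNF(\calM(\pi^*))$ with strictly smaller cut-measure. The cut-measure would be the lexicographic pair $(d,w)$, where $d$ is the maximum degree of any cut in $\pi$ and $w$ is the sum of weights of cuts of degree $d$. Before starting the induction I would apply Lemma~\ref{lem.wreduce} to assume $\pi$ is w-reduced: by items~(\ref{lem.wreduce.ipol})--(\ref{lem.wreduce.Rcut}) the interpolant, tameness, and the type-R property are all preserved, and w-reducedness is what makes the weakening and axiom cases clean.

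The reduction step picks an uppermost cut $c$ of degree $d$ and does the usual case split on the last rules of its two subproofs. In the \emph{logical key case}, where both subproofs end in logical inferences introducing the cut formula $C$, $c$ is replaced by cuts on the immediate subformulas of $C$; since $C$ sits on the right-hand side of the Maehara partition (type R), so do its subformulas, so the new cuts are again of type R. The interpolant either stays the same (cut on $\neg A$) or a conjunct is simply dropped (cuts on $A\land B$ or $A\lor B$), so subsumption follows from Observation~\ref{obs.subs}(\ref{obs.subs.subset}). In the \emph{weakening case} the cut formula in one premise is weakened in, and because $\pi$ is w-reduced this weakening sits directly on an axiom; tameness condition~(\ref{def.tame.RR}) ensures the involved axioms are of type R/R, so the reduction amounts to pruning a trivial subproof whose interpolant is $\top$ or a conjunct of the original. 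In the \emph{axiom case} one premise is an axiom $C\seq C$, and tameness condition~(\ref{def.tame.Omega}) forbids axioms of type $\Omega$, so the cut reduces by dropping the axiom, leaving the interpolant unchanged up to Observation~\ref{obs.CNF}.

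The delicate situations are the \emph{rank reductions}, in which $c$ is permuted above a rule in one of its subproofs whose main formula is not an ancestor of the cut formula. Unary rules and binary rules whose main formula lies on the \emph{same} side of the Maehara partition as the cut formula are handled easily: the interpolant changes, if at all, only by idempotent or associative-commutative rearrangements. The nontrivial rank-reduction case is a binary rule whose main formula lies on the \emph{opposite} side of the partition. For instance, permuting $c$ past an $(R\wedge)$ whose main formula sits on the left-hand side changes the interpolant from $(I_A\vee I_B)\wedge I_2$ to $(I_A\wedge I_2)\vee(I_B\wedge I_2)$, and the required CNF-subsumption
\[
\bigl(\CNF(I_A)\times\CNF(I_B)\bigr)\cup \CNF(I_2) \;\subs\; \bigl(\CNF(I_A)\cup \CNF(I_2)\bigr)\times\bigl(\CNF(I_B)\cup \CNF(I_2)\bigr)
\]
is precisely Observation~\ref{obs.subs}(\ref{obs.subs.distand}). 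This is the heart of the proof and the reason a genuine subsumption, rather than an equivalence, appears in the statement.

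The main obstacle I anticipate is not any single reduction case but the bookkeeping needed to maintain tameness across all of them, especially through the rank reductions that duplicate subproofs. After such a permutation one must verify that each copy of the cut still satisfies condition~(\ref{def.tame.RR}); this inherits from the original tame subproof because the axioms of the duplicated subproof are those of the original subproof (appearing twice), and the ancestor structure of the cut-formula occurrences tracks along with the permutation. Lemma~\ref{lem.wreduce}(\ref{lem.wreduce.focc.ax}) is what makes this inheritance explicit. Once all cuts have been eliminated, the resulting {\LKminus} proof $\pi'$ satisfies $\CNF(\calM(\pi))\subs\CNF(\calM(\pi'))$ by transitivity, Observation~\ref{obs.subs}(\ref{obs.subs.trans}).
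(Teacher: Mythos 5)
Your proposal follows essentially the same route as the paper: w-reduce via Lemma~\ref{lem.wreduce}, reduce an uppermost cut case by case while showing tameness and the type-R property are preserved, and observe that the only place a genuine subsumption (rather than an equality of CNFs) enters is the permutation of a binary inference whose main formula sits on the opposite side of the partition, where Observation~\ref{obs.subs}(\ref{obs.subs.distand}) gives exactly the required relation. That identification of the heart of the argument is correct, and your treatment of the logical key case, the weak-cut-formula case, and the axiom case all match the paper's.

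There is, however, one genuine gap: your termination measure does not work as stated. You take the pair $(d,w)$ with $w$ the \emph{sum} of the weights of the cuts of maximal degree $d$. In the binary-permutation case the cut $c$ with $\weight(c)=\weight(\mu_1)+\weight(\mu_2)+1+\weight(\mu_3)$ is replaced by two cuts of the same degree with weights $\weight(\mu_1)+\weight(\mu_3)$ and $\weight(\mu_2)+\weight(\mu_3)$; their sum is $\weight(\mu_1)+\weight(\mu_2)+2\,\weight(\mu_3)$, which exceeds $\weight(c)$ whenever $\weight(\mu_3)\geq 2$, so $w$ can increase. The paper instead proves the \emph{per-cut} statement that each new cut $c'$ satisfies $\degree(c)>\degree(c')$ or ($\degree(c)=\degree(c')$ and $\weight(c)>\weight(c')$), which supports termination via a multiset ordering on the pairs $(\degree,\weight)$ rather than a sum; switching to that measure repairs your argument. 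A smaller point: you attribute the exclusion of the bad axiom combination (L/R against R/L) to tameness condition~(\ref{def.tame.Omega}), but it is condition~(\ref{def.tame.RR}) that rules it out, since neither subproof of such a cut has all its cut-ancestor axioms of type R/R; condition~(\ref{def.tame.Omega}) is what guarantees that the single axiom surviving an axiom reduction is not of type $\Omega$, so that tameness is preserved.
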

\begin{proof}
By Lemma~\ref{lem.wreduce} we can assume that $\pi$ is tame and w-reduced.
In this proof we will write 
\[
\infer[\mathrm{wax}]{\Gamma,A \seq \Delta, A}{}
\]
as an abbreviation for the axiom $A\seq A$ followed by weakening inferences
 to derive $\Gamma, A \seq \Delta, A$.

We write $\pi$ as $\pi[\chi]$ where $\chi$ is a subproof of $\pi$ that ends with an
 uppermost cut.
Based on a case distinction on the form of $\chi$
 we will define a proof $\chi^*$ which, by replacing $\chi$, yields
 a proof $\pi[\chi^*]$.
We will show that
\begin{enumerate}[label=(\roman*)]
\item\label{lem.ce_ipol.tamewred} $\pi[\chi^*]$ is tame and w-reduced,
\item\label{lem.ce_ipol.weight} the cut $c$ in $\chi$ is replaced by cuts $c'$ in $\chi^*$ with $\degree(c) > \degree(c')$ or ( $\degree(c) = \degree(c')$ and $\weight(c) > \weight(c')$ ), and
\item\label{lem.ce_ipol.M} $\calM(\chi) \subs \calM(\chi^*)$.
\end{enumerate}
Points~\ref{lem.ce_ipol.tamewred} and~\ref{lem.ce_ipol.weight} ensure
 correctness and termination of the cut-elimination sequence
 while point~\ref{lem.ce_ipol.M} shows $\CNF(\calM(\pi[\chi])) \subs \CNF(\calM(\pi[\chi^*]))$ by 
 Observation~\ref{obs.subs}/(\ref{obs.subs.compat_union}) and (\ref{obs.subs.compat_times}).
This suffices to prove the result by transitivity of $\subs$.
In order to show that $\pi[\chi^*]$ is tame we will show
 the following condition (*) in most cases:

Every axiom of $\chi^*$ is, up to weak formulas, an axiom of $\chi$.

\noindent Hence tameness condition~(\ref{def.tame.Omega}) is preserved, because $\chi$ is tame.
Moreover, if $\mu$ is a formula occurrence in the end-sequent of $\chi$ and
 $\mu^*$ is the corresponding formula occurrence in the end-sequent of $\chi^*$,
 then every axiom of $\chi^*$ in which an ancestor of
 $\mu^*$ is active is, up to weak formulas, also an axiom of $\chi$ in which $\mu$
 is active.
Therefore, also tameness condition~(\ref{def.tame.RR}) is preserved.

{\bf Exclusion of weak cut formulas:}
If $\chi =$
\[
\infer[\mathrm{cut}]{\Mseq{\Gamma_1}{\Gamma_2}{\Delta_1}{\Delta_2}}{
  \deduce{\Mseq{\Gamma_1}{\Gamma_2}{\Delta_1}{\Delta_2, C_{\mu_1}}}{(\chi_1)}
  &
  \deduce{\Mseq{C_{\mu_2}, \Gamma_1}{\Gamma_2}{\Delta_1}{\Delta_2}}{(\chi_2)}
}
\]
where $\mu_1$ is weak, define $\chi^* : \Mseq{\Gamma_1}{\Gamma_2}{\Delta_1}{\Delta_2}$ from $\chi_1$ by removing $\mu_1$, all its ancestors, and all
 weakening inferences that introduce these ancestors.
Then $\pi[\chi^*]$ is w-reduced and, since $\chi_1, \chi^*$ satisfy
(*), $\pi[\chi^*]$ is also tame which shows~\ref{lem.ce_ipol.tamewred}.
\ref{lem.ce_ipol.weight} holds vacuously.
Furthermore, $\CNF(\calM(\chi)) = \CNF(\calM(\chi_1)\land \calM(\chi_2))
 \supseteq \CNF(\calM(\chi_1)) = \CNF(\calM(\chi^*))$.
So~\ref{lem.ce_ipol.M} follows from Observation~\ref{obs.subs}/(\ref{obs.subs.subset}).
In case $\mu_2$ is weak we proceed analogously.
So, in the remaining cases, we assume that none of the two cut formulas are weak.

{\bf Permutation of a binary inference over a cut:}
If $\chi$ is of the form
\[
\small\infer[\mathrm{cut}_c]{\Mseq{\Gamma_1}{\Gamma_2}{\Delta_1, A\land B}{\Delta_2}}{
  \infer{\Mseq{\Gamma_1}{\Gamma_2}{\Delta_1, A\land B}{\Delta_2, C}}{
    \deduce{\Mseq{\Gamma_1}{\Gamma_2}{\Delta_1, A}{\Delta_2, C_{\mu_1}}}{(\chi_1)}
    &
    \deduce{\Mseq{\Gamma_1}{\Gamma_2}{\Delta_1, B}{\Delta_2, C_{\mu_2}}}{(\chi_2)}
  }
  & \hspace*{-30pt}
  \deduce{\Mseq{\Gamma_1}{\Gamma_2,C_{\mu_3}}{\Delta_1, A\land B}{\Delta_2}}{(\chi_3)}
}
\]
we define the proofs
\begin{align*}
\chi'_1 \quad & = \quad
\begin{array}{c}
\infer{\Mseq{\Gamma_1}{\Gamma_2}{\Delta_1,A,A\land B}{\Delta_2,C}}{
  \deduce{\Mseq{\Gamma_1}{\Gamma_2}{\Delta_1,A}{\Delta_2,C}}{(\chi_1)}
}
\end{array} \\
\chi'_2 \quad & = \quad
\begin{array}{c}
\infer{\Mseq{\Gamma_1}{\Gamma_2}{\Delta_1,B,A\land B}{\Delta_2,C}}{
  \deduce{\Mseq{\Gamma_1}{\Gamma_2}{\Delta_1,B}{\Delta_2,C}}{(\chi_2)}
}
\end{array} \\
\chi'_{3,1} \quad & = \quad
\begin{array}{c}
\infer{\Mseq{\Gamma_1}{\Gamma_2, C}{\Delta_1, A, A\land B}{\Delta_2}}{
  \deduce{\Mseq{\Gamma_1}{\Gamma_2,C}{\Delta_1,A\land B}{\Delta_2}}{(\chi_3)}
}
\end{array} \\
\chi'_{3,2} \quad & = \quad
\begin{array}{c}
\infer{\Mseq{\Gamma_1}{\Gamma_2, C}{\Delta_1,B,A\land B}{\Delta_2}}{
  \deduce{\Mseq{\Gamma_1}{\Gamma_2, C}{\Delta_1,A\land B}{\Delta_2}}{(\chi_3)}
}
\end{array}
\end{align*}
By applying Lemma~\ref{lem.wreduce} to these proofs individually,
 we obtain proofs $\chi^*_1$, $\chi^*_2$, $\chi^*_{3,1}$, and $\chi^*_{3,2}$
 that satisfy~(\ref{lem.wreduce.ipol}) and~(\ref{lem.wreduce.focc}).
We finally define $\chi^* =$
\[
\infer{\Mseq{\Gamma_1}{\Gamma_2}{\Delta_1,A\land B}{\Delta_2}}{
  \infer[(R\land)]{\Mseq{\Gamma_1}{\Gamma_2}{\Delta_1,A\land B,A\land B}{\Delta_2}}{
    \infer[\mathrm{cut}_{c_1}]{\Mseq{\Gamma_1}{\Gamma_2}{\Delta_1,A,A\land B}{\Delta_2}}{
      (\chi^*_1)
      & & & &
      (\chi^*_{3,1})
    }
    &
    \infer[\mathrm{cut}_{c_2}]{\Mseq{\Gamma_1}{\Gamma_2}{\Delta_1,B,A\land B}{\Delta_2}}{
      (\chi^*_2)
      & & & &
      (\chi^*_{3,2})
    }
  }
}
\]
For~\ref{lem.ce_ipol.tamewred} we observe that $\pi[\chi^*]$ is w-reduced
 and, since $\chi,\chi^*$ satisfy (*), $\pi[\chi^*]$ is tame.
%
%
%
For~\ref{lem.ce_ipol.weight} we observe that $\weight(c) = \weight(\mu_1) + w(\mu_2) + 1 + \weight(\mu_3)$, $\weight(c_1) = \weight(\mu_1) + \weight(\mu_3)$, and $\weight(c_2) = \weight(\mu_2) + \weight(\mu_3)$.
For~\ref{lem.ce_ipol.M} we have $\CNF(\calM(\chi))
 = \CNF((\calM(\chi_1) \lor \calM(\chi_2))\land \calM(\chi_3))
 \subs^\text{Obs.~\ref{obs.subs}/(\ref{obs.subs.distand})} \CNF((\calM(\chi_1) \land \calM(\chi_3)) \lor
  (\calM(\chi_2) \land \calM(\chi_3)))
 = \CNF(\calM(\chi^*))$.

We proceed analogously for the other binary inferences.
If the binary inference above the cut works on the right-hand side of the
 Maehara partition, we define
 $\chi^*$ analogously and obtain the calculation
$\CNF(\calM(\chi))
 = \CNF((\calM(\chi_1) \land \calM(\chi_2))\land \calM(\chi_3))
 =^\text{Obs.~\ref{obs.CNF}} \CNF((\calM(\chi_1) \land \calM(\chi_3)) \land
  (\calM(\chi_2) \land \calM(\chi_3)))
 = \CNF(\calM(\chi^*))$.

{\bf Reduction of axioms:}
Since all cuts are of type R, we have to consider the four cases of axioms
 of type $T_1$/R and R/$T_2$ for $T_1,T_2 \in \{\mathrm{L},\mathrm{R}\}$.

If $\chi$ is of the form R/R and R/R as in
\[
\infer[\mathrm{cut}]{\Mseq{\Gamma_1}{\Gamma_2, A_{\mu_1}}{\Delta_1}{\Delta_2, A_{\mu_2}}}{
  \infer[\mathrm{wax}]{\Mseq{\Gamma_1}{\Gamma_2, A}{\Delta_1}{\Delta_2, A, A}}{}
  &
  \infer[\mathrm{wax}]{\Mseq{\Gamma_1}{\Gamma_2, A, A}{\Delta_1}{\Delta_2, A}}{}
}
\]
we reduce $\chi$ to $\chi^* =$
\[
\infer[\mathrm{wax}]{\Mseq{\Gamma_1}{\Gamma_2, A}{\Delta_1}{\Delta_2,A}}{}
\]
For \ref{lem.ce_ipol.tamewred} we observe that $\pi[\chi^*]$ is w-reduced.
Moreover, the axiom in $\chi^*$ is not of type $\Omega$ for if it were,
 then both $\mu_1$ and $\mu_2$ would
 be ancestors of cuts in $\pi$ and hence $\chi$ would contain two axioms
 of type $\Omega$.
\ref{lem.ce_ipol.weight} is trivial.
For \ref{lem.ce_ipol.M} we have
 $\CNF(\calM(\chi)) = \CNF(\top\land\top)
 =^\text{Obs.~\ref{obs.CNF}/(\ref{obs.CNF.andunit})} \CNF(\top)
 = \CNF(\calM(\chi^*))$.

If $\chi$ is of the form R/R and R/L as in
\[
\infer[\mathrm{cut}]{\Mseq{\Gamma_1}{\Gamma_2, A}{\Delta_1, A}{\Delta_2}}{
  \infer[\mathrm{wax}]{\Mseq{\Gamma_1}{\Gamma_2, A}{\Delta_1, A}{\Delta_2, A}}{}
  &
  \infer[\mathrm{wax}]{\Mseq{\Gamma_1}{\Gamma_2, A, A}{\Delta_1, A}{\Delta_2}}{}
}
\]
we reduce $\chi$ to $\chi^* =$
\[
\infer[\mathrm{wax}]{\Mseq{\Gamma_1}{\Gamma_2,A}{\Delta_1,A}{\Delta_2}}{}
\]
\ref{lem.ce_ipol.tamewred} is shown as above.
\ref{lem.ce_ipol.weight} is trivial.
For \ref{lem.ce_ipol.M} we have $\CNF(\calM(\chi)) = \CNF(\top\land \neg A)
 =^\text{Obs.~\ref{obs.CNF}/(\ref{obs.CNF.andunit})} \CNF(\neg A)
 = \CNF(\calM(\chi^*))$.

If $\chi$ is of the form L/R and R/R as in
\[
\infer[\mathrm{cut}]{\Mseq{\Gamma_1, A}{\Gamma_2}{\Delta_1}{\Delta_2, A}}{
  \infer[\mathrm{wax}]{\Mseq{\Gamma_1, A}{\Gamma_2}{\Delta_1}{\Delta_2, A, A}}{}
  &
  \infer[\mathrm{wax}]{\Mseq{\Gamma_1, A}{\Gamma_2, A}{\Delta_1}{\Delta_2, A}}{}
}
\]
we reduce $\chi$ to $\chi^* =$
\[
\infer[\mathrm{wax}]{\Mseq{\Gamma_1, A}{\Gamma_2}{\Delta_1}{\Delta_2, A}}{}
\]
\ref{lem.ce_ipol.tamewred} is shown as above.
\ref{lem.ce_ipol.weight} is trivial.
For \ref{lem.ce_ipol.M} we have
$\CNF(\calM(\chi)) = \CNF(A\land \top)
=^\text{Obs.~\ref{obs.CNF}/(\ref{obs.CNF.andunit})} \CNF(A)
= \CNF(\calM(\chi^*))$.

The case of $\chi$ being of the form L/R and R/L as in
\[
\infer[\mathrm{cut}]{\Mseq{\Gamma_1, A}{\Gamma_2}{\Delta_1, A}{\Delta_2}}{
  \infer[\mathrm{wax}]{\Mseq{\Gamma_1, A}{\Gamma_2}{\Delta_1, A}{\Delta_2, A}}{}
  &
  \infer[\mathrm{wax}]{\Mseq{\Gamma_1, A}{\Gamma_2, A}{\Delta_1, A}{\Delta_2}}{}
}
\]
is ruled out by $\chi$ being tame.

The remaining cases of this cut-elimination argument can be found in
 the appendix.
\end{proof}

\begin{theorem}\label{thm.compl_prun_subs}
Let $\calC$ be a pruned interpolant of an implication $A\impl B$.
Then there is an {\LKminus} proof $\pi$ of $\Mseq{A}{}{}{B}$ with
 $\calC \subs \CNF(\calM(\pi))$.
\end{theorem}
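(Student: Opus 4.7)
The plan is to simply combine the two main lemmas of this section in the obvious way, as the work has essentially already been done.

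First, I would invoke Lemma~\ref{lem.printerpolant_LKatproof} on the pruned interpolant $\calC$ to obtain a tame {\LKat} proof $\pi_0$ of $\Mseq{A}{}{}{B}$ all of whose cuts are of type R and satisfying $\CNF(\calM(\pi_0)) = \calC$. This step gives us a proof-theoretic witness whose Maehara interpolant is exactly the target clause set (not merely logically equivalent), which is precisely what we need to then get subsumption rather than just logical entailment in the end.

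Next, I would observe that every atomic cut can be taken to be monochromatic (by Remark~\ref{Rem: monochromatic atomic cut}), so $\pi_0$ qualifies as an {\LKm} proof all of whose cuts are of type R, and it is tame. Thus the hypotheses of Lemma~\ref{lem.ce_ipol} are satisfied, and applying it yields an {\LKminus} proof $\pi$ of $\Mseq{A}{}{}{B}$ with $\CNF(\calM(\pi_0)) \subs \CNF(\calM(\pi))$.

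Combining the two, we have $\calC = \CNF(\calM(\pi_0)) \subs \CNF(\calM(\pi))$, which is exactly the desired conclusion. There is no real obstacle here: the hard work has been placed into Lemmas~\ref{lem.printerpolant_LKatproof} and~\ref{lem.ce_ipol}, and the only minor sanity check is the (already noted) fact that the tame {\LKat} proof produced by the former is admissible as input to the latter, which takes {\LKm} proofs.
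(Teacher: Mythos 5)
Your proposal is correct and follows exactly the same route as the paper's own proof: invoke Lemma~\ref{lem.printerpolant_LKatproof} to get a tame {\LKat} proof with interpolant CNF equal to $\calC$, then apply the cut-elimination Lemma~\ref{lem.ce_ipol} to obtain the {\LKminus} proof with the subsumption relation. The extra sanity check that an {\LKat} proof qualifies as an {\LKm} proof is a reasonable remark that the paper leaves implicit.
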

\begin{proof}
By Lemma~\ref{lem.printerpolant_LKatproof} there is a tame {\LKat}
 proof $\pi$ of $\Mseq{A}{}{}{B}$ all of whose cuts are of type R with
 $\CNF(\calM(\pi)) = \calC$.
Then, by applying Lemma~\ref{lem.ce_ipol}, we obtain an
 {\LKminus} proof $\pi'$ with $\calC = \CNF(\calM(\pi)) \subs \CNF(\calM(\pi'))$.
\end{proof}

So even though interpolation in {\LKminus} is not complete as shown
 in Proposition~\ref{prop.Maehara_cutfree_incomplete}, we can still recover a desired interpolant $I$ in a
 restricted sense: after transforming $I$ into a pruned interpolant $\calC$
 we can obtain a proof whose interpolant is subsumed by $\calC$.
\begin{example}
The formula $p\land q \impl p\lor q$ has the four interpolants
 $p\land q, p, q, p \lor q$.
We know from the proof of Proposition~\ref{prop.Maehara_cutfree_incomplete} that
 the only interpolants obtainable from {\LKminus} proofs are $p$ and $q$.
The clause set $\{\{ p,q \}\}$, representing the formula $p\lor q$, is not a
 pruned interpolant.
The clause set $\{ \{ p \}, \{ q \} \}$, representing the formula $p\land q$,
 subsumes both $\{\{p\}\}$ and $\{\{ q \}\}$.
\end{example}

\begin{question}
Is standard interpolation in resolution complete up to subsumption for
 pruned interpolants?
\end{question}

\begin{question}
Can we extend these results to the sequent calculus $\mathbf{LJ}$ for the intuitionistic logic? How about other super intuitionistic or substructural logics?
\end{question}

\section{Sequent calculus with atomic cuts}\label{sec.LKat}

If we move from {\LKminus} to the slightly stronger {\LKat} we can even
 obtain a full completeness result.
This is achieved by a variant of the construction used in the proof
 of Lemma~\ref{lem.printerpolant_LKatproof}.

\begin{theorem} \label{Thm: LK with atomic cuts}
Maehara interpolation in $\LKat$ is complete.
\end{theorem}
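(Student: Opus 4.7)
The plan is to adapt the construction of Lemma~\ref{lem.printerpolant_LKatproof} to arbitrary, not necessarily pruned, interpolants. Given an interpolant $C$ of $A\impl B$, first pass to $\calC=\CNF(C)=\{C_1,\dots,C_n\}$, which is logically equivalent to $C$ by Observation~\ref{obs.CNF}, and discard tautological clauses (they do not affect logical content). The task is then to construct, for each remaining $C_i=\{\ell_1,\dots,\ell_k\}$, an $\LKat$ proof $\pi_i$ of $\Mseq{A}{}{}{C_i}$ (with $C_i$ viewed as a multiset of literals) whose Maehara interpolant is logically equivalent to $\bigvee C_i$. Lemma~\ref{lem.interpolant_conjunction_construction} then combines these into $\psi:\Mseq{A}{}{}{B}$ with $\CNF(\calM(\psi))=\CNF\bigl(\bigwedge_i\calM(\pi_i)\bigr)\equiv\bigwedge_i\bigvee C_i\equiv\calC\equiv C$, which finishes the theorem.

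For each $\pi_i$ I would use $k$ nested atomic cuts of type L on the atoms $|\ell_1|,\dots,|\ell_k|$; this is admissible by Remark~\ref{Rem: monochromatic atomic cut} because each such atom lies in $V(A)\cap V(B)\subseteq V_1\cap V_2$. Each cut contributes one literal $\ell_j$ to the final disjunctive interpolant: for a positive literal $\ell_j=p_j$ the right premise is a leaf built from the axiom $p_j;\Rightarrow;p_j$ (interpolant $p_j$) followed by weakenings; for a negative literal $\ell_j=\neg p_j$ the left premise is a leaf built from the axiom $;p_j\Rightarrow p_j;$ (interpolant $\neg p_j$), followed by an $(R\neg)$ that places $\neg p_j$ into $\Delta_2$, and then weakenings. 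The other premise of each cut either houses the next nested cut or, at the innermost position, is a ``$\bot$-leaf'': a cut-free proof with all formulas on the left side of the Maehara partition, whose interpolant is $\bot$ by Observation~\ref{observation}. Because every type L cut introduces a disjunction in the interpolant, $\calM(\pi_i)$ aggregates to $\bigvee_j\ell_j$ modulo $\bot$-absorption and $\vee$-associativity, hence is logically equivalent to $\bigvee C_i$.

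The delicate step is the validity of the $\bot$-leaf's sub-sequent. Tracing the nested cuts, this sub-sequent has the shape $A,\{p_j:\ell_j\text{ negative}\};\Rightarrow\{p_j:\ell_j\text{ positive}\};C_i$, so the admissibility of $\bot$ as Maehara interpolant reduces to the semantic condition $A\wedge\bigwedge_{\ell_j\text{ neg}}p_j\models\bigvee_{\ell_j\text{ pos}}p_j$. This follows from $A\models\bigvee C_i$: substituting $p_j$ for each negative $\ell_j$ into $\bigvee C_i$ collapses the disjunct $\neg p_j$ to $\bot$, leaving only the positive disjuncts. Tautological clauses, if any survive the initial cleanup, are handled by a simpler scheme: the axiom $;|\ell|\Rightarrow;|\ell|$ (interpolant $\top$), followed by $(R\neg)$ and weakenings, yields a proof with interpolant $\top\equiv\bigvee C_i$. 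I expect the main obstacle to be the bookkeeping for the nested cut structure, in particular verifying that every intermediate split sequent is valid and admits the intended interpolant; once these combinatorics are pinned down, the induction on $k$ is routine.
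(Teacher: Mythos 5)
Your proposal is correct and follows essentially the same route as the paper's proof: decompose the interpolant into a CNF clause set, build for each clause $C_i$ a proof of $\Mseq{A}{}{}{C_i}$ whose Maehara interpolant is $\bot\vee\ell_1\vee\cdots\vee\ell_k$ by stacking type-L cuts on top of an all-left cut-free proof (interpolant $\bot$ by Observation~\ref{observation}), and combine via Lemma~\ref{lem.interpolant_conjunction_construction}. The only (harmless) deviation is that you cut directly on atoms, using $(R\neg)$ on the axiom side for negative literals, whereas the paper cuts on the literals themselves and then converts to atomic cuts via Lemma~\ref{lem.LKatLKlitEquivInt}.
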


\begin{proof}
Let $I$ be an interpolant of an implication $A\impl B$ and let the clause set
 $\calC = \{ C_1, \ldots, C_n \}$ be logically equivalent to $I$.
For $i=1,\ldots,n$ let $C_i = \{ \ell_{i,1},\ldots,\ell_{i,k_i} \}$.
We start by constructing proofs $\pi_i : \Mseq{A}{}{}{\ell_{i,1},\ldots,\ell_{i,k_i}}$ such that $\calM(\pi_i)$ is logically equivalent to $C_i$.
As $\calC$ is an interpolant of $A \to B$, we have
 $\LK \vdash A \Rightarrow \bigwedge_{i=1}^n C_i$.
Thus, for each $1 \leq i \leq n$, $\LK \vdash A \Rightarrow C_i$ and hence $\LK \vdash A \Rightarrow \ell_{i1}, \dots , \ell_{ik_i}$.
Let $\alpha_i$ be a cut-free proof of 
\[
\alpha_i : \qquad A ; \Rightarrow \ell_{i1}, \dots , \ell_{ik_i} ;
\]
By Observation \ref{observation}, $\calM (\alpha_i)$ is logically equivalent to $\bot$. Take the following proof tree as $\pi_i$, which moves all the literals in the succedent of $\alpha_i$ to the right-hand side of the Maehara partition only by cuts on literals and weakening rules:

\begin{center}
\AxiomC{$\alpha_i$}
\noLine
\UnaryInfC{$A ; \Rightarrow \ell_{i1}, \ell_{i2}, \dots , \ell_{ik_i} ;$}
\UnaryInfC{$A ; \Rightarrow \ell_{i1}, \ell_{i2}, \dots , \ell_{ik_i} ; \ell_{i1}$}
\AxiomC{$\ell_{i1} ; \Rightarrow ; \ell_{i1}$}
\doubleLine
\UnaryInfC{$A , \ell_{i1}; \Rightarrow \ell_{i2}, \dots , \ell_{ik_i} ; \ell_{i1}$}
\BinaryInfC{$A ; \Rightarrow \ell_{i2}, \dots , \ell_{ik_i} ; \ell_{i1}$}
\hspace{-0.5em}\DisplayProof

\AxiomC{$\vdots$}
\UnaryInfC{$A ; \Rightarrow \ell_{ik_i} ; \ell_{i1}, \dots , \ell_{ik_i-1}$}
\UnaryInfC{$A ; \Rightarrow \ell_{ik_i} ; \ell_{i1}, \dots , \ell_{ik_i-1}, \ell_{ik_i}$}
\AxiomC{$\ell_{ik_i} ; \Rightarrow ; \ell_{ik_i}$}
\doubleLine
\UnaryInfC{$A , \ell_{ik_i}; \Rightarrow  ; \ell_{i1}, \dots \ell_{ik_i}$}
\BinaryInfC{$A ; \Rightarrow  ; \ell_{i1}, \dots \ell_{ik_i}$}
\hspace{4em}\DisplayProof
\end{center}

\normalsize \noindent The double lines in the proof mean using the rules $(Lw)$ and $(Rw)$ as often as needed. To construct the proof $\pi_i$, we start with $\alpha_i$ and use the rule $(Rw)$ on its end-sequent. Then, we take the valid sequent $\ell_{i1}; \Rightarrow ; \ell_{i1}$ and apply $(Lw)$ and $(Rw)$ as needed. Then we can use the cut rule to move $\ell_{i1}$ to the right-hand side of the Maehara partition. We repeat this procedure for the rest of the literals to get $\pi_i$. 
To see that $\calM(\pi_i)$ is logically equivalent to $C_i$, note that each cut in the proof $\pi_i$ is on a literal and the cut formula is always on the left-hand side of the Maehara partition. Moreover, $V(I_i) \subseteq V(I) \subseteq V(A) \cap V(B) \subseteq V(A)$. Hence, in each cut, the cut formula is in $V(A)$, and by the Maehara interpolation algorithm for $\LKlit$, the interpolant of the conclusion of the cut rule is the disjunction of the interpolants of the premises. Since $\calM(\alpha_1)$ is logically equivalent to $\bot$ and $\calM(\ell_{ij} ; \Rightarrow ; \ell_{ij})= \ell_{ij}$ for $1 \leq j \leq k_i$, we get $\calM(\pi_i)$ is logically equivalent to $ \bot \vee \ell_{i1} \vee \dots \vee \ell_{ik_i}$, which is logically equivalent to $\ell_{i1} \vee \dots \vee \ell_{ik_i}= \bigvee_{j=1}^{k_i} \ell_{ij}= C_i$.
Now apply Lemmas~\ref{lem.LKatLKlitEquivInt} and~\ref{lem.interpolant_conjunction_construction} 
to obtain  an {\LKat} proof $\pi$ of $\Mseq{A}{}{}{B}$ with $\calM(\pi)$ logically equivalent to  $\calC$.
\end{proof}
A trivial consequence of Theorem \ref{Thm: LK with atomic cuts} is that Maehara interpolation in $\LKlit$ and $\LKm$ is also complete. 
\begin{remark}
It is worth noting that the proof of Theorem \ref{Thm: LK with atomic cuts} provides an interpolant logically equivalent to the given interpolant $I$ in
 CNF, only by adding $\top$ as conjuncts and $\bot$ as disjuncts.
Thus, we have even shown that Maehara interpolation is syntactically complete
 up to unit elimination of conjunction and disjunction for formulas in CNF.
\end{remark}

\begin{example}
We find a proof $\pi: p \wedge q; \Rightarrow ;p \vee q$ in $\LKat$ such that $\calM(\pi) = p \wedge q$. Denote $I_1= p$ and $I_2= q$. We have 
\begin{center}
\begin{tabular}{c c c c }
$\pi_1:$ \;
\AxiomC{$p ; \Rightarrow ; p$}
\UnaryInfC{$p \wedge q ; \Rightarrow ; p$}
\DisplayProof
&  &
$\pi_2:$ \;
\AxiomC{$q ; \Rightarrow ; q$}
\UnaryInfC{$p \wedge q ; \Rightarrow ; q$}
\DisplayProof
\end{tabular}
\end{center}
and $\calM(\pi_1)=p$ and $\calM(\pi_2)=q$. Take the following proof tree $\sigma: p \wedge q; \Rightarrow ;p \vee q$ in $\LKat$ where all the cuts are context-splitting:
\begin{prooftree}
\AxiomC{$\pi_2$}
\noLine
\UnaryInfC{$p \wedge q ; \overset{q}{\Longrightarrow} ; q$}
\AxiomC{$\pi_1$}
\noLine
\UnaryInfC{$p \wedge q ; \overset{p}{\Longrightarrow} ; p$}
\AxiomC{$; p \overset{\top}{\Longrightarrow} ; p$}
\UnaryInfC{$; p, q \overset{\top}{\Longrightarrow} ; p$}
\UnaryInfC{$; p, q \overset{\top}{\Longrightarrow} ; p\vee q$}
\doubleLine
\BinaryInfC{$p \wedge q ; q \overset{p \wedge \top}{\Longrightarrow} ; p \vee q$}
\doubleLine
\BinaryInfC{$p \wedge q, p \wedge q ;  \overset{q \wedge p \wedge \top}{\Longrightarrow} ; p \vee q$}
\UnaryInfC{$ p \wedge q ;  \overset{q \wedge p \wedge \top}{\Longrightarrow} ; p \vee q$}
\end{prooftree}
where the double lines mean applying the left and right weakening rules (to get the same contexts in the premises of the cut rule) and then the cut rule. We have $\calM(\sigma)$ is logically equivalent to $p \wedge q$. 

\end{example}


\section{Propositional normal modal logics}\label{sec.modal}
We work with the language $\mathcal{L}_\Box= \{\bot, \wedge, \vee, \neg, \Box\}$. The modal rules we consider are:
\begin{center}
\begin{tabular}{c c c c}
\AxiomC{$\Gamma \Rightarrow A$}
  \RightLabel{\scriptsize{($K$)}}
 \UnaryInfC{$\Box \Gamma \Rightarrow \Box A$}
 \DisplayProof 
 &
  \AxiomC{$\Gamma \Rightarrow $}
  \RightLabel{\scriptsize{($D$)}}
 \UnaryInfC{$\Box \Gamma \Rightarrow $}
 \DisplayProof 
 &
 \AxiomC{$\Gamma, \Box \Gamma \Rightarrow A$} 
  \RightLabel{\scriptsize{($4$)}}
 \UnaryInfC{$\Box \Gamma \Rightarrow \Box A$}
 \DisplayProof
 &
 \AxiomC{$\Gamma , A \Rightarrow \Delta$}
\RightLabel{\scriptsize{($T$)}}
 \UnaryInfC{$\Gamma , \Box A \Rightarrow \Delta$}
 \DisplayProof
 \end{tabular}
 \end{center}

Consider the normal modal logics $\mathsf{K}$, $\mathsf{D}$, $\mathsf{T}$, $\mathsf{K4}$, $\mathsf{KD4}$, and $\mathsf{S4}$. Take the usual sequent calculi for these logics by adding the corresponding modal rules to {\LK}:
\begin{center}
$\mathbf{K}= \LK+(K) \quad \mathbf{KD}= \LK+(K)+(D) \quad \mathbf{KT}= \LK+(K)+(T)$\\
$\quad \mathbf{K4}= \LK+(4) \quad \mathbf{KD4}= \LK+(4)+(D) \quad \mathbf{S4}= \LK+(T)+(4)$
\end{center}

We extend the interpolation algorithm with the following rules:

\begin{center}
\begin{tabular}{c c}
 $\Box A;  \overset{\bot}{\Longrightarrow}   \Box A; $ & $;  \Box A \overset{\top}{\Longrightarrow} ;  \Box A$  \\
 $\Box A; \overset{ \Box A}{\Longrightarrow} ;   \Box A$    & $;  \Box A \overset{\neg  \Box A}{\Longrightarrow}  \Box A; $ 
\end{tabular}
\end{center}

If the last rule is $(K)$:
\begin{center}
\begin{tabular}{c c}
 \AxiomC{$\Gamma_1;  \Gamma_2 \overset{C}{\Longrightarrow}  ; A$}
 \UnaryInfC{$\Box \Gamma_1; \Box \Gamma_2 \overset{\Box C}{\Longrightarrow}  ; \Box A$}
 \DisplayProof
     &  
\AxiomC{$\Gamma_1;  \Gamma_2 \overset{C}{\Longrightarrow}   A ;$}
 \UnaryInfC{$\Box \Gamma_1; \Box \Gamma_2 \overset{\neg \Box  \neg C}{\Longrightarrow}   \Box A ;$}
 \DisplayProof
\end{tabular}
\end{center}
If the last rule is $(D)$, then note that the rule $(K)$ is also present in the calculus. Then:
\begin{center}
 \AxiomC{$\Gamma_1;  \Gamma_2 \overset{C}{\Longrightarrow}  $}
 \UnaryInfC{$\Box \Gamma_1; \Box \Gamma_2 \overset{\Box C}{\Longrightarrow} $}
 \DisplayProof
\end{center}
If the last rule is $(4)$:
\begin{center}
\begin{tabular}{c c}
 \AxiomC{$\Gamma_1, \Box \Gamma_1;  \Gamma_2, \Box \Gamma_2 \overset{C}{\Longrightarrow}  ; A$}
 \UnaryInfC{$\Box \Gamma_1; \Box \Gamma_2 \overset{\Box C}{\Longrightarrow}  ; \Box A$}
 \DisplayProof
     &  
\AxiomC{$\Gamma_1, \Box \Gamma_1;  \Gamma_2, \Box \Gamma_2 \overset{C}{\Longrightarrow}   A ;$}
 \UnaryInfC{$\Box \Gamma_1; \Box \Gamma_2 \overset{\neg \Box  \neg C}{\Longrightarrow}   \Box A ;$}
 \DisplayProof
\end{tabular}
\end{center}
If the last rule is $(T)$:
\begin{center}
\begin{tabular}{c c}
 \AxiomC{$\Gamma_1, A;  \Gamma_2 \overset{C}{\Longrightarrow}  \Delta_1 ; \Delta_2$}
 \UnaryInfC{$\Gamma_1, \Box A;  \Gamma_2 \overset{C}{\Longrightarrow}  \Delta_1 ; \Delta_2$}
 \DisplayProof
     &  
\AxiomC{$\Gamma_1;  A, \Gamma_2 \overset{C}{\Longrightarrow}  \Delta_1 ; \Delta_2$}
 \UnaryInfC{$\Gamma_1;  \Box A, \Gamma_2 \overset{C}{\Longrightarrow}  \Delta_1 ; \Delta_2$}
 \DisplayProof
\end{tabular}
\end{center}

Define \emph{modal literals}, denoted by $\ell_\Box$, as  $p, \neg p, \Box A, \neg \Box A$, where $p$ is a propositional atom and $A$ is a modal formula. The \emph{conjunctive normal form} of a modal formula, denoted by $\mCNF$,  is defined similarly to the propositional case: conjunctions of disjunctions of modal literals. A modal formula of the form $\Box B$ is an \emph{immediate modal subformula} of a modal formula $A$ if it is a subformula of $A$ and there is no other subformula of $A$ that contains $\Box B$.
By an easy observation, we see that every modal formula has a logically equivalent modal conjunctive normal form. The reason is that we can see immediate modal subformulas of a formula as new atomic formulas (i.e., not occurring in $A$) and then this formula will be propositional and it has a $\CNF$. Transforming the new atoms back to the immediate modal subformulas will provide an $\mCNF$ for $A$. 

\begin{proposition}\label{modal.prop.Maehara_cutfree_incomplete}
Maehara interpolation in cut-free propositional $\mathbf{K}$ is not complete. 
\end{proposition}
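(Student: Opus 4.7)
The plan is to reuse the implication $p \wedge q \impl p \vee q$ from the proof of Proposition~\ref{prop.Maehara_cutfree_incomplete}. First, I would argue that cut-free $\mathbf{K}$ enjoys the subformula property, so every formula occurring in a cut-free $\mathbf{K}$ proof of $p \wedge q \seq p \vee q$ is a subformula of the end-sequent. Since the end-sequent contains no boxed formulas, no sequent in such a proof contains a boxed formula either. In particular, the modal rule $(K)$ never fires (its conclusion always contains at least one $\Box$-formula in the succedent), so every cut-free $\mathbf{K}$ proof of $p \wedge q \seq p \vee q$ is syntactically an $\LKminus$ proof, and the modal Maehara algorithm reduces on it to the propositional one.

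Having made this reduction, I would then invoke the analysis from the proof of Proposition~\ref{prop.Maehara_cutfree_incomplete} verbatim: every such cut-free proof uses only unary rules above axioms of the form $p \seq p$ or $q \seq q$, and therefore the Maehara algorithm applied to the split sequent $p \wedge q ; \seq ; p \vee q$ outputs an interpolant that is either $p$ or $q$. The formula $p \vee q$ is a semantically possible interpolant of $p \wedge q \impl p \vee q$ but is logically equivalent to neither $p$ nor $q$, witnessing the incompleteness.

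There is no substantive obstacle: the only non-routine step is the observation that the modal rules are inapplicable, which follows directly from the subformula property of cut-free modal sequent calculi. The rest is a transfer of the propositional argument.
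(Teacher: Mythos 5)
Your proof is correct, but it uses a different witness than the paper. You take the purely propositional implication $p \wedge q \impl p \vee q$, observe that by the subformula property of cut-free $\mathbf{K}$ no boxed formula (and hence no instance of $(K)$) can occur in any cut-free proof of it, and then transfer the analysis of Proposition~\ref{prop.Maehara_cutfree_incomplete} verbatim to conclude that only $p$ or $q$ is obtainable, while $p \vee q$ is a legitimate interpolant equivalent to neither. The paper instead uses the boxed implication $\Box(p \wedge q) \impl \Box(p \vee q)$: there a cut-free proof must end with $(K)$ applied to (essentially) a cut-free proof of $p \wedge q \seq p \vee q$, so the modal clause of the algorithm yields only $\Box p$ or $\Box q$, whereas $\Box(p\wedge q)$ (equivalently $\Box p \wedge \Box q$) and $\Box(p \vee q)$ are unobtainable interpolants. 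Your route is the more economical reduction --- it shows the propositional counterexample survives unchanged inside $\mathbf{K}$ --- while the paper's example is a more informative witness, since it exhibits incompleteness on a formula that genuinely exercises the modal rule $(K)$ and shows the defect is not confined to the propositional fragment. Either argument establishes the proposition; if you keep yours, it would be worth stating explicitly (as you do) that the subformula property for cut-free $\mathbf{K}$ is the routine extension of the propositional case cited in Proposition~\ref{prop.Maehara_cutfree_incomplete}, since that is the only step beyond the propositional argument.
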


\begin{proof}
The formula $\Box(p \wedge q) \to \Box(p \vee q)$, which is valid in propositional $\mathbf{K}$, has the interpolants $\Box(p \wedge q)$, $\Box(p \vee q)$, and $\Box p \wedge \Box q$ but neither of them can be read off of a cut-free proof.     
\end{proof}

\begin{theorem} \label{Thm: K with atomic cuts}
Let $G\in \{\mathbf{K}, \mathbf{KD}, \mathbf{KT}, \mathbf{K4}, \mathbf{KD4}, \mathbf{S4}\}$. Maehara interpolation $\calM$ in $G$ with cuts on atomic formulas and boxed formulas is complete.
\end{theorem}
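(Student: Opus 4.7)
The plan is to lift the strategy used in the proof of Theorem~\ref{Thm: LK with atomic cuts} to the modal setting, with modal literals in place of propositional literals and cuts on boxed formulas available for moving boxed literals across the Maehara partition. Given an interpolant $I$ of $A\impl B$, I would first convert $I$ to a logically equivalent modal conjunctive normal form $\calC = \{C_1, \ldots, C_n\}$ where each $C_i = \{\ell^\Box_{i,1}, \ldots, \ell^\Box_{i,k_i}\}$ is a disjunction of modal literals. Since $\calC$ is an interpolant of $A\impl B$ in $G$, we have $G \vdash A \seq C_i$ for each $i$, and by cut elimination for $G$ there is a cut-free proof $\alpha_i$ which we view as a proof of the split sequent $A; \seq \ell^\Box_{i,1}, \ldots, \ell^\Box_{i,k_i};$ with all succedent literals placed on the left-hand side of the Maehara partition.

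Next, I would establish a modal analogue of Observation~\ref{observation} to conclude $\calM(\alpha_i) \equiv \bot$. This amounts to checking that the modal interpolation clauses for $(K)$, $(D)$, $(4)$, and $(T)$ preserve the property that a split sequent with everything on one side of the partition has interpolant logically equivalent to $\bot$ (or to $\top$, symmetrically). With this observation in hand, I would construct $\pi_i$ by cutting $\alpha_i$ successively with axioms $\ell^\Box_{i,j}; \seq ; \ell^\Box_{i,j}$ for each positive modal literal $\ell^\Box_{i,j}$, following exactly the pattern of Theorem~\ref{Thm: LK with atomic cuts}. Each such cut is on an atom or boxed formula on the left-hand side of the Maehara partition, so the algorithm combines the two premises by disjunction, yielding $\calM(\pi_i)$ logically equivalent to $\bot \lor \ell^\Box_{i,1} \lor \cdots \lor \ell^\Box_{i,k_i} \equiv C_i$.

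For modal literals of the form $\neg p$ or $\neg \Box A$ the cut formula is not itself atomic or boxed, so I would need modal variants of Lemma~\ref{lem.neginversion} and Lemma~\ref{lem.LKatLKlitEquivInt} that convert cuts on negated modal literals to cuts on the underlying atom or boxed formula while preserving the Maehara interpolant. Combined with a modal version of Lemma~\ref{lem.interpolant_conjunction_construction}, this lets us glue the $\pi_i$'s into a single proof $\pi: A; \seq ; B$ with $\calM(\pi) \equiv \bigwedge_{i=1}^n C_i = \calC \equiv I$, all of whose cuts are on atomic formulas or boxed formulas.

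The main obstacle will be verifying the modal analogues of Observation~\ref{observation} and Lemma~\ref{lem.neginversion} uniformly for all six logics. In particular, the negation-inversion argument requires a careful case analysis on the modal rules, since boxes can block or enable permutation of $(L\neg)$ and $(R\neg)$ through $(K)$ and $(4)$; for the logics containing $(4)$ one must track that the boxed-context repetition in a $(4)$-inference does not disturb the interpolant recursion, and for $(T)$ one must check that contracting a boxed formula against its unboxed ancestor preserves the Maehara interpolant. Once these routine but delicate ingredients are established, the rest of the proof is a direct transcription of the propositional argument with modal literals replacing propositional literals throughout.
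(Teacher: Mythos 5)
Your proposal is correct and follows essentially the same route as the paper, whose own proof is only a brief sketch stating that one takes the $\mCNF$ of $I$ into modal literals and repeats the construction of Theorem~\ref{Thm: LK with atomic cuts} verbatim, noting that all cuts are monochromatic. The ingredients you single out for verification (the modal analogue of Observation~\ref{observation}, the negation-inversion and $\LKlit$-to-$\LKat$ conversion for negated modal literals, and the modal version of Lemma~\ref{lem.interpolant_conjunction_construction}) are exactly the steps the paper leaves implicit under ``every step of the proof is similar to before.''
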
 

\begin{proof}
Similar to the proof of Theorem \ref{Thm: LK with atomic cuts}. Suppose $I$ is an interpolant of the $G$-provable sequent $A \Rightarrow B$. Take the $\mCNF$ of interpolant $I=\bigwedge_{i=1}^n I_i= \bigwedge_{i=1}^n \bigvee_{j=1}^{k_i} \ell_{ij}$, where each $\ell_{ij}$ is a modal literal. We can construct the proofs $\pi_i$  as in Theorem \ref{Thm: LK with atomic cuts} and every step of the proof is similar to before. Note that the cut rule is always monochromatic.
\end{proof}
It is interesting to investigate these questions for non-normal modal logics, and other logics such as the G\"{o}del-l\"{o}b logic $\mathsf{GL}$.

\section{First-order logic}\label{sec.fol}

We now move to completeness of interpolation algorithms in first-order logic.
There are different strategies for interpolation in first-order 
 logic (with and without equality).
The most prominent strategy consists in  1.\ computing a propositional 
 interpolant and 2.\ introducing quantifiers into this propositional 
 interpolant in order to convert it to a first-order interpolant.
Such interpolation algorithms can be found, e.g.,
 in~\cite[Theorem 13]{Kreisel67Elements},
 \cite{Huang95Constructing},
 \cite[Section 5.13]{Harrison09Handbook}, and
 \cite[Section 8.2]{Baaz11Methods}.
An alternative strategy consists in 1.\ replacing function symbols by 
 relation symbols in the input formula, 2.\ applying an (almost) 
 propositional interpolation algorithm, and 3.\ translating the interpolant
 thus obtained back into a language with function symbols.
Such algorithms can be found, e.g., in \cite{Craig57Three} and
 \cite[Section 7.3]{Avigad23Mathematical}.

What kind of results can we expect for these two strategies? 
We can clearly obtain incompleteness results for the cut-free sequent
 calculus for first-order logic as a straightforward extension of
 Proposition~\ref{prop.Maehara_cutfree_incomplete}.
However, when we consider sequent calculus with atomic cuts, the answer
 is not clear at first sight and boils down to the question of whether
 there is some source of incompleteness on the first-order level.

In this section, we will prove a strong incompleteness result for the first
 strategy: we will show that it is incomplete, regardless
 of the concrete algorithms employed in the two phases and the concrete
 calculus used for obtaining the interpolant in the first phase.
In order to do this we first have to make this strategy precise.
To that aim, we consider the language $\mathcal{L}= \{\bot, \wedge, \vee, \neg, \exists, \forall\}$ and write $L_P(A)$ for the set of all predicate
 symbols occurring in the first-order formula $A$.
If we add the following rules to propositional $\LK$, we get first-order $\LK$:
\begin{center}
\begin{tabular}{c c}
\vspace{5pt}
\AxiomC{$A[x/t], \Gamma \Rightarrow \Delta$}
 \RightLabel{\small{$(L \forall)$} }
\UnaryInfC{$\forall x A, \Gamma \Rightarrow \Delta$}
\DisplayProof
&  \AxiomC{$\Gamma \Rightarrow \Delta, A[x/y]$}
 \RightLabel{\small{$(R \forall)$} }
\UnaryInfC{$\Gamma \Rightarrow \Delta, \forall x A$}
 \DisplayProof \\
\AxiomC{$A[x/y], \Gamma \Rightarrow \Delta$}
 \RightLabel{\small{$(L \exists)$} }
\UnaryInfC{$\exists x A, \Gamma \Rightarrow \Delta$}
 \DisplayProof     & 
 \AxiomC{$\Gamma \Rightarrow \Delta, A[x/t]$}
 \RightLabel{\small{$(R \exists)$} }
\UnaryInfC{$\Gamma \Rightarrow \Delta, \exists x A$}
 \DisplayProof  
\end{tabular}
\end{center}
where $t$ is an arbitrary term and in $(R \forall)$ and $(L \exists)$, the variable $y$ is not free in the conclusion.
\begin{definition}
Let $A\impl B$ be a first-order formula.
A formula $C$ is called {\em weak interpolant} of $A \impl C$ if
 $\models A\impl C$, $\models C\impl B$, and $L_P(C)\subseteq L_P(A)\cap L_P(B)$.
\end{definition}
So while a weak interpolant satisfies the usual language condition on the
 predicate symbols, it may contain constant and function symbols which
 are not in the intersection of the languages of $A$ and $B$.
\begin{example}\label{ex.weak_ipol}
Define the formulas
\begin{align*}
A & = \forall v\, (v < f(v)) \land \forall v \forall w\, (Z(v) \land v < w \impl \neg Z(w)) \\
B & = Z(c) \impl \exists u \exists v\, (Z(u) \land \neg Z(v))
\end{align*}
Then $L(A) = \{ Z,<,f \}$, $L(B) = \{ Z,c \}$, thus $L(A)\cap L(B) = \{ Z \}$.
Then $C_0 = Z(c) \impl Z(f(c))$ is a weak interpolant.
\end{example}
An algorithm for the computation for interpolants in propositional logic
 can usually be easily adapted to compute weak interpolants in first-order
 logic.
For example, we can add the rules
\begin{center}\begin{tabular}{c c}
\vspace{8pt}

\AxiomC{$\Mseq[C]{A[x/y],\Gamma_1}{\Gamma_2}{\Delta_1}{\Delta_2}$}
\UnaryInfC{$\Mseq[C]{\exists x A(x), \Gamma_1}{\Gamma_2}{\Delta_1}{\Delta_2}$}
\DisplayProof     & 
\AxiomC{$\Mseq[C]{\Gamma_1}{A[x/y],\Gamma_2}{\Delta_1}{\Delta_2}$}
\UnaryInfC{$\Mseq[C]{\Gamma_1}{\exists x A(x), \Gamma_2}{\Delta_1}{\Delta_2}$}
\DisplayProof \\

\AxiomC{$\Mseq[C]{\Gamma_1}{\Gamma_2}{\Delta_1, A[x/t]}{\Delta_2}$}
\UnaryInfC{$\Mseq[C]{\Gamma_1}{\Gamma_2}{\Delta_1, \exists x A(x)}{\Delta_2}$}
\DisplayProof &
\AxiomC{$\Mseq[C]{\Gamma_1}{\Gamma_2}{\Delta_1}{\Delta_2, A[x/t]}$}
\UnaryInfC{$\Mseq[C]{\Gamma_1}{\Gamma_2}{\Delta_1}{\Delta_2, \exists x A(x)}$}
\DisplayProof  
\end{tabular}
\end{center}

and analogous rules for $\forall$ to the algorithm from Section~\ref{ssec.seq_calc}
 in order to obtain an algorithm
 that computes a weak interpolant of $A\impl B$ from an {\LKm} proof
 of $\Mseq{A}{}{}{B}$.
We can now make the first phase of the strategy precise: 1.\ we compute
 a weak interpolant $C_0$ for $A\impl B$, e.g., as in the algorithm described above.
For describing the second phase we define:
\begin{definition}
We define the binary relation {\em $A$ is an abstraction of $B$} on
 first-order formulas as the smallest reflexive and transitive relation
 that satisfies the following condition:
If $A = Qx\, A_0$ for a $Q \in \{ \forall,\exists \}$ and $B = A_0\unsubst{x}{t}$ for some term $t$ then $A$ is an abstraction of $B$.
\end{definition}
\begin{lemma}
There is an algorithm $\calB$ which, given a weak interpolant $C_0$ of $A\impl B$ computes an interpolant $C$ of $A\impl B$ which is an abstraction
 of $C_0$.
\end{lemma}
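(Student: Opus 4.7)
I would define $\calB$ by iterated single-step abstraction. Since $C_0$ is a weak interpolant we have $L_P(C_0) \subseteq L_P(A) \cap L_P(B)$, so the only symbols that can violate the Craig language condition are the function and constant symbols in $L(C_0) \setminus (L(A) \cap L(B))$. Partition these ``bad'' symbols into $\Sigma_A \subseteq L(A) \setminus L(B)$ and $\Sigma_B \subseteq L(B) \setminus L(A)$ (symbols in neither language may be placed in either class). Starting from $C := C_0$, while $C$ still contains a bad symbol, pick an innermost bad-rooted subterm $t = f(\bar{s})$ of $C$, let $x$ be a fresh variable, and let $C'$ denote $C$ with every occurrence of the syntactic term $t$ replaced by $x$. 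Set $C \leftarrow \exists x\, C'$ if $f \in \Sigma_A$, and $C \leftarrow \forall x\, C'$ if $f \in \Sigma_B$. To obtain the correct quantifier ordering, process the $\Sigma_A$-terms first and the $\Sigma_B$-terms second, yielding a final prefix of the form $\forall \cdots \exists \cdots$. Because $C'\unsubst{x}{t}$ equals the preceding value of $C$, every step is a legitimate single-quantifier abstraction, so the final $C$ is an abstraction of $C_0$; innermost choice keeps $\bar{s}$ bad-symbol-free, so each step strictly decreases the count of bad-symbol occurrences; at termination, $L(C) \subseteq L(A) \cap L(B)$.

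For correctness I would check that the invariants $\models A \impl C$ and $\models C \impl B$ are preserved by each step. Two of the four implications are automatic: universal abstraction preserves $C \impl B$ via $\forall x\, C' \models C'\unsubst{x}{t}$, and existential abstraction preserves $A \impl C$ via $C'\unsubst{x}{t} \models \exists x\, C'$. The remaining two rely on a standard model-theoretic modification. For existential abstraction with $f \in \Sigma_A$: assume $M \models \exists x\, C'$ with witness $c$, and build $M'$ by redefining $f^{M'}(\bar{s}^M) := c$ while leaving every other interpretation fixed. Since $\bar{s}$ is bad-symbol-free, $\bar{s}^{M'} = \bar{s}^M$ and thus $t^{M'} = c$, so $M' \models C'\unsubst{x}{t}$, which is the previous value of $C$; the prior invariant then yields $M' \models B$, and since $f \notin L(B)$, $M \models B$ as well. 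The case of universal abstraction with $f \in \Sigma_B$ is dual, using $f \notin L(A)$.

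The hard part is making this model modification consistent. Even after abstracting $t = f(\bar{s})$, the formula may still contain other syntactic $f$-terms $f(\bar{s}')$, and in some models $\bar{s}'^M$ may happen to coincide with $\bar{s}^M$ although $\bar{s}' \neq \bar{s}$ syntactically; a pointwise redefinition of $f^{M'}$ at that argument tuple then also changes the values of those other terms, and the above construction of $M'$ may fail to satisfy the previous $C$. I would address this by grouping together, within a single phase, all syntactically distinct $f$-subterms that share the same outermost bad symbol and abstracting them simultaneously under one prefix $Q x_1 \cdots x_k\, C'$, choosing witnesses so that syntactically distinct terms whose arguments collide in $M$ receive the same witness. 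The resulting bookkeeping, which parallels the elimination of non-common symbols in Craig's original argument, constitutes the technical heart of the correctness proof.
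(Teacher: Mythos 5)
Your overall strategy---iterated single-step term abstraction, with a model-modification argument for the two non-trivial entailments---is the same as the paper's, which follows \cite[Lemma 8.2.2]{Baaz11Methods}. But one choice breaks the argument: you abstract an \emph{innermost} bad-rooted subterm, whereas the paper abstracts a \emph{maximal} one. This is not cosmetic. Distinct maximal bad-rooted term occurrences are pairwise disjoint, so in the paper's version every abstracted term is a \emph{closed} term of the original signature, and the fresh quantifier can always be soundly prepended to the whole formula. With the innermost choice, after you abstract $t$ inside an enclosing bad term $f(\ldots t\ldots)$, the enclosing term becomes $f(\ldots x\ldots)$ and now contains the bound variable $x$; your update $C \leftarrow Qy\,C'$ then places $Qy$ \emph{outside} the quantifier binding $x$, although the value of the abstracted term depends on $x$. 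Concretely, let $A = \forall y\, P(y,f(c))$, $B = \exists x\, P(g(x),x)$, and $C_0 = P(g(f(c)),f(c))$ with $f\in L(A)\setminus L(B)$, $g\in L(B)\setminus L(A)$, and $P,c$ common; $C_0$ is a weak interpolant. Your first step (innermost, $\Sigma_A$) gives $C_1=\exists x\, P(g(x),x)$, and your second step turns $g(x)$ into $\forall y$ prepended in front, yielding $C_2=\forall y\exists x\, P(y,x)$. But $C_2\not\models B$: take the two-element model with $P=\{(0,1),(1,0)\}$ and $g$ the identity. No reinterpretation trick is available here because $g$ and $P$ both belong to $L(B)$. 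Note also that your own soundness argument already presupposes closed abstracted terms: ``redefine $f^{M'}(\bar{s}^M):=c$'' is not even well-defined when $\bar{s}$ contains a bound variable. (The innermost route can in principle be repaired by inserting the new quantifier \emph{inside} the quantifiers binding the variables of $t$, but that needs a genuinely different, choice-function style soundness argument; the paper's maximal-term choice sidesteps the issue entirely.)

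Two further points. First, your ordering discipline ``process $\Sigma_A$-terms first, then $\Sigma_B$-terms'' conflicts with ``innermost first'' whenever a bad term of one kind properly contains a bad term of the other kind (e.g.\ $f(g(c))$ with $f\in\Sigma_A$, $g\in\Sigma_B$: the only innermost bad term is $\Sigma_B$-rooted), so the procedure is not well defined; and the target prefix shape $\forall\cdots\exists$ is not the right invariant anyway---the correct quantifier order is dictated by term nesting (outer terms receive outer quantifiers), not by which side of the partition the head symbol belongs to. Second, the collision problem you flag in your last paragraph is genuine, but it affects the paper's maximal-term version equally and is precisely the bookkeeping carried out in the cited lemma of Baaz--Leitsch; your instinct to group occurrences sharing a head symbol is in the right spirit. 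If you replace ``innermost'' by ``maximal'' and drop the $\Sigma_A$-before-$\Sigma_B$ ordering, your model-modification argument goes through essentially as the paper intends.
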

\begin{proof}[Proof Sketch]
$\calB$ replaces a maximal term $t$ which is not in $L(A)\cap L(B)$ by a
 new bound variable $x$ which is either existentially or universally 
 quantified, depending on whether the leading function symbol of $t$
 is in $L(A) \setminus L(B)$ or $L(B)\setminus L(A)$.
If $C[t]$ is a weak interpolant of $A\impl B$, then $Q x\, C[x]$ is an 
 abstraction of $C[t]$ and a weak interpolant of $A\impl B$ which
 contains one less term violating the language condition.
By repeating this step one obtains an interpolant in the usual sense.
See, e.g.,~\cite[Lemma 8.2.2]{Baaz11Methods} for a detailed exposition of this
 proof.
\end{proof}
We can now make the second phase precise: 2.\ we apply the algorithm $\calB$
 to the weak interpolant $C_0$ of $A\impl B$ in order to obtain an
 interpolant $C := \calB(C_0)$ of $A\impl B$.
\begin{example}
Continuing Example~\ref{ex.weak_ipol}
 we obtain the interpolant $\calB(C_0) = \forall x\exists y\, (Z(x) \impl \neg Z(y))$ of $A\impl B$.
\end{example}
The central observation is now the following: in first-order logic,
 there are interpolants which are not abstractions of weak interpolants.
More precisely:
\begin{lemma}\label{lem.ipol_not_abstraction_of_wipol}
There are first-order sentences $A$, $B$, and $C$ s.t.\ $C$ is an
 interpolant of $A\impl B$ but $C$ is not an abstraction of a
 quantifier-free weak interpolant of $A\impl B$.
\end{lemma}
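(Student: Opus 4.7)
The plan is to exhibit specific sentences $A$, $B$, $C$ in a minimal signature. I will take
\[
A := \exists y\, \forall x\, R(x, y), \quad B := \forall x\, \exists y\, R(x, y), \quad C := A,
\]
where $R$ is a binary relation symbol and the language of $A$ and $B$ contains no function symbols or constants. Verifying $C$ is an interpolant will be routine: $L(C) = \{R\} \subseteq L(A) \cap L(B)$ is trivial, $A \models C$ holds by $C = A$, and $C \models B$ because the witness $y_0$ of $C$'s outer existential serves as an $\exists y$-witness in $B$ for every $x$.

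The heart of the argument will be to show that \emph{no} quantifier-free weak interpolant of $A \impl B$ exists, so that $C$ is vacuously not an abstraction of one. Unfolding the semantic definition, a formula $C_0$ with free variables $\vec v$ is a weak interpolant iff $A \models \forall \vec v\, C_0$ and $\exists \vec v\, C_0 \models B$; the first condition forces $\forall \vec v\, C_0$ to be a $\Pi_1$-consequence of $A$ over the language $\{R\}$.

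The key technical lemma I plan to prove is that the only $\Pi_1$-consequence of $A$ over $\{R\}$ is $\top$. Given any $\psi = \forall \vec v\, \phi(\vec v)$ that is not valid, pick a countermodel $N$ with an assignment $\vec v^*$ making $\phi(\vec v^*)$ false in $N$. Extend $N$ to $N'$ by adjoining a fresh element $y^*$ to the universe and declaring $R(x, y^*)$ true for every $x \in N'$, leaving all other $R$-tuples unchanged. Then $N' \models A$ via the witness $y^*$, while $\vec v^*$ (whose components all lie in $N$) still falsifies $\phi$ in $N'$, because no atom of $\phi(\vec v^*)$ changes value when passing from $N$ to $N'$. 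Hence $A \not\models \psi$.

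It will then follow that $\forall \vec v\, C_0 \equiv \top$, hence $C_0 \equiv \top$ and $\exists \vec v\, C_0 \equiv \top$; but $\top \not\models B$ since $B$ fails in a nonempty model with empty $R$. Thus no quantifier-free weak interpolant of $A \impl B$ exists, and $C$ is not an abstraction of any such. The main delicacy I anticipate is the extension step: one must confirm that $\vec v^*$ avoids $y^*$, so that enlarging $R$ on tuples involving $y^*$ cannot flip any atom of $\phi(\vec v^*)$; this is immediate from $\vec v^* \in N$.
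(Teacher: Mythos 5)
Your proof is correct, but it takes a genuinely different route from the paper's. The paper chooses $A = \forall x\,(I(0)\land(I(x)\impl I(s(x))))$, $B = I(s(s(0)))$, $C=A$, and argues syntactically: the only quantifier-free formulas of which $C$ is an abstraction are the single instances $I(0)\land(I(t)\impl I(s(t)))$, and an easy countermodel shows that no single instance entails $I(s(s(0)))$ (two instances would be needed). You instead take the purely relational $A=\exists y\,\forall x\,R(x,y)$, $B=\forall x\,\exists y\,R(x,y)$, $C=A$, and prove the stronger fact that \emph{no} quantifier-free weak interpolant exists at all, via a model-extension argument showing that $A$ has only valid universal consequences over $\{R\}$; the lemma then holds vacuously for your $C$. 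Your key step is sound: the fresh witness $y^*$ lies outside the range of the falsifying assignment, so no atom of $\phi(\vec v^*)$ changes value, and your reading of validity for open formulas (universal closure on the $A$-side, existential closure on the $B$-side) is the standard interpretation of the paper's definition. As for what each approach buys: yours avoids any analysis of the abstraction relation and exhibits a more drastic failure of the two-phase strategy, at the cost of a small model-theoretic lemma. The paper's example lives in a signature with function symbols (the setting that motivates weak interpolants in the first place), illustrates the ``unboundedly many instances'' phenomenon behind the incompleteness, and has the feature that quantifier-free weak interpolants of $A\impl B$ \emph{do} exist (e.g.\ the conjunction of two instances), so the two-phase algorithm genuinely runs on it and still cannot produce $C$; this matters for the theorem that follows the lemma, whose argument presupposes that the algorithm outputs an abstraction of a quantifier-free weak interpolant. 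In your example that first phase has nothing quantifier-free to return, which is a different (arguably even worse) defect of the strategy but makes the example less directly reusable in the paper's subsequent reasoning.
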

\begin{proof}
Let $A := \forall x\, (I(0)\land (I(x)\impl I(s(x))))$, let 
 $B:=I(s(s(0)))$, and let $C := A$.
Then $L(A) = L(B) = L(C) = \{ 0,s,I \}$, $A\impl B$ is a valid formula,
 and $C$ is an interpolant of $A\impl B$.

Suppose that $C$ is abstraction of a quantifier-free weak interpolant $C_0$.
Then $C_0$ is of the form $I(0) \land (I(t)\impl I(s(t)))$ for some term $t$
 and we would have $\models C_0 \impl B$.
However,
\[
I(0) \land (I(t) \impl I(s(t))) \impl I(s(s(0)))
\]
is not a valid formula which can be shown easily by a countermodel 
 $\mathcal{N}$ with domain $\mathbb{N}$ s.t.\ $0 \in I^\mathcal{N}$,
 $2 \notin I^\mathcal{N}$, and $1\in I^\mathcal{N}$ iff $t^\mathcal{N}=0$.
\end{proof}
Therefore, any algorithm that computes only abstractions of weak interpolants
 is incomplete.
In particular: let $\calM$ be the interpolation algorithm for first-order {\LKminus} from Section 8.2 in~\cite{Baaz11Methods}.
Clearly $\calM$ is not complete due to Proposition~\ref{prop.Maehara_cutfree_incomplete}.
Let $\calM'$ be the (straightforward) extension of $\calM$ to first-order {\LKat}.
 Then we obtain:
\begin{theorem}
$\calM'$ is not complete.
\end{theorem}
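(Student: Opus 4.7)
The plan is to apply the construction from Lemma~\ref{lem.ipol_not_abstraction_of_wipol}. Set $A := \forall x\,(I(0) \land (I(x) \impl I(s(x))))$, $B := I(s(s(0)))$, and $C := A$. Then $C$ is an interpolant of $A \impl B$, and I would show that no {\LKat} proof $\pi$ yields $\calM'(\pi)$ logically equivalent to $C$.

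The first step is to describe the general shape of $\calM'(\pi)$. By construction, $\calM'$ is the two-phase algorithm that first extracts a weak interpolant $C_0$ from $\pi$ via the Maehara-style rules described above, and then applies $\calB$. Since the quantifier rules of the phase-one algorithm pass the interpolant through unchanged, and axioms together with propositional and atomic-cut rules build only boolean combinations of atomic formulas, the weak interpolant $C_0$ is quantifier-free (though its atoms may contain first-order terms). Hence $\calM'(\pi) = \calB(C_0)$ is an abstraction of a quantifier-free weak interpolant of $A \impl B$.

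The second step exploits the concrete choice of $A$ and $B$. Because $L(A) = L(B) = \{0, s, I\}$, every function and constant symbol appearing in $C_0$ already lies in $L(A) \cap L(B)$. Consequently $\calB$ has no term to abstract and $\calM'(\pi) = \calB(C_0) = C_0$ is itself quantifier-free. The conclusion will then follow from the claim that $A$ is not logically equivalent to any quantifier-free formula in the language $\{0, s, I\}$. A short model-theoretic argument gives this: any such $\varphi$ mentions only finitely many numerals $t_1, \ldots, t_k$, and for sufficiently large $N$ the models $M_N$ with $I^{M_N} = \{0, 1, \ldots, N\}$ and $M_\infty$ with $I^{M_\infty} = \mathbb{N}$ agree on all atoms $I(t_i)$ and hence on $\varphi$; yet $A$ is false in $M_N$ (since $I(N) \impl I(s(N))$ fails) and true in $M_\infty$, contradicting $\varphi \equiv A$.

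The main obstacle is the justification of the first step: one must verify, by inspection of the algorithm, that the first-order extension $\calM'$ really introduces no quantifiers in the interpolant during phase one, regardless of which quantifier rules are used in $\pi$. This is conceptually easy given that each quantifier rule is explicitly pass-through, but requires a careful case analysis to ensure no hidden quantifier creeps in through an $\LKat$-specific interaction. Once this is established, the language collapse $L(A) \cup L(B) \subseteq L(A) \cap L(B)$ trivialises $\calB$ and the model argument finishes the proof.
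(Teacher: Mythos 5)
You reuse the paper's counterexample ($A = \forall x\,(I(0)\land(I(x)\impl I(s(x))))$, $B=I(s(s(0)))$, $C=A$) and the same first observation, namely that $\calM'(\pi)=\calB(C_0)$ for a quantifier-free weak interpolant $C_0$. The gap is in your third step, the claim that ``because $L(A)=L(B)=\{0,s,I\}$, every function and constant symbol appearing in $C_0$ already lies in $L(A)\cap L(B)$,'' so that $\calB$ is the identity and $\calM'(\pi)$ is quantifier-free. This does not follow: the terms of $C_0$ come from the proof $\pi$, not from $A$ and $B$. An {\LKat} proof of $A\seq B$ may instantiate $(L\forall)$ with terms containing free variables (or fresh constants playing the role of eigenvariables), perform atomic cuts on atoms $I(t)$ for arbitrary $t$, or weaken in arbitrary formulas, and any of these can place atoms over non-shared terms into $C_0$; then $\calB$ genuinely introduces quantifiers and your model-theoretic argument, which only excludes quantifier-free formulas, no longer applies. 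The case cannot be dismissed as harmless: $C_0 = I(0)\land I(s(s(0)))\land (I(c)\impl I(s(c)))$ for a fresh constant $c$ \emph{is} a quantifier-free weak interpolant of $A\impl B$ (the middle conjunct makes $C_0\impl B$ trivial), and the abstraction in which $c$ is universally quantified, $\forall x\,(I(0)\land I(s(s(0)))\land(I(x)\impl I(s(x))))$, is logically equivalent to $C$. So to finish along your lines one must additionally show that no such $C_0$ can actually be emitted by phase one, which your proposal does not do.

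The paper sidesteps exactly this by arguing at the level of abstractions via Lemma~\ref{lem.ipol_not_abstraction_of_wipol}: whatever terms $C_0$ contains, $\calM'(\pi)$ is an abstraction of \emph{some} quantifier-free weak interpolant, whereas $C$ is an abstraction only of formulas $I(0)\land(I(t)\impl I(s(t)))$, none of which implies $B$. On the positive side, your final step (that $A$ is not logically equivalent to any quantifier-free sentence over $\{0,s,I\}$, via the models $M_N$ and $M_\infty$) is correct, and your instinct to argue about logical equivalence rather than syntactic identity is the right one, since Definition~\ref{def.ipol_algo_complete} demands inequivalence while the paper's one-line proof only asserts that $\calM'(\pi)$ is ``different from'' $C$. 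But as written, your argument establishes the theorem only for proofs whose extracted weak interpolant happens to contain no symbols outside $\{0,s,I\}$.
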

\begin{proof}
Let $A\impl B$ and $C$ be as in Lemma~\ref{lem.ipol_not_abstraction_of_wipol}.
Then $\calM(\pi')$ is an abstraction of a weak interpolant of $A\impl B$
 and hence different from $C$.
\end{proof}

\begin{question}
Are interpolation algorithms following the second strategy incomplete?
\end{question}

\section{A remark on Beth's definability theorem}\label{sec.Beth}

Beth's definability theorem is one of the most important applications of
 interpolation in mathematical logic.
As we will briefly point out in this section, completeness properties of 
 the interpolation theorem apply directly to Beth's definability theorem.
For the results in this section it will be convenient to explicitely indicate
 all predicate symbols that occur in a first-order formula by writing
 $A(R_1,\ldots,R_n)$.

\begin{definition}
Let $R, R_1, \ldots, R_n$ be predicate symbols.
A sentence $A(R,R_1,\ldots,R_n)$ is an {\em implicit definition} of $R$ if
\begin{equation}\label{eq.imp_def}
A(R,R_1,\ldots,R_n) \land A(R',R_1,\ldots,R_n) \impl \forall \vec{x} (R(\vec{x}) \liff R'(\vec{x}))
\end{equation}
is valid.

$A(R,R_1,\ldots,R_n)$ is an {\em explicit definition} of $R$ if there is a formula $F(\vec{x})$ s.t.
\begin{equation}\label{eq.exp_def}
A(R,R_1,\ldots,R_n) \impl \forall \vec{x} ( R(\vec{x}) \liff F(\vec{x}))
\end{equation}
is valid.
\end{definition}
Beth's definability theorem states that whenever $R$ is definable implicitly (in first-order logic), then $R$ is also definable explicitly.
We first observe that (\ref{eq.imp_def}) is valid iff
\begin{equation}\label{eq.Beth}
(A(R,R_1,\ldots,R_n) \land R(\myvec{x})) \impl (A(R',R_1,\ldots,R_n) \impl R'(\myvec{x}))    
\end{equation}
is valid.
Then, in analogy to Definition~\ref{def.ipol_algo_complete}, we can say that
 an algorithm
 $\calD$ which receives a proof $\pi$ of (\ref{eq.Beth}) as input and returns
 an $F$ s.t.\ (\ref{eq.exp_def}) is valid is complete if for every
 $F$ there is a $\pi$ with $F = \calD(\pi)$.

The standard proof of Beth's definability theorem
 from the interpolation theorem, see, e.g.~\cite{Takeuti87Proof}, 
 now proceeds as follows:
Let $A(R,R_1,\ldots,R_n)$ be an implicit definition of $R$ and let
 $\pi$ be a proof of (\ref{eq.Beth}).
Then applying an interpolation algorithm $\calI$ to $\pi$ yields a
 formula $F = \calI(\pi)$ with $L_P(F)\subseteq \{ R_1,\ldots,R_n\}$ such that both
\[
A(R,R_1,\ldots,R_n) \land R(\myvec{x}) \impl F(\myvec{x})
\]
and, by renaming $R'$ to $R$,
\[
F(\myvec{x})\impl ( A(R,R_1,\ldots,R_n) \impl R(\myvec{x}) )
\]
are valid. Hence also
\[
A(R,R_1,\ldots,R_n) \impl \forall \vec{x} (R(\vec{x}) \liff F(\vec{x}))
\]
is valid, so $F(\vec{x})$ is an explicit definition of $R$.
Writing $\calD_\calI$ for the algorithm that takes a proof
 (\ref{eq.Beth}) and returns an explicit definition $F(\vec{x})$
 of $R$ we see that $\calD_\calI$ is the restriction of $\calI$ to formulas
 of the form (\ref{eq.Beth}).
In particular:
\begin{observation}
$\calD_\calI$ is complete iff $\calI$ is complete
 on formulas of the form (\ref{eq.Beth}).
\end{observation}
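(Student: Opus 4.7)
The plan is to show that the correspondence between interpolants of formulas of the form~(\ref{eq.Beth}) and explicit definitions of $R$ relative to $A$ is a bijection up to logical equivalence, from which the biconditional follows immediately.

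First I would spell out this correspondence. Suppose $F(\vec{x})$ is an interpolant of~(\ref{eq.Beth}). By the language condition on interpolants, $L_P(F) \subseteq L_P(A(R,R_1,\ldots,R_n) \land R(\vec{x})) \cap L_P(A(R',R_1,\ldots,R_n) \impl R'(\vec{x})) = \{R_1,\ldots,R_n\}$, so in particular $F$ contains neither $R$ nor $R'$. From validity of $A(R,\ldots) \land R(\vec{x}) \impl F(\vec{x})$ we get $\models A(R,\ldots) \impl (R(\vec{x}) \impl F(\vec{x}))$; from validity of $F(\vec{x}) \impl (A(R',\ldots) \impl R'(\vec{x}))$, renaming $R'$ to $R$ (which is justified because $F$ contains neither), we get $\models A(R,\ldots) \impl (F(\vec{x}) \impl R(\vec{x}))$. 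Combining the two gives validity of (\ref{eq.exp_def}), hence $F$ is an explicit definition of $R$. Conversely, if $F(\vec{x})$ is an explicit definition of $R$ with $L_P(F) \subseteq \{R_1,\ldots,R_n\}$, then reversing the implications yields that $F$ is an interpolant of~(\ref{eq.Beth}).

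With this correspondence in hand, the equivalence is routine. For the ``if'' direction, assume $\calI$ is complete on formulas of the form~(\ref{eq.Beth}). Given an explicit definition $F$ of $R$ (which, as just shown, is an interpolant of~(\ref{eq.Beth})), completeness of $\calI$ yields a proof $\pi$ of~(\ref{eq.Beth}) with $\calI(\pi)$ logically equivalent to $F$; since $\calD_\calI$ is by definition the restriction of $\calI$ to such proofs, $\calD_\calI(\pi)$ is logically equivalent to $F$, so $\calD_\calI$ is complete. For the ``only if'' direction, assume $\calD_\calI$ is complete. Let $C$ be any interpolant of a formula of the form~(\ref{eq.Beth}); by the correspondence, $C(\vec{x})$ is an explicit definition of $R$, and completeness of $\calD_\calI$ yields a proof $\pi$ with $\calD_\calI(\pi) = \calI(\pi)$ logically equivalent to $C$.

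There is no serious obstacle here; the only point requiring slight care is the renaming of $R'$ to $R$ in the second implication, which is legitimate precisely because the language condition on the interpolant excludes both $R$ and $R'$ from $F$. Everything else is bookkeeping between the definitions of completeness.
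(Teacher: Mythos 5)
Your proof is correct and follows essentially the same route as the paper, which likewise derives the observation from the fact that $\calD_\calI$ is the restriction of $\calI$ to proofs of formulas of the form~(\ref{eq.Beth}) together with the correspondence between interpolants of~(\ref{eq.Beth}) and explicit definitions (including the renaming of $R'$ to $R$ justified by the language condition). The only difference is that you make explicit the converse half of that correspondence --- that every explicit definition is an interpolant of~(\ref{eq.Beth}) --- which the paper leaves implicit but which is indeed needed for the ``if'' direction.
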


\section{Conclusion}

We have initiated the study of completeness properties of interpolation
 algorithms by proving several results about some of the most important
 interpolation algorithms:
The standard algorithms for resolution and cut-free sequent calculus for
 propositional logic are incomplete.
On the other hand, in the sequent calculus with atomic cuts, it is complete.
Moreover, even in the cut-free sequent calculus one can obtain a weaker
 completeness result: completeness of pruned interpolants up to subsumption.
We have also extended our results to normal modal logics and to first-order
 logic and found a new source of incompleteness in first-order logic
 that applies to a wide variety of interpolation algorithms.
We have also shown that completeness properties of interpolation algorithms
 correspond directly to completeness properties of Beth's definability
 theorem.

These results show that the completeness of
 an interpolation algorithm is related to the amount of freedom, or redundancy,
 that is permitted by a proof system and its subtle interplay with
 the interpolation algorithm, as witnessed very clearly, e.g., by the proof of 
 Theorem~\ref{Thm: LK with atomic cuts}, the completeness of interpolation in
 the sequent calculus with atomic cut.

Moreover, our results show clearly that the completeness of a proof
 calculus (w.r.t.\ some semantics) is a different question from
 that of the completeness of an interpolation algorithm in this calculus.
For example, both cut-free sequent calculus and sequent calculus with atomic cuts are complete w.r.t.\ Tarski semantics.
However, the standard interpolation algorithm is complete in the latter but
 not in the former.
 
We believe that this work is merely a first step in a wider project
 of gauging the expressive power of interpolation algorithms based
 on their completeness properties.
We have already mentioned many open questions in the paper.
We consider the following open problems to be the most promising and relevant:
We plan to investigate the completeness of interpolation algorithms for local 
 proofs~\cite{Jhala06Practical,Kovacs09Interpolation} which are of particular
 relevance in the CAV community.
In order to get a better picture of the situation in first-order logic,
 it would be useful to investigate the second interpolation strategy
 for first-order proofs as mentioned in Section~\ref{sec.fol}.
Moreover, we are intrigued by the question whether, or respectively: in how far,
 the cut-elimination argument underlying Theorem~\ref{thm.compl_prun_subs}
 can be extended to first-order logic.
It would be interesting to investigate these questions also for
 intuitionistic logic where, due to the more restricted availability
 of normal forms, quite different techniques will presumably be needed.
Resolution with weakening for classical propositional logic is interesting from a proof-theoretic
 point of view since, in contrast to ordinary resolution, it is complete (as a proof calculus, w.r.t.\ standard semantics).
However, the completeness of its interpolation algorithm is unknown.
Also, it would be interesting to relate these completeness and incompleteness
 results more directly to applications in verification.
We leave these, and the questions mentioned throughout the paper, to future work.

\bibliographystyle{plain}
\bibliography{references}

\newpage

\section{Appendix}

In this appendix we collect proofs that had to be omitted
 from the main text because of space limitations.
For the convenience of the reader we also restate the lemmas and
 theorems proved here.

\ThmSoundnessMaehara*
\begin{proof}
As $\pi$ is a proof of the split sequent $\Gamma_1;\Gamma_2 \Rightarrow \Delta_1;\Delta_2$ in  $G$, by Remark \ref{Rem: monochromatic atomic cut} all the cuts in $\pi$ are monochromatic. We prove that the Maehara algorithm outputs a formula $\calM(\pi)=C$, called the \emph{interpolant}, such that $V(C) \subseteq V(\Gamma_1 \cup \Delta_1) \cap V(\Gamma_2 \cup \Delta_2)$ and $G \vdash \Gamma_1 \Rightarrow \Delta_1, C$ and $G\vdash C , \Gamma_2 \Rightarrow \Delta_2$. Moreover, $|\calM(\pi)| \leq |\pi|$. For the axioms and each rule except the cut rule, see \cite[Section 4.4.2]{Troelstra}. As for the right rules, let us only investigate $(R \neg)$: \begin{center} \begin{tabular}{c c} \AxiomC{$\Gamma_1, A ; \Gamma_2 \overset{C}{\Longrightarrow} \Delta_1; \Delta_2$} \UnaryInfC{$\Gamma_1; \Gamma_2 \overset{C}{\Longrightarrow} \Delta_1, \neg A ; \Delta_2$} \DisplayProof     &   \AxiomC{$\Gamma_1; \Gamma_2, A \overset{C}{\Longrightarrow} \Delta_1 ; \Delta_2$} \UnaryInfC{$\Gamma_1; \Gamma_2 \overset{C}{\Longrightarrow} \Delta_1 ; \Delta_2, \neg A$} \DisplayProof \end{tabular} \end{center}
The reason is that for the left case by the induction hypothesis, we have 
\[
G \vdash \Gamma_1, A \Rightarrow C, \Delta_1 \quad G \vdash \Gamma_2, C \Rightarrow \Delta_2
\]
Applying the rule $(R \neg)$ we get 
\[
G \vdash \Gamma_1 \Rightarrow C, \neg A,  \Delta_1 \quad G \vdash \Gamma_2, C \Rightarrow \Delta_2
\]
which means that $C$ works as the interpolant of the conclusion, as desired. Similarly for the right case. \\
Now, we investigate the cut rule. Suppose the last rule in $\pi$ is an instance of an atomic cut, a cut on a literal, or a monochromatic (in particular analytic) cut, depending on whether $G$ is $\LKat$, $\LKlit$, or $\LKm$. As $A$ is the cut formula,  $V_1= V(\Gamma_1 \cup \Delta_1)$ and $V_2= V(\Gamma_2 \cup \Delta_2)$, by Remark \ref{Rem: monochromatic atomic cut}, for any atomic cut rule, or a cut on a literal, with the cut formula $A$, w.l.o.g. we can assume it is monochromatic, i.e., either $A \in V_1$ or $A \in V_2$.
In the former case when $A \in V_1$, we choose the case where $A$ appears on the left-hand side of the Maehara partitions in the premises. Otherwise, if $A \in V_2$ we take the case where $A$ appears on the right-hand side of the Maehara partitions in the premises. Let us investigate one such case. If $A \in V_1$, we define the interpolant of the conclusion of the cut rule as $C \vee D$ because $V(C) \subseteq (V_1 \cup \{A\}) \cap V_2= V_1 \cap V_2$ and $V(D) \subseteq (V_1 \cup \{A\}) \cap V_2= V_1 \cap V_2$. Hence, the variable condition for $C \vee D$ holds. Moreover, by the induction hypothesis, the following sequents are provable in $G$
\begin{center}
\begin{tabular}{c c c c}
$\Gamma_1 \Rightarrow \Delta_1, A, C$  & $C , \Gamma_2 \Rightarrow \Delta_2$  &  $\Gamma_1, A \Rightarrow \Delta_1, D$ & $D , \Gamma_2 \Rightarrow \Delta_2$
\end{tabular}
\end{center}
Hence, the following are provable in $G$
\begin{center}
\begin{tabular}{c c}
$ \Gamma_1 \Rightarrow \Delta_1, C \vee D$     &  $ C \vee D , \Gamma_2 \Rightarrow \Delta_2$
\end{tabular}
\end{center}
Similarly, if $A \in V_2$,  it is easy to see that $E \wedge F$ works as the interpolant. It is easy to see that $|\calM(\pi)| \leq |\pi|$.
\end{proof}

\ObsCNF*
\begin{proof}
For \ref{obs.CNF.logeq}, we use induction on the structure of $A$.
Let $A=\top$, then $\CNF(\top)=\{\}$ and the formula interpretation of
 $\{\}$ is $\top$.
Similarly, if $A=\bot$, then the formula interpretation of
 $\CNF(\bot)=\{\emptyset\}$ is $\bot$.
If $A$ is a literal, we have $\CNF(\ell)=\{\ell\}$, where its formula 
 interpretation is $\ell$.
Let $A = B \wedge C$.
We have $\CNF(B \wedge C)= \CNF(B) \cup \CNF(C)$ and by the induction hypothesis
 it is logically equivalent to $B \wedge C$.
Let $A = B \vee C$ and $\CNF(B)=\{B_1, \dots, B_n\}$ where each
 $B_i=\{\ell_{i1}, \dots, \ell_{ik_i}\}$ and $\CNF(C)=\{C_1, \dots, C_m\}$
 where each $C_r=\{\ell'_{r1}, \dots, \ell'_{ru_r}\}$.
The formula interpretation of $\CNF(B)$ is $\bigwedge_{i=1}^n \bigvee_{j=1}^{k_i} \ell_{ij}$
 and the formula interpretation of $\CNF(C)$ is $\bigwedge_{r=1}^m \bigvee_{s=1}^{u_r} \ell'_{rs}$.
Then,  $\CNF(B \vee C)=\CNF(\psi) \times \CNF(\chi) = \{C \cup D \mid C \in \CNF(\psi), D \in \CNF(\chi)\}$. The formula interpretation of $\CNF(\psi \vee \chi)$ is 
\begin{center}
$(\bigvee_j \ell_{ij} \vee \bigvee_s \ell'_{1s}) \wedge \dots \wedge (\bigvee_j \ell_{ij} \vee \bigvee_s \ell'_{ms}) \wedge \dots \wedge$ \\
$(\bigvee_j \ell_{nj} \vee \bigvee_s \ell'_{1s}) \wedge \dots \wedge (\bigvee_j \ell_{nj} \vee \bigvee_s \ell'_{ms})$
\end{center}
which is logically equivalent to 
\[
(\bigvee_j \ell_{1j} \wedge \bigvee_j \ell_{2j} \wedge \dots \wedge \bigvee_j \ell_{nj}) \vee (\bigvee_s \ell'_{1s} \wedge \bigvee_s \ell'_{2s} \wedge \dots \wedge \bigvee_s \ell'_{ms})
\]
which is $(\bigwedge_{i=1}^n \bigvee_{j=1}^{k_i} \ell_{ij}) \vee (\bigwedge_{r=1}^m \bigvee_{s=1}^{u_r} \ell'_{rs})$ that is logically equivalent to $B \vee C$ by the induction hypothesis.

\noindent For \ref{obs.CNF.andunit}, $\CNF(A) \wedge \top=\CNF(A) \cup \CNF(\top)=\CNF(A) \cup \emptyset=\CNF(A)$.
For \ref{obs.CNF.orunit}, $\CNF(A) \vee \bot=\CNF(A) \times \CNF(\bot)=\CNF(A) \times \{\emptyset\}=\{C \cup \emptyset \mid C \in \CNF(A)\}= \CNF(A)$.

The rest are easy and left to the reader.
\end{proof}

\LemNegInversion*
\begin{proof}
We prove $(1)$, the others are similar. We use induction on the depth of $\pi$. As the base case, suppose $\pi$ is an axiom. Then, it has one of the following forms concerning the Maehara partition:
\[
\neg A ; \overset{\neg A}{\Longrightarrow} ; \neg A \qquad ; \neg A \overset{\top} {\Longrightarrow} ; \neg A
\]
Then, for the left (right) case, we choose $\pi'$ as the following left (right) proof tree:
\begin{center}
\begin{tabular}{c c}
\AxiomC{$; A \overset{\neg A}{\Longrightarrow} A ;$}
\UnaryInfC{$\neg A; A \overset{\neg A}{\Longrightarrow} ;$}
\DisplayProof
&
\AxiomC{$; A \overset{\top}{\Longrightarrow}  ; A$}
\UnaryInfC{$; \neg A, A \overset{\top}{\Longrightarrow} ;$}
\DisplayProof
\end{tabular}
\end{center}
It is easy to see that the Maehara algorithm outputs the same interpolant. For the variable condition note that $V(\neg A) = V(A)$. Now, we have to consider the last rule used in the proof. We investigate some cases and the rest are similar and left to the reader. Suppose the last rule in $\pi$ is $(L \vee)$. Then, based on the position of the main formula in the Maehara partition in the conclusion, we have two cases. Let us only consider the case where the main formula is on the left-hand side of the Maehara partition. Then, $\pi$ has the following form:
\begin{prooftree}
\AxiomC{$\pi_1$}
\noLine
\UnaryInfC{$B, \Gamma_1 ; \Gamma_2 \overset{I}{\Longrightarrow} \Delta_1; \Delta_2, \neg A$}
\AxiomC{$\pi_2$}
\noLine
\UnaryInfC{$C, \Gamma_1 ; \Gamma_2 \overset{J}{\Longrightarrow} \Delta_1; \Delta_2, \neg A$}
\RightLabel{\small $(R \vee)$}
\BinaryInfC{$B \vee C, \Gamma_1 ; \Gamma_2 \overset{I \vee J}{\Longrightarrow} \Delta_1; \Delta_2, \neg A$}
\end{prooftree}
By the induction hypothesis, there are $\pi_1'$ and $\pi_2'$ satisfying the conditions of the lemma. Then, we take the following proof as $\pi'$:
\begin{prooftree}
\AxiomC{$\pi_1'$}
\noLine
\UnaryInfC{$B, \Gamma_1 ; \Gamma_2, A \overset{I}{\Longrightarrow} \Delta_1; \Delta_2$}
\AxiomC{$\pi_2'$}
\noLine
\UnaryInfC{$C, \Gamma_1 ; \Gamma_2, A \overset{J}{\Longrightarrow} \Delta_1; \Delta_2$}
\RightLabel{\small $(R \vee)$}
\BinaryInfC{$B \vee C, \Gamma_1 ; \Gamma_2 , A \overset{I \vee J}{\Longrightarrow} \Delta_1; \Delta_2$}
\end{prooftree}
It is easy to that the interpolant does not change and satisfies the variable condition. The case where the main formula $B \vee C$ is on the right-hand side of the Maehara partition is similar.

Suppose the last rule is a monochromatic cut rule. Again, based on the position of the cut formula there are two cases. Let the cut formula be on the right-hand side of the Maehara partition:
\begin{prooftree}
\AxiomC{$\pi_1$}
\noLine
\UnaryInfC{$\Gamma_1 ; \Gamma_2 \overset{I}{\Longrightarrow} \Delta_1; \Delta_2, B, \neg A$}
\AxiomC{$\pi_2$}
\noLine
\UnaryInfC{$\Gamma_1 ; \Gamma_2, B \overset{J}{\Longrightarrow} \Delta_1; \Delta_2, \neg A$}
\RightLabel{\small $(cut)$}
\BinaryInfC{$ \Gamma_1 ; \Gamma_2 \overset{I \wedge J}{\Longrightarrow} \Delta_1; \Delta_2, \neg A$}
\end{prooftree}
By the induction hypothesis, there are proofs $\pi_1'$ and $\pi_2'$ satisfying the required conditions. Then, we take the following proof as $\pi'$:
\begin{prooftree}
\AxiomC{$\pi_1'$}
\noLine
\UnaryInfC{$\Gamma_1 ; \Gamma_2, A \overset{I}{\Longrightarrow} \Delta_1; \Delta_2, B$}
\AxiomC{$\pi_2'$}
\noLine
\UnaryInfC{$\Gamma_1 ; \Gamma_2, B, A \overset{J}{\Longrightarrow} \Delta_1; \Delta_2$}
\RightLabel{\small $(cut)$}
\BinaryInfC{$\Gamma_1 ; \Gamma_2 , A \overset{I \wedge J}{\Longrightarrow} \Delta_1; \Delta_2$}
\end{prooftree}
The case where the cut formula is on the left-hand side of the Maehara partition is similar.

Assume the last rule used in $\pi$ is $(R \neg)$. Then, there are three cases: either $\neg A$ is the main formula or the main formula is not $\neg A$ and it is either on the left-hand side or right-hand side of the Maehara partition. First, let $\neg A$ be the main formula. Then, $\pi$ has the following form:
\begin{prooftree}
\AxiomC{$\pi_1$}
\noLine
\UnaryInfC{$\Gamma_1 ; \Gamma_2,  A \overset{I}{\Longrightarrow} \Delta_1; \Delta_2$}
\RightLabel{\small $(R \neg)$}
\UnaryInfC{$\Gamma_1 ; \Gamma_2  \overset{I}{\Longrightarrow} \Delta_1; \Delta_2, \neg A$}
\end{prooftree}
Then, we take $\pi'=\pi_1$. 

Now, let the main formula be different from $\neg A$ and be on the left-hand side of the Maehara partition. Then $\pi$ has the form:
\begin{prooftree}
\AxiomC{$\pi_1$}
\noLine
\UnaryInfC{$\Gamma_1, B ; \Gamma_2 \overset{I}{\Longrightarrow} \Delta_1; \Delta_2, \neg A$}
\RightLabel{\small $(R \neg)$}
\UnaryInfC{$\Gamma_1 ; \Gamma_2  \overset{I}{\Longrightarrow} \Delta_1, \neg B; \Delta_2, \neg A$}
\end{prooftree}
By the induction hypothesis, there exists a proof  $\pi_1'$ satisfying the required condition.
Then, we take the following as the proof $\pi'$:
\begin{prooftree}
\AxiomC{$\pi_1'$}
\noLine
\UnaryInfC{$\Gamma_1, B ; \Gamma_2, A \overset{I}{\Longrightarrow} \Delta_1; \Delta_2$}
\RightLabel{\small $(R \neg)$}
\UnaryInfC{$\Gamma_1 ; \Gamma_2 , A \overset{I}{\Longrightarrow} \Delta_1, \neg B; \Delta_2$}
\end{prooftree}
The case where the main formula is on the right-hand side of the Maehara partition is similar.

Finally, it is clear from the construction of $\pi'$ that its length is linear in the length of $\pi$.
\end{proof}
Note that in the proof of Lemma \ref{lem.neginversion} to provide $\pi'$ for item $(1)$, we used induction on the length of $\pi$ in item $(1)$ and there was no need to use other items, namely $(2)$, $(3)$, or $(4)$. In other words, each case is handled by induction on itself.

\LemPruneClauseSet*
\begin{proof}
It suffices to treat the case $L_\mathrm{D} = \{ p \}$. The result then follows by induction.
For \ref{prune1}, we proceed as follows. First, delete all the clauses in 
$\calA$ containing $\top$. Then, as the next step, take a clause $C \in \calA$. If both $p$ and $\neg p$ have occurred in $C$, then delete $C$ from $\calA$. In particular, if $\calA$ has only one clause containing both $p$ and $\neg p$ then the result is $\{\}$. Repeat this step for every clause in $\calA$ until there are no more clauses in which both $p$ and $\neg p$ occur. Now, we resolve against $p$, i.e., we define the clause set $\calA^*$ as the following set of clauses:

\vspace{5pt}
$ \begin{cases}
C \in \calA & p \notin C \; \text{and} \; \neg p \notin C \\ 
D_1 \cup D_2 \setminus \{p, \neg p\} & D_1 \cup \{p\} \in \calA \; \text{and} \; D_2 \cup \{\neg p\} \in \calA
\end{cases} $
\vspace{5pt}

\noindent Note that as a result of the previous step, $\neg p \notin D_1$ and $p \notin D_2$. In the end, neither $p$ nor $\neg p$ appears in $\calA^*$ and it is a clause set in the language $L \setminus L_{\mathrm{D}}$. Now, we want to prove $I \models \calA^*$ which means that we have to prove  $I \models C$ for any $C \in \calA^*$. There are two cases: if $C \in \calA$, then as $I \models \calA$, it means that  $I \models C$. The second case is when $C$ has the form $D_1 \cup D_2 \setminus \{p, \neg p\}$ where $D_1 \cup \{p\} \in \calA$ and $D_2 \cup \{\neg p\} \in \calA$. Therefore, $I \models D_1 \cup \{p\}$ and $I \models D_2 \cup \{\neg p\}$. Then, we claim that $I \models D_1 \cup D_2$. Because, if $I \models p$ then $I \models D_2$, and if $I \models \neg p$ then $I \models D_1$. Hence $I \models \calA^*$.

For \ref{prune2}, again it suffices to treat the case $L_{\mathrm{D}}=\{p\}$. 
Suppose $I' \models \calA^*$. On the language $L \setminus L_{\mathrm{D}}$, define $I$ as $I'$. For the clause $C \in \calA$ which contains $\top$ or contains both $p$ and $\neg p$, clearly we have $I \models C$. Now, to determine the valuation of $p$ in $I$ take a clause $D \cup \{\ell\} \in \calA$, where $\ell$ is either $p$ or $\neg p$. If $I' \not \models D$ then define $I(\ell)=\top$. Otherwise, if $I' \models D$ then move to another clause containing $p$ or $\neg p$ in $\calA$. Repeat this process until for one $D \cup \{\ell\} \in \calA$, $I' \not \models D$ and then $I(\ell)=\top$. Otherwise, if for all $D \cup \{\ell\} \in \calA$, $I' \models D$, then w.l.o.g. define $I(p)=\top$. Clearly, $I$ is an extension of $I'$ and by its construction $I \models \calA$.
\end{proof}

\ObsSubs*
\begin{proof}
For~(\ref{obs.subs.subset}) it suffices to observe that for all $B\in \calB \subseteq \calA$ we can take $A = B \in \calA$.
For~(\ref{obs.subs.trans}) let $C \in \calC$.
Then there is a $B\in\calB$ with $B\subseteq C$.
Hence, there is an $A\in \calA$ with $A\subseteq B \subseteq C$.
For~(\ref{obs.subs.compat_union}) let $B\in \calB$.
Then there is an $A\in \calA \subseteq \calA\cup\calC$
 with $A\subseteq B$.
Moreover, for $C\in\calC$ we can simply take $C\in \calC \subseteq \calA \cup\calC$.
For~(\ref{obs.subs.compat_times}) let $B\cup C \in \calB \times \calC$, then
 there is an $A\in \calA$ with $A\subseteq B$.
So $A\cup C \in \calA\times \calC$ with $A\cup C \subseteq B\cup C$.
For~(\ref{obs.subs.distand}) we make a case distinction on the form
 of a $D \in (\calA\cup\calC)\times (\calB\cup\calC)$.
If $D=A\cup B$ for $A\in \calA$, $B\in\calB$ then $D\in \calA\times\calB \subseteq (\calA\times\calB)\cup \calC$.
Otherwise there is a $C\in \calC$ with $C\subseteq D$ and we are done.
\end{proof}

\LemWReduce*
\begin{proof}
We obtain $\pi'$ from $\pi$ by shifting weakening inferences upwards until
 they are in a permitted position.
If a weakening inference is below a unary inference as in $\chi =$
\[
\infer[(Rw)]{\Mseq{\Gamma_1}{\Gamma_2, \neg B}{\Delta_1}{\Delta_2, A}}{
  \infer[(L\neg)]{\Mseq{\Gamma_1}{\Gamma_2, \neg B}{\Delta_1}{\Delta_2}}{ 
    \deduce{\Mseq{\Gamma_1}{\Gamma_2}{\Delta_1}{\Delta_2,B}}{(\chi_1)}
  }
}
\]
we replace $\chi$ by $\chi' =$
\[
\infer[(L\neg)]{\Mseq{\Gamma_1}{\Gamma_2, \neg B}{\Delta_1}{\Delta_2, A}}{
  \infer[(Rw)]{\Mseq{\Gamma_1}{\Gamma_2}{\Delta_1}{\Delta_2, B, A}}{
    \deduce{\Mseq{\Gamma_1}{\Gamma_2}{\Delta_1}{\Delta_2,B}}{(\chi_1)}
  }
}
\]
The other unary rules are treated analogously.

If a weakening inference is below a binary inference as in $\chi =$
\[
\infer[(Rw)]{\Mseq{\Gamma_1}{\Gamma_2}{\Delta_1}{\Delta_2, B\land C, A}}{
  \infer[(R\land)]{\Mseq{\Gamma_1}{\Gamma_2}{\Delta_1}{\Delta_2, B\land C}}{
    \deduce{\Mseq{\Gamma_1}{\Gamma_2}{\Delta_1}{\Delta_2, B}}{(\chi_1)}
    &
    \deduce{\Mseq{\Gamma_1}{\Gamma_2}{\Delta_1}{\Delta_2, C}}{(\chi_2)}
  }
}
\]
we replace $\chi$ by $\chi' =$
\[
\infer[(R\land)]{\Mseq{\Gamma_1}{\Gamma_2}{\Delta_1}{\Delta_2, B\land C, A}}{
  \infer[(Rw)]{\Mseq{\Gamma_1}{\Gamma_2}{\Delta_1}{\Delta_2, B, A}}{
    \deduce{\Mseq{\Gamma_1}{\Gamma_2}{\Delta_1}{\Delta_2, B}}{(\chi_1)}
  }
  &
  \infer[(Rw)]{\Mseq{\Gamma_1}{\Gamma_2}{\Delta_1}{\Delta_2, C, A}}{
    \deduce{\Mseq{\Gamma_1}{\Gamma_2}{\Delta_1}{\Delta_2, C}}{(\chi_2)}
  }
}
\]
The other binary rules are treated analogously.

We have $\calM(\chi') = \calM(\chi)$ which, by induction and
 Observation~\ref{obs.subs}/(\ref{obs.subs.compat_union}) and (\ref{obs.subs.compat_times}), entails~(\ref{lem.wreduce.ipol}): $\calM(\pi') = \calM(\pi)$.
(\ref{lem.wreduce.Rcut}) is evident from the definition of the transformation.
Let $\mu$ be the indicated occurrence of $A$ in the end-sequent of $\chi$
 and let $\mu'$ be the corresponding occurrence in the end-sequent of
 $\chi'$.
Then since both $\mu$ and $\mu'$ are weak, we have $\weight(\mu') = \weight(\mu) = 0$.
The weight of the other formula occurrences in the end-sequent does
 not change.
This shows~(\ref{lem.wreduce.focc.weight}).
Moreover, note that every axiom of $\chi'$ in which an
 ancestor of a formula occurrence $\nu'$ in the end-sequent of $\chi'$
 is active is also an axiom of $\chi$ in which an ancestor of the corresponding
 formula occurrence $\nu$ in the end-sequent of $\chi$ is active.
This shows~(\ref{lem.wreduce.focc.ax}) and, together with the observation that
 tameness condition~(\ref{def.tame.Omega}) is preserved, it shows~(\ref{lem.wreduce.tame}).
\end{proof}

\LemCEIpol*
\begin{proof}
Here we treat the cases missing in the main text.

{\bf Permutation of a unary inference over the cut:}
If $\chi$ is of the form
\[
\infer[\mathrm{cut}_c]{\Mseq{\Gamma_1}{\Gamma_2}{\Delta_1, \neg A}{\Delta_2}}{
  \infer{\Mseq{\Gamma_1}{\Gamma_2}{\Delta_1, \neg A}{\Delta_2, C}}{
    \deduce{\Mseq{\Gamma_1, A}{\Gamma_2}{\Delta_1}{\Delta_2, C_{\mu_1}}}{(\chi_1)}
  }
  &
  \deduce{\Mseq{\Gamma_1}{\Gamma_2, C_{\mu_2}}{\Delta_1, \neg A}{\Delta_2}}{(\chi_2)}
}
\]
we define $\chi'_1 =$
\[
\infer{\Mseq{\Gamma_1, A}{\Gamma_2}{\Delta_1, \neg A}{\Delta_2, C}}{
  \deduce{\Mseq{\Gamma_1, A}{\Gamma_2}{\Delta_1}{\Delta_2, C}}{(\chi_1)}
}
\]
and apply Lemma~\ref{lem.wreduce} to $\chi'_1$ to obtain an {\LKminus} proof
 $\chi^*_1$ with~(\ref{lem.wreduce.ipol}) and~(\ref{lem.wreduce.focc}).
Moreover, we define $\chi'_2 =$
\[
\infer{\Mseq{\Gamma_1, A}{\Gamma_2, C}{\Delta_1, \neg A}{\Delta_2}}{
  \deduce{\Mseq{\Gamma_1}{\Gamma_2, C}{\Delta_1, \neg A}{\Delta_2}}{(\chi_2)}
}
\]
and apply Lemma~\ref{lem.wreduce} to $\chi'_2$ to obtain an {\LKminus} proof
 $\chi^*_2$ with~(\ref{lem.wreduce.ipol}) and~(\ref{lem.wreduce.focc}).
Finally, we define $\chi^* =$
\[
\infer[\mathrm{cut}_{c'}]{\Mseq{\Gamma_1, A}{\Gamma_2}{\Delta_1, \neg A}{\Delta_2}}{
  \deduce{\Mseq{\Gamma_1, A}{\Gamma_2}{\Delta_1, \neg A}{\Delta_2, C}}{(\chi_1^*)}
  &
  \deduce{\Mseq{\Gamma_1, A}{\Gamma_2, C}{\Delta_1, \neg A}{\Delta_2}}{(\chi_2^*)}
}
\]
For~\ref{lem.ce_ipol.tamewred} we observe that $\pi[\chi^*]$ is w-reduced
 and $\chi, \chi^*$ satisfy (*), so $\pi[\chi^*]$ is tame.
\ref{lem.ce_ipol.weight} follows from $\weight(c) = \weight(\mu_1) + \weight(\mu_2) + 1$ and $\weight(c') = \weight(\mu_1) + \weight(\mu_2)$.
For~\ref{lem.ce_ipol.M} note that we have $\CNF(\calM(\chi))
 = \CNF(\calM(\chi_1)\land\calM(\chi_2)) = \CNF(\calM(\chi_1^*) \land \calM(\chi_2^*)) = \CNF(\calM(\chi^*))$.
We proceed analogously for the other unary inferences.

{\bf Reduction of contraction:} If $\chi =$
\[
\infer[\mathrm{cut}_c]{\Mseq{\Gamma_1}{\Gamma_2}{\Delta_1}{\Delta_2}}{
  \infer{\Mseq{\Gamma_1}{\Gamma_2}{\Delta_1}{\Delta_2, C}}{
    \deduce{\Mseq{\Gamma_1}{\Gamma_2}{\Delta_1}{\Delta_2, C_{\mu_{1,1}}, C_{\mu_{1,2}}}}{(\chi_1)}
  }
  &
  \deduce{\Mseq{\Gamma_1}{\Gamma_2,C_{\mu_2}}{\Delta_1}{\Delta_2}}{(\chi_2)}
}
\]
we first treat the special case
 that $\chi_1$ ends with a weakening inference as in:
\[
\infer{\Mseq{\Gamma_1}{\Gamma_2}{\Delta_1}{\Delta_2}}{
  \infer{\Mseq{\Gamma_1}{\Gamma_2}{\Delta_1}{\Delta_2, C}}{
    \infer[\mathrm{wax}]{\Mseq{\Gamma_1}{\Gamma_2}{\Delta_1}{\Delta_2,C,C}}{}
  }
  &
  \deduce{\Mseq{\Gamma_1}{\Gamma_2,C}{\Delta_1}{\Delta_2}}{(\chi_2)}
}
\]
Then at least one $\mu_{1,1}$ and $\mu_{1,2}$ is weak because $\chi_1$ starts
 with an axiom.
Moreover, at most one of them is weak since otherwise the main formula of
 the contraction would be weak.
So, more generally, if one of $\mu_{1,1},\mu_{1,2}$, say w.l.o.g.\ $\mu_{1,1}$,
 is weak we define $\chi^*_1$ from $\chi_1$ by deleting $\mu_{1,1}$, all its
 ancestors, and all weakeing inferences that introduce these ancestors.
We replace $\chi$ by $\chi^* =$
\[
\infer[\mathrm{cut}_{c'}]{\Mseq{\Gamma_1}{\Gamma_2}{\Delta_1}{\Delta_2}}{
  \deduce{\Mseq{\Gamma_1}{\Gamma_2}{\Delta_1}{\Delta_2, C}}{(\chi^*_1)}
  &
  \deduce{\Mseq{\Gamma_1}{\Gamma_2, C}{\Delta_1}{\Delta_2}}{(\chi^*_2)}
}
\]
Then we have: $\weight(c) > \weight(c')$, $\pi[\chi^*]$ is tame and
 w-reduced by (*), and $\calM(\chi^*) = \calM(\chi)$.

So we can assume that $\chi_1$ does not end with a weakening inference
 and that none of $\mu_{1,1}$ and $\mu_{1,2}$ is weak.
Then we define $\chi'_2 = $
\[
\infer{\Mseq{\Gamma_1}{\Gamma_2, C}{\Delta_1}{\Delta_2,C}}{
  \deduce{\Mseq{\Gamma_1}{\Gamma_2, C}{\Delta_1}{\Delta_2}}{(\chi_2)}
}
\]
and we apply Lemma~\ref{lem.wreduce} to $\chi'_2$ to obtain an {\LKminus} proof
 $\chi^*_2$ with~(\ref{lem.wreduce.ipol}) and~(\ref{lem.wreduce.focc}).
We define $\chi^* =$
\[
\infer[c'']{\Mseq{\Gamma_1}{\Gamma_2}{\Delta_1}{\Delta_2}}{
  \infer[c']{\Mseq{\Gamma_1}{\Gamma_2}{\Delta_1}{\Delta_2,C}}{
    \deduce{\Mseq{\Gamma_1}{\Gamma_2}{\Delta_1}{\Delta_2,C,C}}{(\chi_1)}
    &
    \deduce{\Mseq{\Gamma_1}{\Gamma_2, C}{\Delta_1}{\Delta_2, C}}{(\chi^*_2)}
  }
  & \hspace*{-20pt}
  \deduce{\Mseq{\Gamma_1}{\Gamma_2,C}{\Delta_1}{\Delta_2}}{(\chi_2)}
}
\]
For~\ref{lem.ce_ipol.tamewred} we observe that $\pi[\chi^*]$ is w-reduced.
Moreover, $\chi,\chi^*$ satisfy (*), so $\pi[\chi^*]$ is tame.
For~\ref{lem.ce_ipol.weight} we have $\weight(c) = 1 + \weight(\mu_{1,1}) + \weight(\mu_{1,2}) + \weight(\mu_2)$,
$\weight(c') = \weight(\mu_{1,1}) + \weight(\mu_2)$,
and $\weight(c'') = 1+ \weight(\mu_{1,2}) + \weight(\mu_2)$.
Then $\weight(c') < \weight(c)$.
Moreover, since $\chi_1$ does not end with a weakening inference
 and $\mu_{1,1}$ is not weak, we have $\weight(\mu_{1,1}) > 0$ and therefore
 $\weight(c'') < \weight(c)$.
For~\ref{lem.ce_ipol.M} we have  $\CNF(\calM(\chi))
= \CNF(\calM(\chi_1) \land \calM(\chi_2))
=^\text{Obs.~\ref{obs.CNF}/(\ref{obs.CNF.comm}),(\ref{obs.CNF.assoc}),(\ref{obs.CNF.andidem})}
 \CNF((\calM(\chi_1) \land \calM(\chi_2)) \land \calM(\chi_2))
 = \CNF(\calM(\chi^*))$.

{\bf Reduction of the degree:} If $\chi$ is of the form
\[
\infer{\Mseq{\Gamma_1}{\Gamma_2}{\Delta_1}{\Delta_2}}{
  \infer{\Mseq{\Gamma_1}{\Gamma_2}{\Delta_1}{\Delta_2, A\land B}}{
    \deduce{\Mseq{\Gamma_1}{\Gamma_2}{\Delta_1}{\Delta_2, A}}{(\chi_1)}
    &
    \deduce{\Mseq{\Gamma_1}{\Gamma_2}{\Delta_1}{\Delta_2,B}}{(\chi_2)}
  }
  &
  \infer{\Mseq{\Gamma_1}{\Gamma_2, A\land B}{\Delta_1}{\Delta_2}}{
    \deduce{\Mseq{\Gamma_1}{\Gamma_2, A}{\Delta_1}{\Delta_2}}{(\chi_3)}
  }
}
\]
we replace it by $\chi^* =$
\[
\infer{\Mseq{\Gamma_1}{\Gamma_2}{\Delta_1}{\Delta_2}}{
  \deduce{\Mseq{\Gamma_1}{\Gamma_2}{\Delta_1}{\Delta_2, A}}{(\chi_1)}
  &
  \deduce{\Mseq{\Gamma_1}{\Gamma_2,A}{\Delta_1}{\Delta_2}}{(\chi_3)}
}
\]
For \ref{lem.ce_ipol.tamewred} we observe that $\pi[\chi^*]$ is w-reduced
 and, by (*), also tame.
For \ref{lem.ce_ipol.weight} we observe that $\degree(c) > \degree(c')$.
For \ref{lem.ce_ipol.M} we have
$\CNF(\chi) 
= \CNF((\calM(\chi_1) \land \calM(\chi_2)) \land \calM(\chi_3))
\subs^\text{Obs.~\ref{obs.subs}/(\ref{obs.subs.subset})}  
 \CNF(\calM(\chi_1) \land \calM(\chi_3)) = \CNF(\chi^*)$.
The cases of the other connectives are analogous.

\end{proof}

\end{document}